\DeclareMathOperator*{\argmax}{arg\,max}
\DeclareMathOperator*{\dist}{dist}
\newtheorem{theorem}{Theorem}[section]
\newtheorem{proposition}[theorem]{Proposition}
\newtheorem{definition}[theorem]{Definition}
\newtheorem{lemma}[theorem]{Lemma}
\newcounter{sideremark}
\newcommand\redu{\Rightarrow\!\!\!\!\Rightarrow}
\newcommand\reduP{\stackrel{{\scriptscriptstyle \rm P~\,}}{\redu}}
\newcommand\lredu{\Leftarrow\!\!\!\!\Leftarrow}
\newcommand\lreduP{\stackrel{{\scriptscriptstyle \rm ~~~P\,}}{\lredu}}
\newcommand{\size}{\mathrm{size}}
\newcommand{\thedim}{{n}}
\newcommand{\thedimm}{{k}}
\newcommand{\stdsimp}[1]{\Delta^{#1}}
\renewcommand\:{\colon}
\newcommand{\source}{\mathsf{source} \,}
\newcommand{\target}{\mathsf{target} \,}
\newcommand{\map}{\mathsf{map}}
\newcommand{\sph}{\mathsf{sph}}
\newcommand{\comp}{\mathsf{c}}
\newcommand{\compred}{\mathsf{cr}}
\newcommand{\pt}{*}
\newcommand{\oLoop}{\overline \Omega}
\newcommand{\oI}{\overline I}
\newcommand{\oP}{\overline P}
\newcommand{\id}{\mathop\mathrm{id}\nolimits}
\newcommand{\ef}{\mathrm{ef}}
\newcommand{\Dom}{\mathsf{Dom}}
\newcommand{\oDom}{\overline{\Dom}}
\newcommand{\ooDom}{\overline{\oDom}}
\newcommand{\zero}{\mathbf{0}}
\newcommand{\sSet}{\mathsf{sSet}}
\newcommand{\heading}[1]{\vspace{1ex}\par\noindent{\bf\boldmath #1}}
\newcommand{\Esd}{\mathrm{Esd}}
\def\R{\mathbb{R}}
\def\N{\mathbb{N}}
\def\Z{\mathbb{Z}}
\def\s{\sigma}
\def\defeq{\coloneqq}
\newcommand{\pbsize}{15pt}
\newcommand{\pboffset}{.5}
\newcommand{\xycorner}[3]{\save #2="a";#1;"a"**{}?(\pboffset);"a"**\dir{-};#3;"a"**{}?(\pboffset);"a"**\dir{-}\restore}
\newcommand{\pb}{\xycorner{[]+<\pbsize,0pt>}{[]+<\pbsize,-\pbsize>}{[]+<0pt,-\pbsize>}}
\newcommand\deletion[1]{}
\begin{document}

\title{Computing simplicial representatives of homotopy group elements\thanks{The research 
leading to these results has received funding from Austrian Science Fund (FWF): M 1980.}}


\author{Marek~Filakovsk\'y \and
Peter~Franek \and
Uli~Wagner \and
Stephan~Zhechev
}

\maketitle

\begin{abstract}
A central problem of algebraic topology is to understand the \emph{homotopy groups}
$\pi_d(X)$ of a topological space $X$. For the computational version of the problem, it is
well known that there is no algorithm to decide whether the \emph{fundamental group} 
$\pi_1(X)$ of a given finite simplicial complex $X$ is trivial. On the other hand, there are 
several algorithms that, given a finite simplicial complex $X$ that is \emph{simply connected}
(i.e., with $\pi_1(X)$ trivial), compute the higher homotopy group $\pi_d(X)$ for any 
given $d\geq 2$. 

However, these algorithms come with a caveat: They compute the isomorphism type of
$\pi_d(X)$, $d\geq 2$ as an \emph{abstract} finitely generated abelian group given by 
generators and relations, but they work with very implicit representations of the elements of $\pi_d(X)$.
Converting elements of this abstract group into explicit geometric maps from the $d$-dimensional sphere $S^d$
to $X$ has been one of the main unsolved problems in the emerging field of computational homotopy theory.

Here we present an algorithm that, given a~simply connected space $X$, 
computes $\pi_d(X)$ and represents its elements as simplicial maps from a suitable triangulation 
of the $d$-sphere $S^d$ to $X$. For fixed $d$, the algorithm runs in time exponential in $\size(X)$, 
the number of simplices of $X$. Moreover, we prove that this is optimal: For every fixed $d\geq 2$, 
we construct a family of simply connected spaces $X$ such that for any simplicial map 
representing a generator of $\pi_d(X)$, the size of the triangulation of $S^d$ on which the map is defined,
is exponential in $\size(X)$.
\end{abstract}

\maketitle
\thispagestyle{empty}

\newpage

\section{Introduction}
One of the central concepts in topology are the \emph{homotopy groups} $\pi_d(X)$ of a topological space $X$. 
Similar to the \emph{homology groups} $H_d(X)$, the homotopy groups $\pi_d(X)$ provide a mathematically precise way of 
measuring the ``$d$-dimensional holes'' in $X$, but the latter are significantly more subtle and 
computationally much less tractable than the former. Understanding homotopy groups has been one of the main challenges propelling research in algebraic topology, with only partial results so far despite an enormous effort (see, e.g., \cite{Ravenel,Kochman}); the amazing complexity of the problem is illustrated by the fact that even for the $2$-dimensional sphere $S^2$, the higher homotopy groups 
$\pi_d(S^2)$ are nontrivial for infinitely many $d$ and \emph{known} only for a few dozen values of $d$.

For computational purposes, we consider spaces that have a combinatorial description as \emph{simplicial sets} (or, alternatively,
finite simplicial complexes) and maps between them as \emph{simplicial maps}. 

 A fundamental computational result about homotopy groups is negative: There is no algorithm to decide whether the \emph{fundamental group} $\pi_1(X)$ of a finite simplicial complex $X$ is trivial, i.e., whether every continuous map from the circle $S^1$ to $X$ can be continuously contracted to a point; this holds even if $X$ is restricted to be $2$-dimensional.\footnote{This follows via a standard reduction from a result of Adjan\cite{Adyan:AlgorithmicUnsolvabilityRecognitionPropertiesGroups-1955} and Rabin \cite{Rabin:RecursiveUnsolvabilityGroupTheoreticProblems-1958} on the algorithmic unsolvability of the triviality problem of a group given in terms of generators and relations; we refer to the survey \cite{Soare:ComputabilityDifferentialGeometry-2004} for further background.}

On the other hand, given a~space $X$ that is  \emph{simply connected} (i.e., path connected and with $\pi_1(X)$ trivial) 
there are algorithms that compute the higher homotopy group $\pi_d(X)$, for every given~$d \ge 2$.
The first such algorithm was given by Brown \cite{Brown}, and newer ones have been obtained 
as a part of general computational frameworks in algebraic topology; in particular, an algorithm based on the methods of Sergeraert et al.~\cite{Sergeraert:ComputabilityProblemAlgebraicTopology-1994,RubioSergeraert:ConstructiveAlgebraicTopology-2002}
was described by Real \cite{Real96}. 

More recently, \v{C}adek et al.~\cite{polypost} proved that, for any fixed $d$, the homotopy group $\pi_d(X)$ of a given $1$-connected finite 
simplicial set can be computed in polynomial time. On the negative side, computing $\pi_d(X)$ is \#P-hard if $d$ is part of the input \cite{Anick-homotopyhard,ext-hard} 
(and, moreover, W[1]-hard with respect to the parameter $d$ \cite{Mat-homotopyW1}), even if $X$ is restricted to be $4$-dimensional.
These results form part of a general effort to understand the {computational complexity} of topological questions concerning the classification of maps 
up to {homotopy} \cite{post,surv,ext-hard,VokriFil-homotopic} and related questions, such as the \emph{embeddability problem} for simplicial complexes (a higher-dimensional analogue of graph planarity) \cite{MatousekTancerWagner:HardnessEmbeddings-2011,Matousek:Embeddability-in-the-3-sphere-is-decidable-2014,aslep}.

\subsection{Our Results: Representing Homotopy Classes by Explicit Maps}
By definition, elements of $\pi_d(X)$ are equivalence classes of continuous maps from the $d$-dimensional sphere $S^d$ to $X$, with maps being considered equivalent 
(or lying in the same \emph{homotopy class}) if they are \emph{homotopic}, i.e., if they can be continuously deformed into one another (see Section~\ref{s:prelim} for more details).

The algorithms of \cite{Brown} or \cite{polypost} mentioned above compute $\pi_d(X)$ as an abstract abelian group, 
in terms of generators and relations.\footnote{That is, they compute integers $r,q_1,\ldots,q_k$ such that $\pi_d(X)$ is isomorphic to $\Z^r\oplus \Z_{q_1}\oplus \ldots \oplus \Z_{q_k}$.}
However, they work with very implicit representations of the elements of $\pi_d(X)$.

The main result of this paper is an algorithm that, given an element $\alpha$ of $\pi_d(X)$, computes a suitable triangulation $\Sigma^d$ of the sphere $S^d$ and an explicit simplicial map $\Sigma^d \to X$ representing the given homotopy class $\alpha$.

Apart from the intrinsic importance of homotopy groups, we see this as a first step towards the more general goal of computing explicit maps with specific topological properties; instances of this goal include computing explicit representatives of homotopy classes of maps between more general spaces $X$ and $Y$ 
(a problem raised in \cite{post}) as well as \emph{computing an explicit embedding} of a given simplicial complex into $\R^d$ (as opposed to \emph{deciding embeddability}). Moreover, these questions are also closely related to \emph{quantitative} questions in homotopy theory ~\cite{gromov_quantitative} and in the theory of embeddings \cite{Freedman:Geometric-complexity-of-embeddings-in-Rd-2014}. See Section~\ref{s:related work} for a more detailed discussion of these questions. 

Throughout this paper, we assume that the input $X$ is \emph{simply connected}, i.e., that it is connected and has trivial fundamental group 
$\pi_1(X)$. For the purpose of the exposition, we will assume that $X$ is given as a~$1$-reduced simplicial set, encoded as a~list of
its nondegenerate simplices and boundary operators given via finite tables.
We remark that the class of $1$-reduced simplicial sets contains 
standard models of $1$-connected topological spaces, such as spheres or complex projective spaces.
A~more general version of the theorem that also includes simply connected simplicial complexes is discussed in Section~\ref{s:proof_1}.

{\def\thetheorem{A}
\addtocounter{theorem}{-1} 
\begin{theorem}
\label{t:main} 
There exists an algorithm that, given $d\geq 2$ and a finite $1$-reduced simplicial set $X$, 
computes the generators $g_1,\ldots, g_k$ of $\pi_d(X)$ as simplicial maps $\Sigma_j^d\to X$, 
for suitable triangulations $\Sigma_j^d$ of $S^d$, $j=1,\ldots,k$.

For fixed $d$, the time complexity is exponential in the \emph{size} (number of simplices) of $X$; 
more precisely, it is $O(2^{P(\size(X))})$ where $P=P_d$ is a polynomial depending only on $d$.
\end{theorem}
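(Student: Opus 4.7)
\vspace{1ex}
\noindent\textbf{Proof plan.}
The strategy is to combine the polynomial-time algorithms that compute $\pi_d(X)$ as an abstract group (notably \cite{polypost}) with a geometric realization step that converts algebraically presented generators into explicit simplicial maps from triangulated $d$-spheres. The plan is to first run the algorithm of \cite{polypost} on $X$ to produce, in polynomial time, a Postnikov tower $\pt = P_1 \leftarrow P_2 \leftarrow \cdots \leftarrow P_d$ together with a map $X \to P_d$ inducing an isomorphism on $\pi_i$ for $i \leq d$, and to extract an abstract decomposition $\pi_d(X) \cong \Z^r \oplus \Z_{q_1} \oplus \cdots \oplus \Z_{q_k}$ with ``implicit'' generators, encoded either as cocycles in an effective chain complex or as markers inside the Postnikov stages.

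The central geometric step uses Kan's simplicial loop group functor $G$: for a $1$-reduced simplicial set $Y$, $GY$ is a simplicial group with $\pi_n(Y) \cong \pi_{n-1}(GY)$, and the adjunction with the classifying space functor $\overline{W}$ yields a weak equivalence $\overline{W}GY \simeq Y$. Iterating $d$ times identifies $\pi_d(X)$ with $\pi_0(G^d X)$, so each homotopy class is represented by a concrete $0$-simplex of $G^d X$. Such a $0$-simplex is a word in the generators of $G^d X$, whose letters are in turn words in the $1$-simplices of $G^{d-1}X$, and so on down to simplices of $X$ itself. I would then unfold this nested word structure through $d$ applications of the $G \dashv \overline{W}$ adjunction to produce a simplicial map $\Sigma^d \to X$, where $\Sigma^d$ is a combinatorial $d$-sphere whose cell structure is dictated by the nested word.

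For the complexity bound, one tracks the size of the $0$-simplex word at each stage. Individual applications of $G$ or $\overline{W}$ cost polynomially in fixed dimensions, but matching an algebraically specified generator to an \emph{explicit} $0$-simplex of $G^d X$ may require a word of length exponential in $\size(X)$, because the effective realization at each Postnikov stage (trivializing obstruction cocycles by chosen bounding cochains) can introduce supports of exponentially large size. After $d$ (fixed) iterations of this unfolding, the resulting $\Sigma^d$ has at most $2^{P(\size(X))}$ simplices for some polynomial $P = P_d$, as claimed; a final polynomial-overhead subdivision step ensures $\Sigma^d$ is a genuine triangulation of $S^d$ rather than merely a simplicial set with the correct homotopy type.

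The principal obstacle is the bridging step: realizing the abstract generators produced by \cite{polypost} as \emph{explicit} $0$-simplices of $G^d X$. Effective-homology/Postnikov algorithms present generators as algebraic data (cocycles modulo coboundaries in the effective chain complex of $X$) rather than as geometric simplices of an iterated loop group, so one must design a constructive procedure that performs the homological bookkeeping at each Postnikov stage, chooses explicit cycle representatives, and then threads them through the $d$-fold adjunction while preserving the correspondence with the chosen generators of $\pi_d(X)$. Careful control of the word-length at each stage is what ultimately yields the exponential — rather than superexponential — upper bound matching the lower bound announced in the abstract.
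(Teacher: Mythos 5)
There is a genuine gap --- in fact two. First, your route to a sphere map via the $d$-fold iterated loop group $G^dX$ and the $G \dashv \overline{W}$ adjunction does not work as stated: Kan's theorem $\pi_{n-1}(GY)\cong\pi_n(Y)$ requires $Y$ to be reduced (a single vertex), and while $GX$ is reduced when $X$ is $1$-reduced, $G^2X$ already has a free group's worth of $0$-simplices, so the iteration breaks down for $d\geq 3$. More importantly, even granting a $0$-simplex of some iterated loop-group model, ``unfolding the nested word structure'' into a simplicial map from a triangulated $S^d$ into $X$ is precisely the hard geometric content of the theorem, not a consequence of the adjunction: the adjunction produces simplicial maps $Y\to\overline{W}GY$, not a mechanism for converting group words into sphere triangulations. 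The paper does this with a \emph{single} application of $G$: Kan's combinatorial Hurewicz theorem identifies $\pi_d(F)$ with $H_{d-1}$ of the Moore complex of $GF$, and Berger's explicit path/loop-space model $\overline{\Omega}F$, the homomorphism $t\colon GF\to\overline{\Omega}F$, and the universal-preimage construction $\mathsf{Dom}(\gamma)$ are what actually turn a spherical loop-group element into a map from a combinatorial $d$-sphere (this is also where the exponential blow-up occurs, as you correctly guess, from exponents written in binary).

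Second, the bridging step you flag as ``the principal obstacle'' --- passing from algebraically presented generators to explicit spherical elements of the loop group --- is not an obstacle to be noted but the other half of the proof, and your choice of the Postnikov tower points the wrong way for it: the algorithm of \cite{polypost} produces maps $X\to P_d$ into auxiliary spaces, whereas to land in $X$ one needs the \emph{Whitehead} tower $F_d\to X$, whose stage $F_d$ is $(d-1)$-connected, has polynomial-time homology, and maps into $X$ inducing an isomorphism on $\pi_d$. On $F_d$ one computes homology generators and then effectively inverts the Hurewicz map; this inversion (Berger's algorithm) requires explicit contraction operators $c_j$ on $GF_d$ in all degrees below $d-1$, built from commutator decompositions, boundary certificates extracted from the effective-homology reductions, and a polynomial-time contraction of loops in $F_d$ --- none of which is supplied by your sketch. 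So while you have correctly identified the cast of characters (loop groups, effective homology, exponentially large coefficients), the two constructions that constitute the proof are both missing.
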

}
Any element of $\pi_d(X)$ can be expressed as a~sum of~generators, and expressing the sum of two explicit maps 
from spheres into $X$ as another explicit map is a~simple operation. Hence, the algorithm in Theorem~\ref{t:main} 
can convert \emph{any} element of $\pi_d(X)$ into an~explicit simplicial map. 

Theorem~\ref{t:main}  also has the following \emph{quantitative} consequence: Fix some standard triangulation $\Sigma$ of the sphere
$S^d$, e.g., as the boundary of a $d+1$-simplex. By the classical \emph{Simplicial Approximation Theorem}~\cite[2.C]{Hatcher},
for any continuous map $f\colon S^d \to X$, there is a subdivision $\Sigma'$ of $\Sigma$ and a simplicial map $f'\colon \Sigma'\to X$ that
is homotopic to $f$. Theorem~\ref{t:main} implies that if $f$ represents a generator of $\pi_d(X)$, then the size of $\Sigma'$ can be bounded
by an exponential function of the number of simplices of $X$.



Furthermore, we can show that the exponential dependence on the number of simplices in $X$ is inevitable:

{\def\thetheorem{B}
\addtocounter{theorem}{-1} 
\begin{theorem}
\label{t:optimality} 
Let $d\geq 2$ be fixed. Then there is an infinite family of $d$-dimensional $0$-reduced $1$-connected simplicial sets $X$ such that
for any simplicial map $\Sigma\to X$ representing a generator of $\pi_d(X)$, the triangulation $\Sigma$ of  $S^d$ on which $f$ is defined
has size at least $2^{\Omega(\size(X))}$. If $d\geq 3$, we may even assume that $X$ are $1$-reduced.

Consequently, any algorithm for computing simplicial representatives of the generators of $\pi_d(X)$ for $1$-reduced 
simplicial set $X$ has time complexity at least $2^{\Omega(\size(X))}$. 
\end{theorem}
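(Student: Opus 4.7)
My plan is to construct, for each $n$ (and for the fixed $d$), an explicit $d$-dimensional simplicial set $X_n$ of size $\Theta(n)$ whose top homology is generated by an integer $d$-cycle $z_n$ of $L^1$-norm $2^{\Omega(n)}$. The mechanism yielding the lower bound is purely chain-level: since $X_n$ is $d$-dimensional, $C_{d+1}(X_n) = 0$, so $H_d(X_n) = Z_d(X_n)$ and every cycle representing the generator of $\pi_d(X_n) \cong H_d(X_n)$ is literally equal to $\pm z_n$ in the chain group. For any simplicial map $f\colon \Sigma \to X_n$ from a triangulation $\Sigma$ of $S^d$ representing this generator, the pushforward chain $f_*[\Sigma] = \sum_\tau \epsilon(\tau)\,[f(\tau)]$ (summed over nondegenerate $d$-simplices $\tau$ of $\Sigma$, with sign $\epsilon(\tau)$ from orientation) must therefore equal $\pm z_n$; comparing coefficients on each nondegenerate $d$-simplex of $X_n$ yields
\[
\size(\Sigma) \ \geq\ \|z_n\|_1 \ =\ 2^{\Omega(n)} \ =\ 2^{\Omega(\size(X_n))}.
\]

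For the construction, let $X_n$ have one vertex $\pt$, nondegenerate $(d-1)$-simplices $e_1,\ldots,e_{n-1}$ with all faces taken to be iterated degeneracies of $\pt$, and nondegenerate $d$-simplices $\sigma_1,\ldots,\sigma_n$ whose face maps realize
\[
\partial \sigma_1 = 2e_1, \qquad \partial\sigma_j = 2e_j - e_{j-1} \ \ (2\leq j\leq n-1), \qquad \partial\sigma_n = -e_{n-1}.
\]
Concretely, set $d_0\sigma_1 = d_2\sigma_1 = e_1$ with $d_1\sigma_1$ and $d_k\sigma_1$ ($k\geq 3$) degenerate; for $2\leq j\leq n-1$ set $d_0\sigma_j = d_2\sigma_j = e_j$, $d_1\sigma_j = e_{j-1}$ with all other faces degenerate; and for $\sigma_n$ set $d_1\sigma_n = e_{n-1}$ with all other faces degenerate. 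A direct inspection shows that all simplicial identities $d_i d_j = d_{j-1} d_i$ on the $\sigma_j$ hold, since every iterated face in the relevant dimensions is the unique basepoint-degenerate simplex of that dimension. Solving $\partial_d z = 0$ yields (up to sign) $z_n = (1, 2, 4, \ldots, 2^{n-1})$, so $\|z_n\|_1 = 2^n - 1$ and the coefficients are coprime; in particular $z_n$ is primitive and generates $H_d(X_n)\cong\Z$.

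For $d\geq 3$, $X_n$ is $(d-2)$-reduced (hence $1$-reduced), so $\pi_1(X_n) = 0$; moreover the absence of nondegenerate cells in dimensions $1,\ldots,d-2$ makes $X_n$ at least $(d-2)$-connected, and the surjectivity of $\partial_d$ (verifiable by back-substitution from $\sigma_n$ to $\sigma_1$) gives $H_{d-1}(X_n) = 0$. Combined with Hurewicz these show $X_n$ is $(d-1)$-connected with $\pi_d(X_n)\cong H_d(X_n) \cong\Z\cdot[z_n]$. For $d=2$, $X_n$ is only $0$-reduced, and $\pi_1 = 0$ must be verified directly: the simplicial relations $d_0\sigma_j \cdot d_2\sigma_j = d_1\sigma_j$ in $\pi_1$ read $e_1^2 = 1$, $e_j^2 = e_{j-1}$ for $2\leq j\leq n-1$, and $e_{n-1} = 1$; propagating the last relation through the telescope of squares forces every $e_i = 1$ in $\pi_1(X_n)$, as required.

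The main obstacle is the explicit bookkeeping with degenerate faces needed to confirm the simplicial identities on $X_n$ and to verify $\pi_1 = 0$ for the $d = 2$ case; once these combinatorial checks are in place, the rest—identification of $z_n$, the bound on $\|z_n\|_1$, and the chain-level comparison—is entirely formal. The stated algorithmic lower bound is then an immediate corollary, since any algorithm outputting a simplicial representative of a generator of $\pi_d(X_n)$ must write down a triangulation of $S^d$ whose size is at least $2^{\Omega(\size(X_n))}$.
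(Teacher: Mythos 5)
Your proposal is correct and follows essentially the same route as the paper: a telescope of degree-$2$ attaching maps built from spherical $(d-1)$-simplices, yielding $H_d=Z_d\cong\Z$ with a primitive generating cycle of $L^1$-norm $2^{\Omega(n)}$, followed by the Hurewicz/chain-level comparison $\size(\Sigma)\geq\|z_n\|_1$ (this is exactly the content of the paper's Lemma~\ref{l:big_homology}(2) and the deduction of Theorem~\ref{t:optimality} from it, including the direct $\pi_1$-computation for $d=2$). The only differences are cosmetic relabelings of the boundary relations.
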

}
In the boundary case of $1$-reduced simplicial sets for $d=2$, we don't know whether the lower complexity bound is sub-exponential or not.
However, we can show that the algorithm from Theorem~\ref{t:main} is optimal in that case as well, see a~discussion 
in Section~\ref{s:proof_2}, page~\pageref{page:optimality}.

In Section~\ref{s:proof_1} and \ref{s:proof_2}, we state and prove generalizations of Theorem~\ref{t:main} and \ref{t:optimality} 
denoted as Theorem~\ref{t:main_a} and~\ref{t:optimality_a}. 
They remove the $1$-reduceness assumption and replace it by a~more flexible certificate of simply connectedness, allowing the input space
$X$ to be a~more flexible simplicial set or simplicial complex.

\subsection{Related and Future Work}
\label{s:related work}
\heading{Computational homotopy theory and applications.} 
This paper falls into the broader area of \emph{computational topology}, which has been a rapidly developing area (see, for instance, the textbooks \cite{EdelsbrunnerHarer:ComputationalTopology-2010,Zomorodian:TopologyComputing-2005,Matveev:AlgorithmicTopology-2007});
more specifically, as mentioned above, this work forms part of a general effort to understand the computational complexity 
of problems in \emph{homotopy theory}, both because of the intrinsic importance of these problems in topology and because of 
applications in other areas, e.g., to algorithmic questions regarding embeddability of simplicial complexes~\cite{MatousekTancerWagner:HardnessEmbeddings-2011,aslep}, to questions in topological combinatorics (see, e.g., 
\cite{MabillardWagner:Elim_II_SoCG-2016}), or to the robust satisfiability of equations \cite{nondec}.

A central theme in topology is to understand the set $[X,Y]$ of all homotopy classes of maps from a space $X$ to a space $Y$.
In many cases of interest, this set carries additional structure, e.g., an abelian group structure, as in the 
case $\pi_d(X)=[S^d,X]$ 
of higher homotopy groups 
that are the focus of the present paper.

Homotopy-theoretic questions have been at the heart of the development of algebraic topology since the 1940's. 
In the 1990s, three independent groups of researchers proposed general frameworks to make various more advanced methods of algebraic topology (such as spectral sequences) \emph{effective} (algorithmic): Sch\"on \cite{Schoen-effectivetop}, Smith
\cite{smith-mstructures}, and Sergeraert, Rubio, Dousson, Romero, and coworkers (e.g., \cite{Sergeraert:ComputabilityProblemAlgebraicTopology-1994,RubioSergeraert:ConstructiveAlgebraicTopology-2002,RomeroRubioSergeraert,SergRub-homtypes}; also see \cite{SergerGenova} for an exposition).
These frameworks yielded general \emph{computability} results for homotopy-theoretic questions (including
new algorithms for the computation of higher homotopy groups \cite{Real96}),
and in the case of Sergeraert et al., also a \emph{practical implementation} in form of the Kenzo software package~\cite{fKenzo}.

Building on the framework of \emph{objects with effective homology} by Sergeraert et al., in recent 
years a variety of new results in computational homotopy theory were obtained \cite{post,pKZ1,polypost,ext-hard,vokrinek:oddspheres,VokriFil-homotopic,aslep,SergRomEffHmtp,Romero2016}, including, in some cases,
the first \emph{polynomial-time algorithms}, by using a refined framework of \emph{objects with polynomial-time homology} 
\cite{pKZ1,polypost} that allows for a computational complexity analysis.
For an introduction to this area from a theoretical computer science perspective and an overview of some of these results, see, e.g.,
\cite{surv} and the references therein. 

\heading{Explicit maps.} As mentioned above, the above algorithms often work with rather \emph{implicit} representations of 
the homotopy classes in $\pi_d(X)$ (or, more generally, in $[X,Y]$) but does not yields explicit maps representing these 
homotopy classes.

For instance, the algorithm in \cite{Real96} computes $\pi_d(X)$ as the \emph{homology group} $H_d(F)$ of an auxiliary space 
$F=F_d(X)$ constructed from $X$ in such a way that $\pi_d(X)$ and $H_d(F)$ are isomorphic as groups.\footnote{Similarly, the 
algorithm in \cite{polypost} constructs an auxiliary chain complex $C$ such that $\pi_d(X)$ is isomorphic to the homology group 
$H_{d+1}(C)$ and computes the latter.}
 
More recently, Romero and Sergeraert~\cite{Romero2016} devised an algorithm that, given a $1$-reduced 
(and hence simply connected) simplicial set $X$ and $d\geq 2$, computes the homotopy group $\pi_d(X)$ as the homotopy group
$\pi_d(K)$ of an auxiliary simplicial set $K$ (a so-called \emph{Kan completion} of $X$) with $\pi_d(X)\cong \pi_d(K)$. Moreover, given 
an element of this group, the algorithm can compute an explicit simplicial map $\Sigma^d \to K$ from a suitable triangulation of $S^d$ to $K$ representing the given homotopy class. In this way, homotopy classes are represented by explicit maps, but as maps to the auxiliary space $K$, 
which is homotopy equivalent to but not homeomorphic to the given space $X$.

By contrast, our general goal is to is represent homotopy classes by maps into the given space; in the present paper, 
we treat, as an important first instance, the case $\pi_d(X)=[S^d,X]$. 

\heading{Open Problems and Future Work.} Our next goal is to extend the results here to the setting of \cite{post},
i.e., to represent, more generally, homotopy classes in $[X,Y]$ by explicit simplicial maps from some suitable 
subdivision $X'$ to $Y$ (under suitable assumptions that allow us to compute $[X,Y]$).\footnote{Similarly as before, the
algorithm in \cite{post} computes $[X,Y]$ as the set $[X,P]$ for some auxiliary space $P$ (a stage of a \emph{Postnikov system} for $Y$)
and represents the elements of $[X,Y]\cong [X,P]$ as maps from $X$ to $P$, but not as maps to $Y$.}

In a subsequent step, we hope to generalize this further to the \emph{equivariant} setting $[X,Y]_G$ of \cite{aslep}, in which a finite group $G$ of 
symmetries acts on the spaces $X,Y$ and all maps and homotopies are required to be \emph{equivariant}, i.e., to preserve the symmetries.

As mentioned above, one motivation is the problem of algorithmically constructing embeddings of simplicial complexes into $\R^d$.
Indeed, in a suitable range of dimensions ($d\geq \frac{3(k+1)}{2}$), 
the existence of an embedding of a finite $k$-dimensional simplicial complex $K$ into $\R^d$ is equivalent to the existence of an 
$\Z_2$-equivariant map from an auxiliary complex $\tilde{K}$ (the deleted product) into the sphere $S^{d-1}$, by a classical theorem 
of Haefliger and Weber \cite{Haefliger:PlongementsDifferentiablesDomaineStable-1962,Weber67}. The proof of the Haefliger-Weber Theorem is, in principle, constructive, but in order to turn this construction into
an algorithm to compute an embedding, one needs an explicit equivariant map into the sphere $S^{d-1}$.

\heading{Quantitative homotopy theory.}
Another motivation for representing homotopy classes by simplicial maps and complexity bounds for such algorithms 
is the connection to \emph{quantitative questions} in homotopy theory~\cite{gromov_quantitative,weinberger_quantitative}
and in the theory of embeddings \cite{Freedman:Geometric-complexity-of-embeddings-in-Rd-2014}. Given a suitable measure 
of \emph{complexity} for the maps in question, typical questions are: What is the relation between the complexity of a given 
null-homotopic map $f: X\to Y$ and the minimum complexity of a nullhomotopy witnessing this? What is the minimum complexity
of an embedding of a simplicial complex $K$ into $\R^d$? In quantitative homotopy theory, complexity is often quantified by
assuming that the spaces are metric spaces and by considering Lipschitz constants (which are  closely related to the sizes of the simplicial representatives of maps and homotopies~\cite{weinberger_quantitative}). For embeddings, the connection is even more direct: a typical measure
is the smallest number of simplices in a subdivision $K'$ or $K$ such that there exists a simplexwise linear-embedding $K' \hookrightarrow \R^d$.

\subsection{Structure of the paper.} 
The remainder of the paper is structured as follows: In Section~\ref{s:outline}, we give a~high-level description of the main ingredients
of the algorithm from Theorem~\ref{t:main}. 
In Section~\ref{s:prelim}, we review a number of necessary technical definitions regarding simplicial sets and the frameworks of effective and polynomial-time homology, in particular Kan's simplicial version of loop spaces and polynomial-time loop contractions for infinite simplicial sets.
In Section~\ref{s:proof_1}, we formally describe the algorithm from Theorem~\ref{t:main} and give a~high level proof based on a~number of lemmas
which are proved in in subsequent chapters. 
Section~\ref{s:proof_2} contains the proof of Theorem~\ref{t:optimality}.
The rest of the paper contains several 
technical parts needed for the proof of Theorem~\ref{t:main}: in Section~\ref{s:berger}, 
we describe Berger's effective Hurewicz inverse and analyze its running time (Theorem~\ref{t:eff_hur}), 
in Section~\ref{a:loop_contraction}, we prove that the stages of the Whitehead tower have polynomial-time contractible loops (Lemma~\ref{l:contr_loop}).
Finally, in Section~\ref{sec:map-into-complex}, 
we show how to reduce the case when the input is a~simplicial complex $X^{sc}$ to the case of an associated 
simplicial set $X$ and convert a~map $\Sigma\to X$ into a~map from a~subdivision $Sd(\Sigma)$ into $X^{sc}$
(Lemma~\ref{l:lift_to_X}).

\section{Outline of the Algorithm}
\label{s:outline}
In this section we present a high-level description of the main steps and ingredients involved in the algorithm from Theorem~\ref{t:main}. 

\heading{The algorithm in a nutshell.}
\begin{enumerate}
%
\item In the simplest case when the space ${X}$ is $(d-1)$-connected (i.e., $\pi_i({X})=0$ for all $i\leq d-1$.),
the classical Hurewicz Theorem ~\cite[Sec. 4.2]{Hatcher} yields an isomorphism $\pi_d(X)\cong H_d(X)$ between the $d$th homotopy group and the $d$th homology group of $X$. Computing generators of the homology group is known to be a computationally easy task 
(it amounts to solving a linear system of equations over the integers). The key is then converting the homology generators into 
the corresponding homotopy generators, i.e., to compute an inverse of the Hurewicz isomorphism. This was described 
in the work of Berger~\cite{Berger_thesis,Berger_paper}. 
We analyze the complexity of Berger's algorithm
in detail and show that it runs in exponential time in the size of $X$ (assuming that the dimension $d$ is fixed).
\item For the general case, we construct an auxiliary simplicial set $F_d$ together with a~simplicial map $\psi_d: F_d\to {X}$ that has the following properties:
\begin{itemize}
\item $F_d$ is a simplicial set that is $d-1$ connected, and
\item $\psi_d\: F_d \to {X}$ induces an isomorphism $\psi_{d*} \: \pi_d(F_d) \to \pi_d({X})$.
\end{itemize}
Our construction of $F_d$ is based on computing stages of the Whitehead tower of ${X}$~\cite[p. 356]{Hatcher};
this is similar to Real's algorithm, which computes $\pi_d(X)$ as $H_d(F_d)$ as an abstract abelian group.

The overall strategy is to use Berger's algorithm on the space $F_d$ and compute generators of $\pi_d(F_d)$ as simplicial maps.
Then we use the simplicial map $\psi_d$ to convert each generator of $\pi_d(F_d) $ into a map $\Sigma^d\to{X}$, and these maps
generate $\pi_d({X})$. The main technical task for this step is to show that Berger's algorithm can be applied to $F_d$. For this, we need to
construct a~polynomial algorithm for explicit contractions of loops in $F_d$ (this space is $1$-connected but not $1$-reduced in
general).

\end{enumerate}

\heading{Our contributions.}
The main ingredients of the algorithm outlined above are the computability of stages of the Whitehead tower~\cite{Real96} as simplicial sets with polynomial-time homology and Berger's algorithmization of the inverse Hurewicz isomorphism~\cite{Berger_thesis,Berger_paper}. 

The idea that these two tools can be combined to compute explicit representatives of $\pi_d(X)$ is rather natural and is also mentioned, for the special case of $1$-reduced simplicial sets, in~\cite[p. 3]{Romero2016}; however, there are a number of technical challenges to overcome in order to carry out this program (as remarked in ~\cite[p. 3]{Romero2016}: ``Clemens Berger's algorithm, quite complex, has never been implemented, severely limiting the current scope of this approach, same comment with respect to the theoretical complexity of such an algorithm.'').
On a technical level, our main contributions are as follows:
\begin{itemize}
\item We give a~complexity analysis of Berger's algorithm to compute the inverse of the Hurewicz isomorphism (Theorem~\ref{t:eff_hur}).
\item We show that the homology generators of the Whitehead stage $F_d$ can be computed in polynomial time (Lemma~\ref{l:Fn}).
\item Berger's algorithm requires an explicit algorithm for loop contraction---a certificate of $1$-connectedness of the space $F_d$. 
While $F_d$ is not $1$-reduced in general, we describe an explicit algorithm for contracting its loop and show that Berger's algorithm
can be applied.
\end{itemize}
We remark that the Whitehead tower stages are simplicial sets with infinitely many simplices, and we need the machinery of objects with polynomial-time homology to carry out the last two steps.

\section{Definitions and Preliminaries}
\label{s:prelim}
In this section, we give the necessary technical definitions that will be used throughout this paper. In the first part, we recall the standard definitions for simplicial sets and the toolbox of effective homology. 

Afterwards, we present Kan's definiton of a loop space and further formalize our definition of (polynomial-time) loop contractions.

\subsection{Simplicial Sets and Polynomial-Time Effective Homology}
\heading{Simplicial sets and their computer representation.}
A simplicial set $X$ is a graded set $X$ indexed by the non-negative integers together with a collection of mappings 
$d_i \colon X_{\thedim} \to X_{\thedim-1}$ and $s_i \colon  X_\thedim \to X_{\thedim +1}, \, 0\leq i\leq \thedim$ called the \emph{face} and \emph{degeneracy} operators. 
They satisfy the following identities:
\[
\begin{array}{ll}
d_i d_j = d_{j-1}d_i  &\quad\text{for  } i<j, \\  
d_i s_i = d_{i +1} s_i = \id & \quad\text{for } 0\leq i< n, \\
d_i s_j = s_j d_{i-1}  &\quad\text{for  } i>j+1, \\
d_i s_j = s_{j-1} d_i  &\quad\text{for  }i<j,\\
s_i s_j = s_{j+1}s_i &\quad\text{for  } i\leq j.\\
\end{array}
\]
More details on simplicial sets and the motivation behind these formulas can be found in~\cite{may,goerssjardine}.

Simplicial maps between simplicial sets are maps of  graded sets which commute with the face and degeneracy operators.
The elements of $X_\thedim$ are called $\thedim$-\emph{simplices}. 
We say that a simplex $x \in X_\thedim$ is \emph{(non-)degenerate} if it can(not) be expressed as $x = s_i y$ for some $y \in X_{\thedim-1}$. 
If a simplicial set $X$ is also a~graded (Abelian) group and face and degeneracy operators are group homomorphisms, we say that $X$ is a simplicial (Abelian) group.

A~simplicial set is called $k$-reduced for $k\geq 0$, if it has a~single $i$-simplex for each $i\leq k$.

For a simplicial set $X$, we define the chain complex  $C_* (X)$ to be a~free Abelian group enerated by the elements of $X_n$ with differential 
$$\partial(c)  = \sum_{i = 0} ^n (-1)^i d_i (c).$$ 

A~simplicial set is \emph{locally effective}, if its simplices have a~specified finite encoding and algorithms are given that compute the face and degeneracy operators.
A~simplicial map $f$ between locally effective simplicial sets $X$ and $Y$ is \emph{locally effective}, if an algorithm is given that for the encoding of any given 
$x\in X$ computes the encoding of $f(x)\in Y$.

We define a~simplicial set to be \emph{finite} if it has finitely many non-degenerate simplices.
Such simplicial set can be algorithmically represented in the following way. The encoding of non-degenerate simplices
can be given via a~finite list and the encoding of a~degenerate simplex $s_{i_k}\ldots s_{i_1} y$ for $i_1<i_2<\ldots <i_k$ and a non-degenerate $y$ 
can be assumed to be a~pair consisting of the sequence $(i_1,\ldots, i_k)$ and the encoding of $y$. 
The face operators are fully described by their action on non-degenerate simplices and can be given via finite tables.
In this way, any simplicial set with finitely many non-degenerate simplices is naturally locally effective. 
Any choice of an~implementation of the encoding and face operators is called a~\emph{representation} of the simplicial set.
The \emph{size of a~representation} is the overall memory space one needs to store the data which represent the simplicial set.

\heading{Geometric realization.} To each simplicial set $X$ we assign a~topological space $|X|$ called its geometric realization. The construction
is similar to that of simplicial complexes. Let $\Delta_j$ be the geometric realization of a~standard $j$-simplex for each $j\geq 0$. 
For each $k$, we define $D_i: \Delta_{k-1}\hookrightarrow \Delta_k$ to be the inclusion of a $(k-1)$-simplex into the $i$'th face of
a~$k$-simplex and $S_i: \Delta_k\to \Delta_{k-1}$ be the geometric realization of a~simplicial map that sends the vertices 
$(0,1,\ldots,k)$ of $\Delta_k$ to the vertices $(0,1,\ldots,i,i,i+1,\ldots,k-1)$. 
The geometric realization $|X|$ is then defined to be a~disjoint union of all simplices $X$ factored by the relation $\sim$
$$
|X|\defeq (\bigsqcup_{n=0}^\infty X_n\times \Delta_n)/\sim
$$
where $\sim$ 
is the equivalence relation generated by the relations $(x,D_i(p))\sim (d_i(x),p)$ for $x \in  X_{n+1}$, 
$p \in \Delta_n$ and the relations $(x, S_i(p))\sim (s_i(x), p)$ for $x \in  X_{n-1}$, $p\in\Delta_n$.

Similarly, a~simplicial map between simplicial complexes naturally induces a~continuous map between their geometric realizations.

\heading{Simplicial complexes and simplicial sets.}
In any simplicial complex $X^{sc}$, we can choose an ordering of vertices and define a~simplicial sets $X^{ss}$ that consists
of all non-decrasing sequences of points in $X^{sc}$: the dimension of $(V_0,\ldots, V_d)$ equals $d$. 
The face operator is $d_i$ omits the $i$'th coordinate and the degeneracy $s_j$ doubles the $j$'th coordinate.
Moreover, choosing a~maximal tree $T$ in the $1$-skeleton of $X$ enables us to construct a~simplicial set 
${X}:=X^{ss}/T$ in which all vertices and edges in the tree, as well as their degeneracies, are considered to
be a~base-point (or its degeneracies). The geometric realizations of $X^{sc}$ and ${X}$ are homotopy equivalent
and ${X}$ is $0$-reduced, i.e. it has one vertex only.

\heading{Homotopy groups.}
Let $(X,x_0)$ be a pointed topological space. 
The $k$-th homotopy group $\pi_k(X,x_0)$ of $(X,x_0)$ is defined as the set of pointed homotopy\footnote{A homotopy 
$F: {S}^k \times I \rightarrow X$ is pointed if $F(*,t) = x_0$ for all $t \in I$.}
classes of pointed continuous maps $({S}^k , *) \rightarrow (X,x_0)$, where $* \in {S}^k$ is a distinguished point. 
In particular, the $0$-th homotopy group has one element for each path connected component of $X$. For $k=1$, $\pi_1(X,x_0)$ 
is the fundamental group of $X$, once we endow it with the group operation that concatenates loops starting and ending in $x_0$.
The group operation on $\pi_k(X,x_0)$ for $k>1$ assigns to $[f],[g]$ the homotopy class of the 
composition $S^k\stackrel{\pi}{\to} S^k\vee S^k \stackrel{f\vee g}{\to} X$ where $\pi$ factors an~equatorial $(k-1)$-sphere containing $x_0$ 
into a point. Homotopy groups $\pi_k$ are commutative for $k>1$.

If the choice of base-points is understood from the context or unimportant, we will use the shorter notation $\pi_k(X)$.
For a~simplicial set $X$, we will use the notation $\pi_k(X)$ for the $k$'th homotopy group of its geometric realization $|X|$.

An important tool for computing homotopy groups is the \emph{Hurewicz theorem}. It says that whenever $X$ is $(d-1)$-connected,
then there is an isomorphism $\pi_d(X)\to H_d(X)$. Moreover, if the element of $\pi_d(X)$ is represented by a~simplicial map
$f: \Sigma^d\to X$ and $\sum_j k_j \sigma_j$ represents a homology generator of $H_d(\Sigma^d)$, 
then the Hurewicz isomorphism maps $[f]$ to the homology class of the formal sum $\sum_{j} k_j f(\sigma_j)$
of $d$-simplices in $X$.

\heading{Effective homology.} 
We call a chain complex $C_*$ \emph{locally effective} if the elements $c\in C_*$ have finite (agreed upon) encoding and there are algorithms computing 
the addition, zero, inverse and differential for the elements of $C_*$. 

A locally effective chain complex $C_*$ is called \emph{effective} if there is an algorithm that for given $n \in \mathbb{N}$ generates a finite basis $c_\alpha \in C_n$ and an algorithm that for every $c\in C_*$ outputs the unique decomposition of $c$ into a linear combination of $c_\alpha$'s. 

Let $C_*$ and $D_*$ be chain complexes. A~\emph{reduction} $C_*\redu D_*$ is a triple $(f,g,h)$ of maps such that 
$f: C_*\to D_*$ and $g: D_*\to C_*$ are chain homomorphisms, $h: C_*\to C_*$ has degree $1$, $fg=\mathrm{id}$ and $fg-\mathrm{id}=h\partial + \partial h$,
and further $hh=hg=fh=0$.

A locally effective chain complex $C_*$ has \emph{effective homology} ($C_*$ is a \emph{chain complex with effective homology}) if there is a~locally effective chain complex
$\tilde{C}_*$, reductions $C_* \lredu \tilde{C}_* \redu C_* ^\ef $ where $C_* ^\ef $ is an effective chain complex, and all the reduction maps are computable.

\heading{Eilenberg-MacLane spaces.} Let $d\geq 1$ and $\pi$ be an Abelian group. An Eilenberg-MacLane space $K(\pi, d)$ is a~topological space with the properties 
$\pi_d(K(\pi,d))\simeq \pi$ and $\pi_j(K(\pi,d))=0$ for $0<j\neq d$. It can be shown that such space $K(\pi,d)$ exists and, under certain natural restrictions,
has a unique homotopy type. If $\pi$ is finitely generated, then $K(\pi,d)$ has a~locally effective simplicial model~\cite{pKZ1}.
\heading{Globally polynomial-time homology and related notions.} In many auxiliary steps of the algorithm, we will construct various spaces and maps.
To analyse the overall time complexity, we need to parametrize all these objects by the very initial input, which is in our case an encoding
of a~finite $1$-reduced simplicial set (or, in Theorem~\ref{t:main_a}, a~more general space endowed with certain explicit 
certificate of $1$-connectedness).

More generally, let $\mathcal{I}$ be a parameter set so that for each
$I\in\mathcal{I}$ an integer $\size(I)$ is defined.
We say that $F$ is a parametrized simplicial set (group, chain group, \ldots), if for each $I\in\mathcal{I}$, a locally effective simplicial set 
(group, chain group, \ldots) $F(I)$ is given. The simplicial set $F$ is \emph{locally polynomial-time}, if there exists a locally effective
model of $F(I)$ such that for each $k\in \N$ and an encoding of a $k$-simplex $x\in F(I)$, 
the encoding of $d_i(x)$ and $s_j(x)$ can be computed in time polynomial in $\size(\text{enc}(x))+\size(I)$. The polynomial,
however, may depend on $k$.
A polynomial-time map between parametrized simplicial sets $F$ and $G$ is an algorithm that for each $k\in\N$, $I\in \mathcal{I}$ and 
an encoding of an $k$-simplex $x$ in $F(I)$ computes the encoding of $f(x)$ in time polynomial in $\size(\text{enc}(x))+\size(I)$: again,
the polynomial may depend on $k$.

Similarly, a~locally polynomial-time (parametrized) chain complex is an assignment of a~computer representation $C_*(I)$ of a~chain complex
with a distinguished basis in each gradation, such that all these basis elements have some agreed-upon encoding.
A~chain $\sum_j k_j \sigma_j$ is assumed to be represented as a list of pairs
$(k_j, \text{enc}(\sigma_j))_j$ and has size $\sum_j (\size(k_j)+\size(\text{enc}(\sigma_j)))$, where we assume that the size of an~integer $k_j$ is 
its bit-size. Further, an algorithm is given that computes the differential of a chain $z\in C_k(I)$ in time polynomial in $\size(z)+\size(I)$, the
polynomial depending on $k$. The notion of a~polynomial-time chain map is straight-forward.

A \emph{globally polynomial-time chain complex} is a locally polynomial-time chain complex $EC$ that in addition has all chain groups $EC(I)_k$
finitely generated and an~additional algorithm is given that for each $k$ computes the encoding of the generators of $EC(I)_k$ in time polynomial in
$\size(I)$. Finally, we define a~\emph{simplicial set with globally polynomial-time homology} to be a~locally polynomial-time parametrized simplicial set $F$
together with reductions $C_*(F)\lredu \tilde{C} \redu EC$ where $\tilde{C},EC$ are locally polynomial-time chain complexes, $EC$ is a~globally polynomial-time chain complex
and the reduction data are all polynomial-time maps, as usual the polynomials depending on the grading $k$.

The name ``polynomial-time homology'' is motivated by the following:
\begin{lemma}
\label{l:hom_gen}
Let $F$ be a~parametrized simplicial set with polynomial-time homology and $k\geq 0$ be fixed. Then all generators of $H_k(F(I))$ can be computed
in time polynomial in $\size(I)$.
\end{lemma}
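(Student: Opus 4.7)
The plan is to compute the generators of $H_k(EC(I))$ directly in polynomial time by exploiting that $EC$ is a globally polynomial-time chain complex, and then to transfer them back to $C_*(F(I))$ via the reduction data.

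First, by definition of globally polynomial-time chain complex, for each of $j \in \{k-1,k,k+1\}$ I can compute encodings of a finite basis $b^j_1,\ldots,b^j_{m_j}$ of $EC(I)_j$ in time polynomial in $\size(I)$; in particular, $m_j$ and the total size of these encodings are polynomially bounded. Next, I apply the differential to each basis element $b^j_i$ for $j=k,k+1$. Since $EC$ is locally polynomial-time and every chain in $EC(I)_{j-1}$ is stored as a list of integer coefficients against the distinguished basis, each evaluation $\partial b^j_i$ returns a polynomial-size vector of integers in time polynomial in $\size(I)$. Collecting these yields the integer matrices representing $\partial_k$ and $\partial_{k+1}$, each of polynomial size.

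Having the matrices, I compute generators of $H_k(EC(I)) = \ker\partial_k / \im\partial_{k+1}$ by the polynomial-time Smith normal form algorithm (e.g.\ Kannan--Bachem or Storjohann) applied to $\partial_k$ and $\partial_{k+1}$. The output is an explicit list of cycles $z_1,\ldots,z_r \in \ker\partial_k \subseteq EC(I)_k$ whose classes generate $H_k(EC(I))$ as a finitely generated abelian group; each $z_i$ is a polynomial-size integer combination of the basis $b^k_1,\ldots,b^k_{m_k}$.

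Finally, I transfer these cycles across the reductions $C_*(F(I)) \lredu \tilde{C}(I) \redu EC(I)$. Unpacking the arrow conventions, the chain maps going from $EC$ to $C_*(F)$ are polynomial-time, and any reduction triple induces an isomorphism on homology, so the composition sends cycles representing generators of $H_k(EC(I))$ to cycles representing generators of $H_k(C_*(F(I))) = H_k(F(I))$. Evaluating this composition on each $z_i$ takes time polynomial in $\size(z_i) + \size(I)$, hence polynomial in $\size(I)$, and produces the required explicit homology generators. The only nontrivial external ingredient is the polynomial bound on Smith normal form over $\mathbb{Z}$, which I would expect to cite rather than reprove; the rest is bookkeeping that follows mechanically from the polynomial-time hypotheses.
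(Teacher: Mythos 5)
Your proposal is correct and follows essentially the same route as the paper's proof: compute the matrices of $\partial_k$ and $\partial_{k+1}$ for the globally polynomial-time complex by evaluating the differential on the distinguished bases, apply a polynomial-time Smith normal form algorithm (Kannan--Bachem), and push the resulting generating cycles back to $C_*(F(I))$ via the composition of reduction maps (the paper's $fg'$). No gaps.
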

\begin{proof}
For the globally polynomial-time chain complex $EF$ and each fixed $j$, we can compute the matrix of the differentials $d_j : EF(I)_j \to  EF(I)_{j-1}$ 
with respect to the distinguished bases in time polynomial in $\size(I)$: we just evaluate $d_k$ on each element of the distinguished basis of $EF(I)_k$. 
Then the homology generators of $H_k(EC)$ can be computed using a~Smith normal form algorithm applied to the matrices of $d_k$ and $d_{k+1}$, 
as is explained in standard textbooks (such as~\cite{Munkres}). 
Polynomial-time algorithms for the Smith normal form are nontrivial but known~\cite{KannanBachem}.

Let $x_1,\ldots,x_m$ be the cycles generating $H_k(EF(I))$. We assume that reductions 
$$
C_*(F)\stackrel{(f,g,h)}{\lredu} \tilde{F}\stackrel{(f',g',h')}{\redu} EF
$$ 
are given and all the reduction maps are polynomial. 
Thus we can compute the chains 
$$fg'(x_1),fg'(x_2),\ldots, fg'(x_m)$$ in polynomial time and it is a matter of elementary computation to verify that
they constitute a set of homology generators for $H_k(F(I))$.
\end{proof}

\subsection{Loop Spaces and Polynomial-Time Loop Contraction}

\heading{Principal bundles and loop group complexes.}
In the text we will frequently deal with principal twisted Cartesian products: these are simplicial analogues of principal fiber bundles. 
The definitions in this section come from Kan's article~\cite{Kan:1957}.

We first define the Cartesian product $X \times Y$ of simplicial sets $X,Y$:
The set of $n$-simplices $(X \times Y)_n$ consists of tuples $(x,y)$, where $x \in X_n, x\in Y_n$. 
The face and degeneracy operators on  $X \times Y$ are given by $d_i (x,y) =  (d_i x, d_i y)$, $s_i (x,y) =  (s_i x, s_i y)$.

\begin{definition}[Principal Twisted Cartesian product]\label{d:twistprod}
Let $B$ be a simplicial set with a basepoint $b_0\in B_0$ and $G$ be a~simplicial group. We call a graded map (of degree -1) $\tau\: B_{n+1} \to G_{n}, n \geq 0$ a \emph{twisting operator} if the following conditions are satisfied:
\begin{itemize}
\item $d_{n}\tau(\beta)=\tau(d_{n+1}b)^{-1}\tau(d_n b)$
\item $d_i\tau(\beta)=\tau(d_{i}b )$
for $0\leq i < {n}$
\item $s_i\tau(b)=\tau(s_{i} b)$,  $i < \thedim$, and
\item $\tau(s_\thedim b)=1_\thedim$ for all $b \in B_\thedim$
where $1_\thedim$ is the unit element of $G_\thedim$.
\end{itemize}
Let $B$, $G$, $\tau$ be as above. We will define a~\emph{twisted Cartesian product} $B\times_\tau G$ to be a~simplicial set
$E$ with $E_\thedim=B_\thedim\times G_\thedim$, and the face and degeneracy
operators are also as in the Cartesian product, i.e. $d_i(b,g) = (d_i b, d_i g)$ ,
with the sole exception of $d_\thedim$, which is given by
\[
d_\thedim(b,g)\defeq(d_\thedim b, \tau(b) d_\thedim (g)),
\ \ \ \ (b,g)\in B_\thedim\times G_\thedim.
\]
\end{definition}
It is not trivial to see why this should be the right way of
representing fiber bundles simplicially, but for us, it is only important
that it works, and we will have explicit formulas available for the
twisting operator for all the specific applications.

We remark that in the literature one can find multiple definitions of twisted operator and twisted product~\cite{may,Kan:1957,Berger_thesis}
and that they, in essence differ from each other based on the decision whether the twisting ``compresses'' the first two or the last two face operators. Here, we follow the same notation as in \cite{Berger_thesis}.

\begin{definition} 
\label{d:G-constr}
Let $X$ be a~$0$-reduced simplicial set. 
Then we define $GX$ to be a~(non-commutative) simplicial group such that 
\begin{itemize}
\item 
$GX_n$ has a generator $\overline{\sigma}$ for each $(n+1)$-simplex $\sigma\in X$ and a relation $\overline{s_{n} y}=1$ for each simplex in the image of the last degeneracy $s_{n}$.
\item The face operators are given by $d_i \overline{\sigma}\defeq \overline{d_i \sigma}$ for $i<n$ and $d_n \overline{\sigma}\defeq (\overline{d_{n+1}\sigma})^{-1} \overline{d_n \sigma}$
\item The degeneracy operators are $s_i \overline{\sigma}\defeq \overline{s_i \sigma}$.
\end{itemize}
\end{definition}
We use the multiplicative notation, with $1$ being the neutral element. It is shown in~\cite{Kan:1957} that $GX$ is a discrete simplicial analog of the
loop space of $X$.

For algorithmic puroposes, we assume that an elements $\prod_{j} \overline{\sigma}_j^{k_j}$ of $GX$ is represented as a~list of pairs 
$(\sigma_j, k_j)$ and has size $\sum_j \size(\sigma_j)+\size(k_j)$.

\begin{definition}
\label{d:loop_c}
Let $X$ be a $0$-reduced simplicial set.
We say that a map $c_0: GX_0\to GX_1$ is a~\emph{contraction} of loops in $X$, if $d_0 c_0(x)=x$ and $d_1 c_0(x)=1$ for each $x\in GX_0$.

In case where $X$ has finitely many nondegenerate $1$-simplices, we define the size $\size(c_0)$ to be the sum 
$$
\sum_{\gamma\in X_1} \size(c_0(\gamma)).
$$
\end{definition}

\heading{Loop contraction for simplicial complexes.}
Let $X^{sc}$ be a~simplicial complex. 
Let $T$ be a~spanning tree in the $1$-skeleton of $X^{sc}$ and $R$ a~chosen vertex. 
For each oriented edge $e=(v_1 v_2)$ we define a~formal inverse to be $e^{-1}:=(v_2 v_1)$ and we also consider degenerate edges $(v,v)$.
A~\emph{loop} is defined as a~sequence $e_1,\ldots, e_k$ of oriented edges in $X^{sc}$ such that
\begin{itemize}
\item The end vertex of $e_i$ equals the initial vertex of $e_{i+1}$, and
\item The initial vertex of $e_1$ and the end vertex of $e_k$ equal $R$.
\end{itemize}
Every edge $e$ that is not contained in $T$ gives rise to a~unique loop $l_e$.
Further, every loop in $X^{sc}$ is either a~concatenation of such $l_e$'s, or can be derived from such concatenation
by inserting and deleting consecutive pairs $(e,e^{-1})$ and degenerate edges.  
Before we formally define our combinatorial version of loop contraction, we need the following
definition.

\begin{definition}
Let $S$ be a~set,  $U\subseteq S$, $F(S)$ and $F(U)$ be free groups generated by $S$, $U$, respectively.\footnote{Formally, elements
of $F(S)$ are sequences of symbols $s^{\epsilon}$ for $\epsilon\in \{1,-1\}$ and $s\in S$ with the relation $s^1 s^{-1}=1$, where $1$ represents the
empty sequence. The group operation is concatenation.}
Let $h_U: F(S)\to F(S)$ be a~homomorphism that sends each $u\in U$ to $1$ and each $s\in S\setminus U$ to itself.
We say that an element $x$ of $F(S)$ equals $y$ modulo $U$, if $h_U(x)=y$.
\end{definition}
An example of an element that is trivial modulo $U$ is the word $s \, u \, s^{-1}$, where $s\in S$ and $u\in U$.

\begin{definition}
\label{d:contraction_sc}
Let $S$ be the set of all oriented edges and oriented degenerate edges in $X^{sc}$ and assume that a~spanning tree $T$ is chosen. 
Let $U$ be the set of all oriented edges in $T$, including all degenerate edges.
A~\emph{contraction of an edge~$\alpha$} is a~sequence of vertices $A_0,A_1,\ldots,A_s$ and $B_1,\ldots, B_{s}$ such that
\begin{itemize}
\item for each $i$, $\{A_i,A_{i+1}, B_{i+1}\}$ is a~simplex of $X^{sc}$, and
\item the element of $F(S)$
\begin{equation}
\label{e:long_loop}
(A_0 B_1) (B_1 A_1) (A_1 B_2) (B_2 A_2) \ldots (B_s A_s) (A_s A_{s-1}) (A_{s-1} A_{s-2}) \ldots (A_1 A_0)
\end{equation}
equals $\alpha$ modulo $U$.

A~\emph{loop contraction in a~simplicial complex} is the choice of a~contraction of $\alpha$ for each edge $\alpha\in X^{sc}\setminus T$.
\end{itemize}
The size of the contraction of $\alpha$ is defined to be the number of vertices in (\ref{e:long_loop}) and the 
size $\size(c)$ of the loop contraction on $X^{sc}$ is the sum of the sizes over all $\alpha\in X^{sc}\setminus T$.
\end{definition}
\begin{figure}[!htb]
\begin{center}
\includegraphics{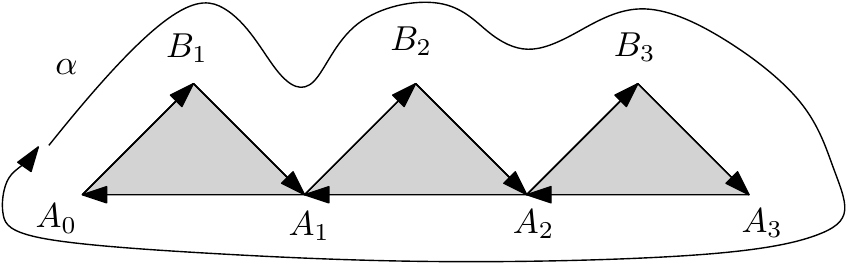}
\end{center}
\caption{The loop ranging over the boundary of this geometric shape equals $\alpha$, after ignoring edges in the maximal tree
and canceling pairs $(e,e^{-1})$. The interior of the triangles gives rise to a~contraction.} 
\label{fig:contraction_sc}
\end{figure}
The geometry behind this definition is displayed in Figure~\ref{fig:contraction_sc}. The sequence of $A_i$'s and $B_j$'s
gives rise to a~map from the sequence of (full) triangles into $X^{sc}$. The big loop around the boundary is combinatorially
described by (\ref{e:long_loop}). We can continuously contract all of its parts that are in the tree $T$ to a~chosen 
basepoint, as the tree is contractible. Further, we can continuously contract all pairs of edges $(e,e^{-1})$ and 
what remains is the original edge $\alpha$: with all the tree contracted to a~point, it will be transformed into a loop
that geometrically corresponds to $l_\alpha$. 
The interior of the full triangles then constitutes its ``filler'', hence a~certificate of the contractibility of $l_\alpha$.

A~loop contraction in the sense of Definition~\ref{fig:contraction_sc} exists iff the space $X^{sc}$ is simply connected.
One could choose different notions of loop contraction. For instance, we could provide, for each $\alpha$,
a~simplicial map from a~triangulated $2$-disc into $X^{sc}$ such that the oriented boundary of the disc would be mapped exactly to $l_\alpha$. 
The description from Definition~\ref{d:contraction_sc} could easily be converted into such map. 
We chose the current definition because of its canonical and algebraic nature.
The connection between Definitions~\ref{d:loop_c} and~\ref{d:contraction_sc} is the content of the following lemma.
\begin{lemma}
\label{l:contr2contr}
Let $X^{sc}$ be a~$1$-connected simplicial complex with a~chosen orientation of all simplices, 
$X^{ss}$ the induced simplicial set, $T$ a~maximal tree in $X^{sc}$, and ${X}:=X^{ss}/T$ 
the corresponding $0$-reduced simplicial set.
Assume that a~loop contraction in the simplicial complex $X^{sc}$ is given, such as described
in Definition~\ref{d:contraction_sc}. Then we can algorithmically compute $c_0(\alpha)\in G{X}_1$ such that $d_0 c_0(\alpha)=\alpha$ 
and $d_1 c_0(\alpha)=1$, for every generator $\alpha$ of $G{X}_0$. 
Moreover, the computation of $c_0(\alpha)$
is linear in the size of $X^{sc}$ and the size of the simplicial complex contraction data.
\end{lemma}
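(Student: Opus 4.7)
Let $\alpha$ be an oriented edge of $X^{sc}$ not in $T$. The loop contraction assembles a strip of triangles $\tau_i = \{A_i, A_{i+1}, B_{i+1}\}$, $i = 0, \ldots, s-1$, whose total boundary word~(\ref{e:long_loop}) equals $\alpha$ modulo tree edges. I build $c_0(\alpha) \in G{X}_1$ as an explicit product of per-triangle building blocks, arranged so that the $d_1$-face vanishes and the $d_0$-face is (the image of) the loop~(\ref{e:long_loop}) in $G{X}_0$, which then reduces to $\overline\alpha$ after tree edges become trivial in the quotient ${X} = X^{ss}/T$.

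\textbf{Building block.} Each triangle $\tau_i$ is first lifted, via the chosen vertex order of $X^{sc}$, to a $2$-simplex $\tilde\tau_i$ of $X^{ss}$ (and thence of ${X}$); degenerate cases, arising when two of $A_i, A_{i+1}, B_{i+1}$ coincide, are obtained from the corresponding $1$-simplex by $s_0$ or $s_1$. For a non-degenerate $\tilde\tau = (V_0, V_1, V_2)$ of ${X}$ I set
\[
W(\tilde\tau) \defeq \overline{\tilde\tau} \cdot \bigl(s_0(d_1\overline{\tilde\tau})\bigr)^{-1} \in G{X}_1.
\]
A direct computation from Definition~\ref{d:G-constr}, using the simplicial identities $d_0 s_0 = d_1 s_0 = \id$, gives $d_1 W(\tilde\tau) = 1$ and $d_0 W(\tilde\tau) = \overline{(V_1V_2)}\,\overline{(V_0V_2)}^{-1}\,\overline{(V_0V_1)}$, which is precisely the algebraic boundary loop of $\tilde\tau$ expressed in $G{X}_0$.

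\textbf{Assembly, descending to $G{X}_0$, and main obstacle.} I put
\[
c_0(\alpha) \defeq \prod_{i=0}^{s-1} P_i \cdot W(\tilde\tau_i)^{\varepsilon_i} \cdot P_i^{-1},
\]
where $P_i$ is a path-translation conjugator built from degenerate $1$-simplices $s_0\overline{e}$ for edges $e$ along the zigzag from $A_0$ to the basepoint of $\tilde\tau_i$, and $\varepsilon_i \in \{\pm 1\}$ encodes the orientation of $\tilde\tau_i$ relative to its canonical lift. The identity $d_1 c_0(\alpha) = 1$ follows immediately from $d_1 s_0 = \id$ and $d_1 W(\tilde\tau_i) = 1$. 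The map $d_0$ sends $c_0(\alpha)$ to the conjugated product of the triangle boundary loops, which a short induction on $s$ shows equals the canonicalized form of~(\ref{e:long_loop}) in the free group $G{X}_0$; the hypothesis of Definition~\ref{d:contraction_sc} together with the fact that tree edges are degenerate (hence trivial) in ${X}$ then yields $d_0 c_0(\alpha) = \overline\alpha$. The main delicate point is size control: a naive construction has $|P_i| = O(i)$ and would give quadratic total length, but consecutive $P_i^{-1} P_{i+1}$ telescope (essentially to the single zigzag edge connecting $\tilde\tau_i$ and $\tilde\tau_{i+1}$), so after this reorganization each triangle contributes only $O(1)$ symbols to $c_0(\alpha)$, and the algorithm runs in time linear in the size of $X^{sc}$ and of the contraction data.
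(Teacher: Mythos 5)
Your construction is correct, but it takes a different route from the paper's proof, so let me compare. You build, for each triangle, a filler $W(\tilde\tau)=\overline{\tilde\tau}\cdot\bigl(s_0(d_1\overline{\tilde\tau})\bigr)^{-1}$ with $d_1W=1$ and $d_0W$ equal to the full boundary loop of the triangle (note this requires reading $d_1\overline{\tilde\tau}$ as the face operator of $G{X}$ applied to $\overline{\tilde\tau}$, i.e.\ $\overline{(V_0V_1)}^{-1}\overline{(V_0V_2)}$, which your stated face computations confirm), and then assemble $c_0(\alpha)$ as a van Kampen--style product of conjugates $P_iW(\tilde\tau_i)^{\varepsilon_i}P_i^{-1}$ by degenerate path words; this forces you to (a) carry out orientation bookkeeping in the choice of $\varepsilon_i$ and of the short ``rotation'' part of $P_i$ so that the conjugated boundaries multiply out to the word~(\ref{e:long_loop}) in $G{X}_0$, and (b) telescope consecutive conjugators to salvage the linear size bound. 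The paper instead splits each triangle's boundary between the two faces: it defines $g_i\in G{X}_1$ with $d_0g_i\simeq\overline{(A_i,A_{i+1})}$ (the bottom edge) and $d_1g_i\simeq\overline{(A_iB_{i+1})}\,\overline{(B_{i+1}A_{i+1})}$ (the top zigzag), via an explicit six-case list according to the ordering of $\{A_i,A_{i+1},B_{i+1}\}$ in $X^{ss}$, and then sets $g\defeq g_0\cdots g_s$ and $c_0(\overline\alpha)\defeq s_0d_1(g)\,g^{-1}$; the relation $d_1g\,(d_0g)^{-1}=\overline\alpha$ follows directly from~(\ref{e:long_loop}), no conjugators appear, and linearity of the output is automatic. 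So your approach is more modular (a generic ``product of conjugated relators'' argument, with the only extra idea being the telescoping), while the paper's buys a simpler final formula and linear size for free, at the cost of making the case analysis explicit. The places you gloss over --- the precise $P_i$, $\varepsilon_i$ per orientation (the identification $\overline{(U,V)}\simeq\overline{(V,U)}^{-1}$ and the cyclic rotation to the correct base vertex), the degenerate-triangle cases, and the ``short induction on $s$'' --- are exactly where the paper's six formulas do the work; they are routine but constitute most of the actual content of the proof, so a complete write-up would need to spell them out.
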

\begin{proof}
For each $i$, the triangle $\{A_i,A_{i+1},B_{i+1}\}$ from Def.~\ref{d:contraction_sc} is in the simplicial complex $X^{sc}$.
There is a~unique oriented $2$-simplex in $X^{ss}$ of the form $(V_0,V_1,V_2)$ (possibly degenerate) such that 
$\{V_0,V_1,V_2\}=\{A_i,A_{i+1},B_{i+1}\}$. Let as denote such oriented simplex by $\sigma_i$, and its image in $G{X}_1$ 
by $\overline{\sigma}_i$.
We will define an element $g_i\in G{X}_1$ such that it satisfies 
\begin{equation}
\label{e:properties_of_gi}
d_0 g_i\simeq \overline{(A_i, A_{i+1})} \quad\text{ and }\quad d_1 g_i\simeq \overline{(A_i, B_{i+1})}\,\, \overline{(B_{i+1},A_{i+1})}
\end{equation}
where $\simeq$ is an equivalence relation that identifies any element $\overline{(U,V)}\in G{X}_1$ with $\overline{(V,U)}^{-1}$ 
(note that only one of the symbols $(U,V)$ and $(V,U)$ is well defined in $X^{ss}$, resp. ${X}$.)
Explicitly, we can define $g_i$ with these properties as follows:
\begin{itemize}
\item If $\sigma=(B_{i+1}, A_i, A_{i+1})$, then $g_i:=\overline{\sigma}_i$,
\item If $\sigma=(A_i, A_{i+1}, B_{i+1})$, then $g_i:=s_0\overline{(d_2 {\sigma})} \, \overline{\sigma}_i \,s_0 d_0(\overline{\sigma}_i)^{-1}$
\item If $\sigma=(A_{i+1}, B_{i+1}, A_{i})$, then $g_i= s_0 d_0\overline{\sigma_i}^{-1} \,\overline{\sigma_i} \, s_0 (\overline{d_1 {\sigma}_i})^{-1}$ 
\item If $\sigma=(B_{i+1}, A_{i+1}, A_i)$, then $g_i:=\overline{\sigma_i}^{-1}$
\item If $\sigma=(A_{i+1}, A_i, B_{i+1})$, then $g_i:= s_0 d_0\overline{\sigma_i} \,\, \overline{\sigma_i}^{-1}\, s_0 (\overline{d_2 \sigma_i})^{-1}$
\item If $\sigma=(A_{i}, B_{i+1}, A_{i+1})$, then $g_i:=s_0(\overline{d_1 \sigma_i})\,\overline{\sigma_i}^{-1} s_0 d_0\overline{\sigma_i}$.
\end{itemize}
Let $g:=g_0\ldots, g_s$. The assumption (\ref{e:long_loop}) together with equation (\ref{e:properties_of_gi}) immediately implies that 
$d_1 g (d_0 g)^{-1}=\overline{\alpha}$. Thus we define $c_0(\overline{\alpha}):=s_0 d_1(g)\,g^{-1}$.
Algorithmically, to construct $g$ amounts to going over all the triples $(A_{i}, A_{i+1},B_{i+1})$ from a~given sequence of $A_i's$ and $B_j$'s,
checking the orientation and computing $g_i$ for every $i$.
\end{proof}

\heading{Polynomial-time loop contraction.} Let $F$ be a~parametrized simplicial set such that each $F(I)$ is $0$-reduced.
Using constructions analogous to those defined above, $GF$ is a~parametrized locally-polynomial simplicial group whereas
we assume a~simple encoding of elements of $GF_i$ as follows. If $x=\prod_j \overline{\sigma_j}^{k_j}\in GF(I)_k$ where
$\sigma_j$ are $(k+1)$-simplices in $F(I)$, not in the image of $s_k$, then we assume that $x$ is stored in the memory as a list of pairs
$(k_j,\text{enc}(\sigma_j))$ and has size $\sum_j (\size(k_j)+\size(\sigma_j))$ where some $\sigma_i$ may be equal to $\sigma_j$ for $i\neq j$.
Face and degeneracy operators are defined in Definition (\ref{d:G-constr}) and it is easy to see that for any locally polynomial-time simplicial set $F$,
$GF$ is a locally polynomial-time simplicial group.
\begin{definition}
\label{d:poly-loop}
Let $F$ be a~locally polynomial simplicial set. We say that $F$ has \emph{polynomially contractible loops}, if 
there exists an algorithm that for a $0$-simplex $x\in GF(I)$ computes a $1$-simplex $c_0(x)\in GF(I)$ such that $d_0 x=x$, $d_1 x=1 \in GF(I)_0$, 
and the running-time is polynomial in $\size(x)+\size(I)$.
\end{definition}

\section{Proof of Theorem 1}
\label{s:proof_1}
We will prove a~stronger statement of Theorem~\ref{t:main} formulated as follows.
{\def\thetheorem{A.1}
\addtocounter{theorem}{-1} 
\begin{theorem}
\label{t:main_a} 
There exists an algorithm that, given $d\geq 2$ and a finite $0$-reduced simplicial set $X$ 
(alternatively, a~finite simplicial complex) with an explicit loop contraction
$c_0$ (such as in Definition~\ref{d:loop_c} or~\ref{d:contraction_sc}) 
computes the generators $g_1,\ldots, g_k$ of $\pi_d(X)$ as simplicial maps $\Sigma_j^d\to X$, 
for suitable triangulations $\Sigma_j^d$ of $S^d$, $j=1,\ldots,k$.

For fixed $d$, the time complexity is exponential in the \emph{size} of $X$ and the size of the loop contraction $c_0$; 
more precisely, it is $O(2^{P(\size(X)+\size(c_0))})$ where $P=P_d$ is a polynomial depending only on $d$.
\end{theorem}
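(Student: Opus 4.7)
The plan is to follow the high-level algorithmic strategy outlined in Section~\ref{s:outline}, invoking the technical lemmas foreshadowed there. First, I would reduce to the $0$-reduced simplicial set case: if the input is a finite simplicial complex $X^{sc}$ with a loop contraction as in Definition~\ref{d:contraction_sc}, I would apply Lemma~\ref{l:contr2contr} to convert this data into a $0$-reduced simplicial set ${X}$ together with an explicit loop contraction $c_0\: G{X}_0 \to G{X}_1$ in the sense of Definition~\ref{d:loop_c}, of size linear in the input. After producing simplicial maps $\Sigma_j \to {X}$ for the generators of $\pi_d({X}) \cong \pi_d(X^{sc})$, I would use Lemma~\ref{l:lift_to_X} to convert each such map into a simplicial map $\mathrm{Sd}(\Sigma_j) \to X^{sc}$ representing the same homotopy class, with only a polynomial blowup in size. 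It therefore suffices to handle the $0$-reduced simplicial set case.

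Next, I would construct an auxiliary space $F_d$ together with a simplicial map $\psi_d\: F_d \to X$ by assembling a $d$-stage Whitehead tower of $X$ as a sequence of principal twisted Cartesian products of the form $B \times_\tau G$ (Definition~\ref{d:twistprod}), where the fibers are Eilenberg--MacLane spaces $K(\pi_i(X), i-1)$ chosen to kill $\pi_i(X)$ for $i < d$. By construction, $F_d$ is $(d-1)$-connected and $\psi_{d*}\: \pi_d(F_d) \xrightarrow{\cong} \pi_d(X)$. Although $F_d$ has infinitely many simplices, it fits into the framework of parametrized simplicial sets with globally polynomial-time homology; Lemma~\ref{l:Fn} ensures this, so by Lemma~\ref{l:hom_gen} I can compute generators $z_1, \ldots, z_k$ of $H_d(F_d)$ in time polynomial in $\size(X) + \size(c_0)$.

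Now I would invoke the effective Hurewicz inverse of Berger (Theorem~\ref{t:eff_hur}): since $F_d$ is $(d-1)$-connected, the Hurewicz map $\pi_d(F_d) \to H_d(F_d)$ is an isomorphism, and Berger's algorithm converts each homology cycle $z_j$ into a simplicial map $\Sigma_j^d \to F_d$ representing its Hurewicz preimage. Composing with $\psi_d$ yields simplicial maps $\psi_d \circ (\Sigma_j^d \to F_d)\: \Sigma_j^d \to X$ whose homotopy classes generate $\pi_d(X)$, and Theorem~\ref{t:eff_hur} gives the claimed $O(2^{P(\size(X)+\size(c_0))})$ time bound.

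The main obstacle is that Berger's algorithm, as a subroutine, needs a \emph{polynomial-time} loop contraction certificate for $F_d$ (Definition~\ref{d:poly-loop}), but $F_d$ is not $1$-reduced in general and its loop structure arises from the iterated twisted Cartesian product construction. The key technical step, carried out in Lemma~\ref{l:contr_loop}, is to build polynomial-time contractions of loops in each Whitehead stage inductively: a loop in $B \times_\tau G$ projects to a loop in $B$, which by the inductive hypothesis can be contracted in polynomial time, and the twisting operator $\tau$ together with the group structure of $G$ allows lifting the contraction through the fiber (with the base case supplied by the given $c_0$ on $X$). Verifying that this inductive loop contraction remains polynomial-time across all stages, and that it satisfies the input specification required by Berger's routine, is the crux of the argument; once established, the rest of the proof is a bookkeeping combination of the ingredients above.
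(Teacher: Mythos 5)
Your proposal follows essentially the same route as the paper: reduce the simplicial-complex case to the $0$-reduced simplicial set case via Lemma~\ref{l:contr2contr}, build the Whitehead stage $F_d$ with polynomial-time homology (Lemma~\ref{l:Fn}), compute homology generators (Lemma~\ref{l:hom_gen}), apply Berger's effective Hurewicz inverse (Theorem~\ref{t:eff_hur}) using the polynomial-time loop contraction of Lemma~\ref{l:contr_loop}, compose with $\psi_d$, and convert back to the simplicial complex. The only omission is that Lemma~\ref{l:lift_to_X} requires its domain to be a simplicial complex, whereas Berger's algorithm outputs a simplicial \emph{set} model $\Sigma_j^d$ of the sphere, so one must first interpose Lemma~\ref{l:complexifying} to replace $\Sigma_j^d$ by a simplicial complex $\Sigma_j^{sc}$ with a map to $\Sigma_j^d$ homotopic to a homeomorphism, exactly as the paper does in step 4 of its proof.
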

}
This immediately implies Theorem~\ref{t:main}, as for a~$1$-reduced simplicial set, the contraction $c_0$ is 
trivial, given by $c_0(1)=1$.

The proof of Theorem~\ref{t:main_a} is based on a~combination of four statements presented here as Lemma~\ref{l:Fn}, Theorem~\ref{t:eff_hur}, 
Lemma~\ref{l:contr_loop} and Lemma~\ref{l:lift_to_X}. Each of them is relatively independent and their proofs are 
delegated to further sections.

First we present an algorithm that, given a $1$-connected finite simplicial set $X$ and a positive integer $d$, outputs a simplicial set $F_d$ and a simplicial map $\psi_d$ such that
\begin{itemize}
\item the simplicial set $F_d$ is $d-1$ connected, it has  polynomial-time effective homology and polynomially contractible loops.
\item the simplicial map $\psi_d\: F_d \to X$ is polynomial-time and induces an~isomorphism \\$\psi_{d*} \: \pi_d(F_d) \to \pi_d(X)$.
\end{itemize}

\heading{Whitehead tower.}
We construct simplicial sets $F_d$ as stages of a so-called \emph{Whitehead tower} for the simplicial set $X$. 
It is a~sequence of simplicial sets and maps
\[
\xymatrix{
\cdots \ar[r] & F_d  \ar[r]^-{f_d} & F_{d-1}  \ar[r]^{f_{d-1}}&\cdots  \ar[r]^{f_4} & F_3  \ar@{>>}[r]^-{f_3} & F_2 = X.
}
\]
where $f_i$ induces an isomorphism $\pi_j(F_{i+1}) \to \pi_j(F_{i})$ for $j>i$
and $\pi_j(F_i) = 0$ for $j< i$. 
We define $\psi_d = f_d f_{d-1} \ldots f_3$. One can see that $F_d, \psi_d$ satisfy the desired properties.

\begin{lemma}~\label{l:Fn}
Let $d\geq 2$ be a~fixed integer.
Then there exists a polynomial-time algorithm that, for a~given 
$1$-connected finite simplicial set $X$, constructs the stages $F_2, \ldots, F_d$ of the Whitehead tower of $X$. 

The simplicial sets $F_k(X)$, parametrized by $1$-connected finite simplicial sets $X$, have polynomial-time homology and 
the~maps $f_k$ are polynomial-time simplicial maps.
\end{lemma}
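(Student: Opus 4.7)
The plan is to construct the Whitehead tower stages $F_2, F_3, \ldots, F_d$ inductively, starting with $F_2 = X$, and at each step to build $F_{k+1}$ as the homotopy fiber of a classifying map $F_k \to K(\pi_k(F_k), k)$ that realizes the identity on $\pi_k$. Concretely, I would model $F_{k+1}$ as a principal twisted Cartesian product $F_k \times_{\tau_k} K(\pi_k(F_k), k-1)$ in the sense of Definition~\ref{d:twistprod}, with an appropriate twisting operator $\tau_k$. Since $F_k$ is $(k-1)$-connected by induction, the Hurewicz theorem gives $\pi_k(F_k) \cong H_k(F_k)$; using Lemma~\ref{l:hom_gen} and the inductive hypothesis that $F_k$ has polynomial-time homology, we can compute generators of this abelian group in polynomial time. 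This yields the abstract group $\pi = \pi_k(F_k)$, and by~\cite{pKZ1} the Eilenberg--MacLane spaces $K(\pi, k-1)$ and $K(\pi, k)$ admit locally polynomial-time simplicial models with polynomial-time homology.

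Next I would construct the twisting operator $\tau_k \colon (F_k)_{n+1} \to K(\pi, k-1)_n$ encoding the classifying map. The identity $\pi_k(F_k) \to \pi$ corresponds, via Hurewicz and the universal coefficient theorem, to a cohomology class in $H^k(F_k; \pi)$; a representing cocycle $z_k$ can be extracted in polynomial time from the effective-homology reduction of $F_k$ (solve a linear system over the distinguished basis of the effective model to find a preimage of the identity under Hurewicz, then push it back through the reductions), and $z_k$ yields $\tau_k$ via the standard correspondence between cocycles and twisting operators into $K(\pi, k-1)$. Evaluating $\tau_k$ on a simplex amounts to evaluating $z_k$ and packaging the output as a nondegenerate simplex of $K(\pi,k-1)$, so $\tau_k$ is polynomial-time. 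The map $f_{k+1} \colon F_{k+1} \to F_k$ is projection onto the first coordinate, and is manifestly polynomial-time simplicial.

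To lift polynomial-time homology from $F_k$ to $F_{k+1}$ I would appeal to the Eilenberg--Zilber reduction combined with Shih's basic perturbation lemma: from the untwisted reduction $C_*(F_k \times K(\pi,k-1)) \redu C_*(F_k) \otimes C_*(K(\pi,k-1))$, the twisting $\tau_k$ induces a perturbation of the differential whose transferred reduction gives effective homology for $F_{k+1}$ once composed with the effective homologies of $F_k$ and $K(\pi,k-1)$. This is the same construction underlying Real's algorithm~\cite{Real96} and the polynomial-time variant in~\cite{polypost}; the content needed here is that when the factors and the twisting are polynomial-time, the output reduction is too. I would cite the polynomial-time framework of~\cite{pKZ1,polypost} for this transfer, so that induction on $k$ delivers the lemma.

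The main obstacle is the final step: verifying that the perturbation lemma yields a \emph{polynomial-time} (not merely locally effective) homology. The perturbation series must be shown to terminate within a number of steps bounded in terms of $k$---which it does because the twisting strictly increases filtration degree and we work in a fixed dimension range---and each iterate must have size polynomial in the input. This bookkeeping is the heart of the construction of objects with polynomial-time homology in~\cite{pKZ1,polypost}, and rather than redo it I would invoke their twisted-product closure results as black boxes. Once applied inductively, together with the polynomial-time computation of $\tau_k$ described above, this gives all four claims: the simplicial sets $F_k(X)$, the maps $f_k$, and their polynomial-time homology structure.
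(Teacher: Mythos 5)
Your proposal follows essentially the same route as the paper: an inductive construction of $F_{k+1}$ as a principal twisted Cartesian product $F_k\times_{\tau'}K(\pi_k,k-1)$ pulled back from the path--loop fibration over $K(\pi_k,k)$, with polynomial-time homology transferred via the twisted-product results of \cite{pKZ1,polypost}. The only real difference is cosmetic: where you extract the classifying cocycle by hand from the effective-homology reduction, the paper simply invokes the Postnikov-stage algorithm of \cite{polypost} to produce $\varphi_k\colon F_k\to K(\pi_k,k)$ and sets $\tau'=\tau\varphi_k$; it also spells out the long-exact-sequence/homotopy-fiber argument verifying that the result is indeed a Whitehead tower stage, which you leave implicit.
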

\begin{proof}
The proof is by induction. The basic step is trivial as $F_2 = X$. We describe how to obtain $F_{k+1}, f_{k+1}$ assuming that we have computed $F_{k}$, $2\leq k <d$.

\begin{enumerate}
\item We compute simplicial map $\varphi_{k}\: F_{k} \to K(\pi_{k}(X), k) = K(\pi_k(F_k), k)$ 
that induces an isomorphism $\varphi_{k*}\: \pi_{k}(F_{k}) \to \pi_{k}(K(\pi_{k}(X), k))\cong \pi_{k}(X)$. This is done using the algorithm in~\cite{polypost}, as $K(\pi_{k}(X), k)$ is the first nontrivial stage of the Postnikov tower for the simplicial set $F_{k}$.

For the simplicial set $K(\pi_{k}(X), k)$ and for such simplicial sets there is a classical principal bundle (twisted Cartesian product) (see \cite{may}):
\[
\xymatrix{
K(\pi_{k}(X), k-1) \ar[d] \\
E(\pi_{k}(X), k-1) = K(\pi_{k}(X), k) \times_{\tau}  K(\pi_{k}(X), k-1) \ar@{>>}[d]^\delta \\
K(\pi_{k}(X), k)
}
\]
\item
We construct $F_{k+1}$ and $f_{k+1}$ as a pullback of the twisted Cartesian product:
\[
\xymatrix{
K(\pi_{k}(X), k-1)  \ar[rr]^{\cong} \ar[d]&&K(\pi_{k}(X), k-1) \ar[d] \\
F_{k+1} \defeq  F_{k} \times_{\tau'} K(\pi_{k}(X), k-1) \ar@{.>}[rr]\ar@{.>>}[d]_{f_{k+1}} \pb &&K(\pi_{k}(X), k) \times_{\tau}  K(\pi_{k}(X), k-1) \ar@{>>}[d]^\delta \\
F_{k}\ar[rr]^{\varphi_{k}} && K(\pi_{k}(X), k).
}
\]
\end{enumerate}
It can be shown that the pullback, i.e. simplicial subset of pairs $(x,y) \in F_k  \times E(\pi_{k}(X),k-1)$ such that $\delta (y) = \varphi_k(x)$, can be identified with the twisted product as above~\cite{may}, where the twisting operator $\tau'$ is defined as $\tau \varphi_k$.

To show correctness of the algorithm, we assume inductively, that $F_{k}$ has polynomial-time effective homology. According to  \cite[Section 3.8]{polypost}, the simplicial sets $K(\pi_{k}(X), {k-1})$, $E(\pi_{k}(X), k-1)$, $K(\pi_{k}(X), k)$ have polynomial-time effective homology and maps $\varphi_k, \delta$ are polynomial-time. Further, they are all obtained by an algorithm that runs in polynomial time.

As $F_{k+1}$ is constructed as a twisted product of $F_k$ with $K(\pi_{k}(X), k)$, Corollary~3.18 of~\cite{polypost} 
implies that $F_{k+1}$ has polynomial-time effective homology and $f_{k+1}$ is a~polynomial-time map.\footnote{We remark that the paper~\cite{polypost} uses a different formalization of twised cartesian product than the one employed by us. However, the paper \cite{Filakovsky-tensor}, on which the~Corollary~3.18 of~\cite{polypost} is based, can be reformulated in context of the definition used here. We do not provide full details, only remark that one has to make a choice of \emph{Eilenberg-Zilber reduction data} that corresponds to the definition of twisted cartesian product.}

The sequence of simplicial sets $\xymatrix{ F_{k+1} \ar[r]^{f_{k+1}}&F_{k} \ar[r]^-{\varphi_{k}} &K (\pi_{k}(X), k)}$ 
induces the long exact sequence of homotopy groups
\[
\xymatrix{
\cdots \ar[r] &\pi_i( F_{k+1})  \ar[r]^{f_{k +1*}}& \pi_i(F_{k} )\ar[r]^-{\varphi_{k*}} & \pi_i(K (\pi_{k}(X), k)) \ar[r] &\pi_{i-1}(F_{k+1})\ar[r] & \cdots
}
\]
The reason why this is the case follows from a rather technical argument that identifies the~simplicial set $F_{k +1}$ with a so called \emph{homotopy fiber} of the map $\varphi_{k}\: F_{k} \to K (\pi_{k}(X), k)$. 
In more detail, the category of simplicial sets is right proper \cite[II.8.6–7]{goerssjardine} and map $\delta$ is a so-called Kan fibration \cite[§~23]{may}. This makes the pullback $F_{k +1 }$ coincide with so-called homotopy pullback. Further, the~simplicial set $E(\pi_{k}(X),k-1)$ is contractible, hence the homotopy pullback is a homotopy fiber. The induced exact sequence is due to~\cite[chapter I.3]{quillen1967homotopical}.

The inductive assumption, together with the fact that $\varphi_k$ induces an isomorphism $\varphi_{k*}\: \pi_{k}(F_{k}) \to \pi_{k}(K(\pi_{k}(X),k))$ imply that $f_k$ induces an isomorphism $\pi_j(F_{k +1}) \to \pi_j(F_{k})$ for $j>k$ and $\pi_j(F_{k +1}) = 0$ for $j\leq k$. 
\end{proof}
The lemma implies that the~simplicial sets $F_k$ have polynomial-time effective homology and maps $\psi_k = f_k f_{k-1} \ldots f_3$ are polynomial-time as they are defined as a composition of polynomial-time maps $f_i$.

The following theorem is a~key ingredient of our algorithm. 

\begin{theorem}[Effective Hurewicz Inverse]
\label{t:eff_hur}
Let $d>1$ be fixed and $F$ be an $(d-1)$-connected $0$-reduced simplicial set parametrized by a~set $\mathcal{I}$,
with polynomial-time homology and polynomially contractible loops.

Then there exists an algorithm that, for a~given $d$-cycle $z\in Z_{d}(F(I))$, 
outputs a~simplicial model $\Sigma^{d}$ of the $d$-sphere and a simplicial map $\Sigma^{d}\to F(I)$ 
whose homotopy class is the Hurewicz inverse of $[z]\in H_{d}(F(I))$. 

Moreover, the time complexity is bounded by an exponential of a polynomial function in $\size(I)+\size(z)$.
\end{theorem}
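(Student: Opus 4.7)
The plan is to adapt Berger's construction of the inverse Hurewicz map and to perform a careful complexity analysis of each step. The key idea is to use Kan's simplicial loop group $G$ to trade one unit of connectivity for one dimension: if $F$ is $(d-1)$-connected and $0$-reduced, then $GF$ is a $(d-2)$-connected simplicial group satisfying $\pi_i(GF) \cong \pi_{i+1}(F)$, and by Kan's theorem $F \simeq \bar{W}(GF)$. The Eilenberg--Zilber-type reduction relating $C_*(F)$ to the chain complex of the classifying space $\bar{W}(GF)$ gives an explicit chain map descending $H_d(F)$ to $H_{d-1}(GF)$.

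First I would set up the iterated loop groups $F, GF, G^2F, \ldots, G^{d-1}F$, and verify that each inherits polynomial-time homology as well as polynomially contractible loops. The former follows because $G$ preserves the effective-homology framework via a Szczarba/bar-type reduction; the latter uses the hypothesized contraction $c_0$ on $F$ combined with the algebraic structure of simplicial groups (where one has direct formulas that express a null-homologous cycle as a boundary at the cost of one bar level). Next, starting from the input $z \in Z_d(F(I))$, I would iteratively produce cycles $z_k \in Z_{d-k}(G^k F(I))$, $k = 0, 1, \ldots, d-1$, by applying the chain-level descent; each application has cost polynomial in $\size(z_{k-1}) + \size(I)$ and the resulting $z_k$ grows at most polynomially.

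At the bottom, $z_{d-1}$ is a $1$-cycle in the simplicial group $K \defeq G^{d-1}F(I)$, which is only $0$-connected but in which $\pi_1(K) \cong \pi_d(F(I))$ is abelian. I would realize $z_{d-1}$ as an explicit simplicial map $\Sigma^1 \to K$ by multiplying its generating edges inside $K_1$ and then invoking the polynomial-time loop contraction on $K$ to correct the resulting boundary to the basepoint; this is the base case of the recursion. I would then ascend: each application of the $\bar{W}$ construction, together with the canonical adjoint maps, converts a simplicial map $\Sigma^k \to G^{d-k}F$ into a map $\Sigma^{k+1} \to G^{d-k-1}F$, so after $d-1$ such steps I obtain a simplicial map $\Sigma^d \to F(I)$. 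By the way the descent and the ascent are defined from the same reduction data, the homotopy class of this map is exactly the Hurewicz inverse of $[z]$.

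The main obstacle is the complexity analysis. Each descent step applies a bar-type reduction whose formulas can polynomially increase the size of the current chain, and the ascent through $\bar{W}$ is more expensive: unfolding one level of the bar construction can at least square the current representation, since a simplex of $\bar{W}H$ is encoded by a sequence of simplices of $H$. Iterating this $d-1$ times gives a tower of polynomial compositions of fixed height $d$, producing the claimed bound of the form $2^{P(\size(I) + \size(z))}$ with $P = P_d$ depending only on the fixed dimension. Making this quantitative --- in particular, bounding the size of the simplicial sphere $\Sigma^d$ produced at each unlooping step and ensuring that the polynomial-time loop contraction on $K$ is actually available from the hypothesis on $F$ --- is the technically most delicate part, and will occupy most of Section~\ref{s:berger}.
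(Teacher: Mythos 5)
Your proposal diverges substantially from the paper's route, and two of its load-bearing steps have genuine gaps. The paper applies the loop group construction $G$ \emph{once}: it uses Kan's identification of the Hurewicz map with the abelianization $H_{d-1}(\widetilde{GF})\to H_{d-1}(\widetilde{AF})$, inverts arrow $1$ (a chain deformation retraction $C_*(F)\to \widetilde{AF}_{*-1}$) and arrow $2$ (inverting abelianization via an inductively built tower of contraction homomorphisms $c_0,\dots,c_{d-2}$ inside the single group $GF$, using commutator decompositions, spherical/conical splittings, and boundary certificates from polynomial-time homology), and then realizes the resulting spherical element geometrically via Berger's path-space model $\oLoop F$ and the $\Dom(\gamma)$ construction. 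Your first gap is the ``ascent'': the adjunction $G\dashv \bar W$ converts group homomorphisms $GX\to H$ into maps $X\to \bar W H$; it does \emph{not} convert a simplicial map $\Sigma^k\to GY$ into a simplicial map $\Sigma^{k+1}\to Y$. Producing an explicit triangulated sphere mapping into $Y$ itself (rather than into the weakly equivalent $\bar W GY$) is exactly the nontrivial geometric content of Berger's arrow $3$, and your scheme would need such a delooping at every one of the $d-1$ levels, with no construction supplied. Your second gap is the ``descent'': passing from a cycle in $Z_{d-k}(G^kF)$ to an explicit cycle in the \emph{non-abelian} group $G^{k+1}F$ requires inverting abelianization on representatives, which in turn requires contraction operators on $G^{k+1}F$ in all dimensions below $d-k-1$. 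The hypothesis only provides a loop contraction on $F$; the claim that the iterated loop groups ``inherit polynomially contractible loops'' via ``direct formulas'' is precisely the hard part (it occupies Lemmas~\ref{l:boundary_cer}--\ref{l:contraction} in the paper for a single application of $G$) and cannot be assumed.

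Separately, your complexity accounting does not actually produce an exponential: a composition of a fixed number $d$ of polynomial (or even size-squaring) blow-ups is still polynomial in the input size for fixed $d$. The genuine source of exponential cost in any correct implementation is the passage from binary-encoded exponents to geometric words: the homomorphism $t:GF\to\oLoop F$ must expand $\overline{\sigma}^{\,k}$, of size about $\log k$, into a concatenation of $k$ edges, and the large exponents arise from the homology computation (Smith normal form coefficients). Since your argument neither identifies this step nor needs the exponential it claims, this is a sign that the realization of the final simplicial sphere has not been thought through concretely.
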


The construction of an effective Hurewicz inverse is the main result of~\cite{Berger_thesis} 
and further details are provided in Section~\ref{s:berger}. 
It exploits a~combinatorial version of Hurewicz theorem given by Kan in~\cite{Kan:Hurewicz} where $\pi_d(F)$ 
is described in terms of $\pi_{d-1}(\widetilde{GF})$ where $\widetilde{GF}$ is a non-commutative simplicial group that models the loop space of $F$.
Kan showed that the Hurewicz isomorphism can be identified with a map $H_{d-1}(\widetilde{GF})\to H_{d-1}(\widetilde{AF})$ induced by Abelianization.
Berger then describes the inverse of the Hurewicz homomorphism as a composition of the maps $1,2,3$ in the diagram
\[
\xymatrix{
\pi_{d}(F) && H_{d}(F)\ar@{.>}[ll]_{h^{-1}} \ar[d]^1\\
H_{d-1}(\widetilde{GF}) \ar[u]^{3} &&H_{d-1}(\widetilde{AF}). \ar[ll]^{2}
}
\]
Arrow $1$ is induced by a chain homotopy equivalence and arrow $3$ by Berger's explicit geometric model of the loop space. 
To algorithmize arrow 2, we need an algebraic machinery that includes an explicit contraction of $k$-loops in $\widetilde{GF}$ for all $k<d-1$.
Those are based partially on linear computations in the Abelian group $\widetilde{AF}$ and partially on explicit inductive formulas dealing with commutators.
The lowest-dimensional contraction operation, however, cannot be algorithmized, without some external input.
The possibility of providing it 
is is the content of the following claim:
\begin{lemma}
\label{l:contr_loop}
Let $d\geq 2$ be a fixed integer and $\mathcal{I}$ be the set of all $1$-connected $0$-reduced finite simplicial sets with 
an~explicit loop contraction $c_0$.
Then the simplicial set $F_d$ from Lemma~\ref{l:Fn}, parametrized by $\mathcal{I}$, has polynomial-time contractible loops.
\end{lemma}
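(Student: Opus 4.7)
I plan to argue by induction on $k$, constructing polynomial-time loop contractions $c_0^{(k)}$ on each stage $F_k$ of the Whitehead tower, $2 \le k \le d$. The base case $F_2 = X$ is the input contraction $c_0$ coming with $I \in \mathcal{I}$, and for the inductive step I exploit the twisted Cartesian product
\[
F_{k+1} = F_k \times_{\tau'} K(\pi, k-1), \qquad \pi := \pi_k(X),
\]
splitting into two regimes according to the connectivity of the fiber.

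When $k \ge 3$, the fiber $K(\pi, k-1)$ is $(k-2)$-reduced, so $K(\pi, k-1)_1$ consists only of the degenerate simplex $s_0 *$ and the twisting $\tau'\colon (F_k)_2 \to K(\pi, k-1)_1$ is constant. Every $1$-simplex of $F_{k+1}$ then has the form $(\beta, s_0*)$, and a direct verification with Definition~\ref{d:G-constr} shows that $\iota(\overline\sigma) := \overline{(\sigma, s_1 s_0 *)}$ extends multiplicatively to a polynomial-time homomorphism $\iota\colon GF_k \to GF_{k+1}$ commuting with $d_0, d_1$ in degree $1$; setting $c_0^{(k+1)}(\overline{(\beta, s_0 *)}) := \iota(c_0^{(k)}(\overline\beta))$ inherits polynomial complexity from $c_0^{(k)}$ and from Lemma~\ref{l:Fn}.

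The substantive case is $k = 2$: now $K(\pi_2(X), 1)_1 = \pi_2(X)$ is nontrivial, so a $1$-simplex of $F_3$ is a pair $(\beta, \gamma) \in X_1 \times \pi_2(X)$, and fiber loops become nullhomotopic in $F_3$ only because the connecting map $\pi_2(F_2) \to \pi_1 K(\pi_2, 1) = \pi_2(X)$ induced by $\varphi_2$ is an isomorphism by construction. I handle the generators of $GF_{3,0}$ in three overlapping types. Pure base generators $\overline{(\beta, s_0 *)}$ are dispatched by the same lift $\iota(c_0^{(2)}(\overline\beta))$. Pure fiber generators $\overline{(s_0 *, \gamma)}$ require the polynomial-time homology of $F_2 = X$ (Lemmas~\ref{l:Fn} and~\ref{l:hom_gen}): compute a $2$-cycle $z = \sum_i n_i \sigma_i \in Z_2(X)$ whose class $[z] \in H_2(X) \cong \pi_2(X)$ equals $\gamma$, and form the product $P := \prod_i \overline{(\sigma_i, s_1 s_0 *)}^{n_i}$. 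The key identity $d_2(\sigma_i, s_1 s_0 *) = (d_2 \sigma_i, \tau'(\sigma_i)) = (d_2 \sigma_i, \varphi_2(\sigma_i))$ means that the twisted face contributes precisely the fiber generators $\overline{(s_0 *, \varphi_2(\sigma_i))}$, while the cycle condition $\partial z = 0$ causes the pure base contributions in $d_0$ and $d_1$ to telescope, so $P$ (possibly after the corrections discussed below) yields $d_0 P = \overline{(s_0 *, \gamma)}$ and $d_1 P = 1$. Mixed generators $\overline{(\beta, \gamma)}$ are then handled via the explicit $2$-simplex $\rho := (s_0 \beta, (\gamma^{-1}, \gamma)) \in (F_3)_2$, whose faces give $d_0 \overline\rho = \overline{(\beta, \gamma)}$ and $d_1 \overline\rho = \overline{(s_0 *, \gamma^{-1})}^{-1} \overline{(\beta, s_0 *)}$; the composition $\overline\rho \cdot s_0((d_1 \overline\rho)^{-1}) \cdot c_0^{(3)}((d_1 \overline\rho)^{-1})^{-1}$, with the final $c_0^{(3)}$ extended multiplicatively from the pure base and pure fiber cases just defined, then contracts $\overline{(\beta, \gamma)}$.

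The principal obstacle is the fiber construction: lifting the chain-level identity $\partial z = 0$ to a word-level identity in the \emph{non-abelian} simplicial group $GF_3$ requires inserting explicit commutator corrections when reordering the product $P$, and each correction must itself be contractible within the polynomial-time budget. These commutators live in pure-fiber degrees of $GF_3$ and can be reduced recursively by the same scheme; combined with the polynomial bound on $\size(z)$ from Lemma~\ref{l:hom_gen} and the fact that each lift $(\sigma_i, s_1 s_0 *)$ costs only constant extra encoding, the whole procedure runs in time polynomial in $\size(I)$. Iterating the inductive step from $k = 2$ up through $k = d$ then delivers the polynomial-time loop contraction on $F_d$ claimed by the lemma.
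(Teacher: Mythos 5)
Your overall strategy coincides with the paper's: induct up the Whitehead tower, observe that for $k\geq 3$ the fiber $K(\pi_k(X),k-1)$ is at least $1$-reduced so the twist on $2$-simplices is constant and the contraction lifts verbatim, and concentrate all the work at the stage $F_3=X\times_{\tau'}K(\pi_2(X),1)$, split according to base and fiber components. However, your treatment of the ``pure base'' generators at that stage contains a genuine error. For $k=2$ the twisting operator $\tau'=\tau\varphi_2\colon X_2\to K(\pi_2(X),1)_1=\pi_2(X)$ is \emph{not} constant, and by Definition~\ref{d:twistprod} the last face of a $2$-simplex $(y,0)\in(F_3)_2$ is $d_2(y,0)=(d_2y,\tau'(y))$. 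Since $d_1\overline{\sigma}=(\overline{d_2\sigma})^{-1}\,\overline{d_1\sigma}$ in $GF_3$, the lift $\iota(\overline{y})=\overline{(y,s_1s_0*)}$ does \emph{not} commute with $d_1$, so $d_1\bigl(\iota(c_0^{(2)}(\overline{\beta}))\bigr)$ is a product of mixed generators $\overline{(d_2y_i,k_i)}^{-1}\overline{(d_1y_i,0)}$ with $k_i=\tau'(y_i)$ generally nonzero, not the identity. This is exactly why the paper's contraction of $\overline{(x,0)}$ inserts, for each $y_i$ appearing in $c_0\overline{x}$, the correction factor $c_0\overline{(*,k_i)}^{-1}\,\overline{(s_1d_2y_i,(k_i,k_i,0))}$ before $\overline{(y_i,0)}$. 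Your mixed case builds on the pure-base case, so the gap propagates; note also that this makes the pure-base and pure-fiber constructions mutually dependent rather than independent, which is why the paper sets up the $\overline{(*,k)}$ contraction (Proposition~\ref{p:loop_contraction}) first.

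Second, the step you yourself flag as the ``principal obstacle''---promoting the chain-level identity $\partial z=0$ to a word-level identity in the non-abelian group---is not resolved by ``recursive commutator corrections''; commutators of fiber generators are not automatically contractible and sending them back into the same scheme risks circularity. The paper's resolution is different and essential: it first applies Lemma~\ref{l:arrow2} (the effective inverse of abelianization) to replace the chain-level cycle by a genuinely \emph{spherical} element $\gamma'=\overline{y_1}^{\epsilon_1}\cdots\overline{y_n}^{\epsilon_n}\in\widetilde{GX}_1$ with $d_0\gamma'=d_1\gamma'=1$ as group elements, and then reduces the resulting word $d_1z'$ to $\overline{(*,k)}$ by an explicit rewriting system (Lemma~\ref{l:rules} and Lemma~\ref{l:sequence}), each move being realized by multiplying $z'$ by concrete elements of $(GF_3)_1$ so that contractibility is preserved at every step. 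Without the spherical lift, your product $P=\prod_i\overline{(\sigma_i,s_1s_0*)}^{n_i}$ formed from a mere $2$-cycle need not telescope at all, and without the explicit rewriting moves the claimed polynomial bound has no justification. So the architecture of your argument is right, but the two places where the actual difficulty lives are, respectively, handled incorrectly and left unproved.
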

The proof is constructive, based on explicit formulas in our model of $F_d$: the details are in Section~\ref{a:loop_contraction}.

The core of the algorithm we will describe works with simplicial sets and simplicial maps between them. If our input is a~simplicial complex,
we need tools to convert them into maps between simplicial complexes. The next two lemmas address this. 
\begin{lemma}
\label{l:complexifying}
Let $Y$ be a~finite simplicial set. Then there exists a polynomial-time algorithm that computes a~simplicial complex $Y^{sc}$ with a~given orientation
of each simplex, and a~map $\gamma: Y^{sc}\to Y$ (still understood to be a~map between simplicial sets) such that the geometric realization of $\gamma$
is homotopic to a~homeomorphism.
\end{lemma}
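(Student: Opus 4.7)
The plan is to define $Y^{sc}$ as the second iterated barycentric (or \emph{normal}) subdivision $\mathrm{Sd}^2(Y)$. Recall that the non-degenerate $k$-simplices of $\mathrm{Sd}(Y)$ are strict chains $y_0 \prec y_1 \prec \ldots \prec y_k$ of non-degenerate simplices of $Y$ ordered by the face relation, with face and degeneracy operators given by omission and duplication of entries in the chain. A classical result, due originally to Kan and made precise by Fritsch--Latch and by Waldhausen, asserts that while $\mathrm{Sd}(Y)$ need not itself be a simplicial complex in general, $\mathrm{Sd}^2(Y)$ always is (isomorphic to the simplicial set of) an ordered simplicial complex: each simplex is uniquely determined by its underlying vertex set, with canonical orientation inherited from the chain order.

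The map $\gamma : Y^{sc} \to Y$ will be the composition of two \emph{last-vertex} maps $\mathrm{Sd}^2(Y) \to \mathrm{Sd}(Y) \to Y$. Each last-vertex map sends a chain $(x_0 \prec \ldots \prec x_k)$ to its maximal entry $x_k$ (with suitable degeneracies inserted when $\dim(x_k) > k$). This is a simplicial map of simplicial sets, and it is classical that its geometric realization $|\mathrm{Sd}(Y)| \to |Y|$ is homotopic to the canonical homeomorphism identifying the two realizations (see, e.g., \cite{may}). Composing the two last-vertex maps, $|\gamma|$ is therefore homotopic to a homeomorphism $|Y^{sc}| \to |Y|$.

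For the algorithmic side, starting from the finite tables of face operators defining $Y$, one enumerates all strict chains in the face poset of $Y$ to build $\mathrm{Sd}(Y)$, then iterates once more to build $\mathrm{Sd}^2(Y)$. Face and degeneracy operators of both subdivisions, the vertex ordering of each simplex, and the map $\gamma$ are all read off directly from the chain data. Throughout this paper the dimension $d$ is fixed, so the number of chains of length at most $d+1$ in the face poset is polynomial in the number of non-degenerate simplices of $Y$, and the whole construction runs in polynomial time.

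The principal technical obstacle is justifying that $\mathrm{Sd}^2(Y)$ really is a simplicial complex and not merely a simplicial set. I would address this in two steps: first, observe that $\mathrm{Sd}(Y)$ is \emph{regular}, meaning that each non-degenerate simplex has pairwise distinct vertices, because the vertices of a chain $(y_0 \prec \ldots \prec y_k)$ are its individual entries $y_i$, which are distinct by strictness of $\prec$; second, invoke the classical fact that applying $\mathrm{Sd}$ to any regular simplicial set yields an ordered simplicial complex. Combined with the standard analysis of the last-vertex map on geometric realizations, this yields both the simplicial-complex structure on $Y^{sc}$ and the homotopy-to-a-homeomorphism property of $|\gamma|$.
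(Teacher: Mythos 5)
The key claim that $\mathrm{Sd}^2(Y)$ is a simplicial complex for an arbitrary finite simplicial set $Y$ is false, and the description of $\mathrm{Sd}(Y)$ on which it rests is already incorrect. The non-degenerate $k$-simplices of the Kan subdivision $\mathrm{Sd}(Y)$ are \emph{not} in bijection with strict chains of non-degenerate simplices of $Y$ unless $Y$ is essentially an ordered simplicial complex already: $\mathrm{Sd}$ is defined as a colimit over the simplices of $Y$, and the identifications coming from degeneracies produce extra non-degenerate simplices beyond the chains. What you have described is the nerve $B_*(Y)$ of the poset of non-degenerate simplices, which is a different functor. Concretely, for $Y=\Delta^2/\partial\Delta^2$ the simplicial set $\mathrm{Sd}(Y)$ has non-degenerate $2$-simplices whose first two vertices coincide (both equal to the collapsed basepoint), so it is not ``regular'' in your sense; and since $\mathrm{Sd}$ commutes with colimits, $\mathrm{Sd}^2(Y)=\mathrm{Sd}^2(\Delta^2)/\mathrm{Sd}^2(\partial\Delta^2)$ still contains a non-degenerate $2$-simplex with two vertices equal to the basepoint --- take a chain $\tau_0\subsetneq\tau_1\subsetneq\tau_2$ of simplices of $\mathrm{Sd}(\Delta^2)$ with $\tau_0,\tau_1$ contained in the boundary but $\tau_2$ not. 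Hence $\mathrm{Sd}^2(Y)$ is not a simplicial complex, and iterating $\mathrm{Sd}$ further does not repair this.

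The construction the paper actually uses (following \cite[Appendix B]{ext-hard} and \cite{Jardine:SimplicialApproximation-2004}) is $Y^{sc}:=B_*(\mathrm{Sd}(Y))$: first apply $\mathrm{Sd}$, whose output is always a \emph{regular} simplicial set (in the sense of \cite{FritschPiccinini:CellularStructures-1990}, a weaker condition than having all vertices of each simplex distinct) with $|\mathrm{Sd}(Y)|$ homeomorphic to $|Y|$; then apply the flag-complex functor $B_*$ to the poset of non-degenerate simplices of $\mathrm{Sd}(Y)$, which yields the barycentric subdivision of the regular CW complex $|\mathrm{Sd}(Y)|$ and hence a genuine simplicial complex homeomorphic to $|Y|$. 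So your two ingredients are the right ones, but they must be two \emph{different} functors applied in this order, not $\mathrm{Sd}$ twice; note also that applying $B_*$ directly to $Y$ would fail for the opposite reason, since $|B_*(Y)|$ need not have the homotopy type of $|Y|$ when $Y$ is not regular (for $Y=\Delta^1/\partial\Delta^1$ one gets a contractible complex). Your last-vertex map and the polynomial-time analysis in fixed dimension are in the right spirit and match the paper's intent, but they need to be adapted to the functor $B_*\circ\mathrm{Sd}$.
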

This construction is given in~\cite[Appendix B]{ext-hard}.\footnote{A version of this lemma is given as~\cite[Proposition 3.5]{ext-hard}.
However, we also need the fact that $|Y^{sc}|$ is homeomorphic to $|Y|$, which is not explicitly mentioned in this reference,
but follows easily from the construction.}
Explicitly, the simplicial complex $Y^{sc}$ is defined to be $Y^{sc}:=B_*(Sd(Y))$, where $Sd$ is the barycentric subdivision
functor and $B_*$ a functor introduced in~\cite{Jardine:SimplicialApproximation-2004}: $Y^{sc}$ can be constructed recursively by
adding a vertex $v_\sigma$ for each nondegenerate simplex $\sigma\in Sd(Y)$ and replacing $\sigma$ by the cone with apex $v_\sigma$ 
over $B_*(\partial \sigma)$.
The subdivision $Sd(Y)$ is a~regular simplicial set and
$B_*(Sd(Y))$ coincides with the flag simplicial complex of the poset of nondegenerate simplices of $Sd(Y)$.
It follows that the geometric realisations $|Y^{sc}|$ is homeomorphic\footnote{The subdivision $Sd(Y)$ has geometric
realisation homeomorphic to $|Y|$ by~\cite[Thm 4.6.4]{FritschPiccinini:CellularStructures-1990}. The realisation
of $Sd(X)$ is a~regular CW complex and $B_*(Sd(Y))$ coincides with the first derived subdivision of this
regular CW complex, as defined in~\cite[p. 137]{geoghegan2007}.  
The geometric realisation of the resulting simplicial complex is still homeomorphic to $|Y|$ and $|Sd(Y)|$ by~\cite[Prop. 5.3.8]{geoghegan2007}.}
to $|Y|$.
Simplices of $Y^{sc}$ are naturally oriented and the explicit description of $\gamma$ is given 
in~\cite[p. 61]{ext-hard} and the references therein.

In our main algoritm, $Y=\Sigma^d$ will be a~triangulation of the $d$-sphere and $X$ a~simplicial set derived from a~simplicial complex
$X^{sc}$ by contracting its spanning tree into a~point. The following lemma shows that we can convert a~map $\Sigma^{sc}\to X$ into
a~map $(\Sigma^{sc})'\to X^{sc}$ between simplicial complexes. 
\begin{lemma}
\label{l:lift_to_X}
Let $d>0$ be fixed. Assume that $X^{sc}$ is a~given simplicial complex with a chosen ordering of vertices and a~maximal spanning tree $T$; we denote the
underlying simplicial set by $X^{ss}$. Let $p: X^{ss} \to X:=X^{ss}/T$ be the projection to the 
associated $0$-reduced simplicial set. Let $\Sigma$ be a~given $d$-dimensional simplicial complex with a~chosen orientation of each simplex,
$\Sigma^{ss}$ the induced simplicial set, and $f:\Sigma^{ss}\to {X}$ a~simplicial map.

Then there exists a~subdivision $\mathrm{Sd}(\Sigma)$ and a~simplicial map $f':\mathrm{Sd}(\Sigma)\to X^{sc}$ between 
\emph{simplicial complexes}\footnote{The constructed map $f$ does not necessarily preserves orientations: 
it only maps simplices to simplices.} such that
$$
|\Sigma|=|\mathrm{Sd}(\Sigma)| \stackrel{|f'|}{\to} |X^{sc}| \stackrel{|p|}{\to} |X|
$$
is homotopic to $|\Sigma^{ss}|\stackrel{|f|}{\to} |{X}|$. 
Moreover, $f'$ can be computed in polynomial time, assuming an~encoding of the input $f,\Sigma,X^{sc}$, ${X}$ and $T$.
\end{lemma}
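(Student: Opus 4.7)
The plan is to construct $f'$ explicitly, exploiting the fact that $|p|\colon|X^{sc}|\to|X|$ is a homotopy equivalence since it collapses the contractible spanning tree $|T|$ to a point. I would build a simplicial ``section'' of $p$ up to homotopy by induction on skeleta of $\Sigma$, subdividing each simplex into a ``central'' piece that maps to a canonical lift in $X^{sc}$ together with a ``collar'' lying inside $T$.

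As preprocessing, compute via BFS from the root $R$, in polynomial time, the unique simple path $\pi(u)\subseteq T$ from each vertex $u$ of $X^{sc}$ to $R$. The key observation is: every nondegenerate simplex of $X$ of dimension $\geq 2$ is the image under $p$ of a unique ordered simplex of $X^{sc}$ (since $T$ is $1$-dimensional, no higher-dimensional simplex of $X^{ss}$ is identified by $p$), and every nondegenerate $1$-simplex of $X$ corresponds to an edge of $X^{sc}\setminus T$. Using this canonical lift, which I denote $\widetilde{(\cdot)}$, I construct $f'$ inductively. On vertices, set $f'(v)=R$. On each edge $e$ of $\Sigma$: if $f(e)$ is degenerate, collapse $e$ to $R$; otherwise let $\widetilde{f(e)}=[u_0,u_1]$ and subdivide $e$ into a path of length $|\pi(u_0)|+1+|\pi(u_1)|$, mapping it onto the loop $\pi(u_0)\cdot\widetilde{f(e)}\cdot\pi(u_1)^{-1}$ in $X^{sc}$. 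Inductively, for each $n$-simplex $\sigma$ with canonical lift $\widetilde{\tau'}$ of the nondegenerate part $\tau'$ of $f(\sigma)$, subdivide $\sigma$ into a central copy of $\widetilde{\tau'}$ together with collar regions that triangulate the prism between the boundary of this central simplex and the already-constructed lift on $\partial\sigma$, routing through $T$ via the $\pi$-paths.

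To verify $|p|\circ|f'|\simeq|f|$: on each $\sigma$ the composition $p\circ f'$ sends the central region to $\tau'$ and the collar into $|T|$, which $p$ further collapses to the basepoint of $|X|$; a straight-line homotopy within $|\sigma|$ combined with the contraction of $T$ to $R$ realizes the required homotopy. For fixed $d$, each $n$-simplex of $\Sigma$ is subdivided into $O(|X^{sc}|^n)$ pieces, so the total subdivision has polynomial size.

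The main technical obstacle is the compatibility of the collar triangulations across common faces of simplices: the lift already constructed on $\partial\sigma$ (which itself has a central-plus-collar structure, one dimension lower) must glue consistently with the collar triangulation of the interior of $\sigma$. I would resolve this with a canonical recursive prism-triangulation scheme, anchored at the central simplex and extending outward along $\pi$-paths, with additional case analysis to handle situations where some faces of $\sigma$ are already sent into $T$ by $f$ (so that the corresponding ``central'' piece degenerates into $T$ and merges with the collar). This bookkeeping is tedious but routine; the underlying geometric picture is that each $|\sigma|$ contains a shrunken copy of $|\widetilde{\tau'}|$ surrounded by a collar that maps to the tree and is therefore contractible under $p$.
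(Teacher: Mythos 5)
Your overall strategy (shrink each simplex of $\Sigma$ onto a ``central'' copy of the canonical lift of $f(\sigma)$ and fill the remaining region through tree paths) is sound and is in fact close in spirit to what the paper does, but the step you dismiss as ``tedious but routine'' is precisely the mathematical content of the lemma, and you have not supplied it. Your inductive central-plus-collar scheme needs a concrete, canonical triangulation of the region between the central simplex of $\sigma$ and the already-subdivided $\partial\sigma$, chosen so that (a) it restricts on every face $\delta\subset\sigma$ to the central-plus-collar structure already built on $\delta$, (b) it behaves correctly when $f(\sigma)$ is a degenerate simplex of $X$ (so the ``central'' piece has strictly smaller dimension) or when some faces of $\sigma$ are sent to the basepoint, and (c) every simplex of the resulting subdivision is mapped to an honest simplex of $X^{sc}$ (a face spanned by some of the lifted vertices, a tree edge, or a vertex). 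None of this is verified, and it does not follow formally from ``anchoring at the central simplex and extending along $\pi$-paths''. The paper sidesteps the inductive gluing problem entirely by a single global formula: it takes the edgewise subdivision $\Esd_k(\Sigma)$ with the calibrated parameter $k=l(d+1)+1$ (where $l$ is the tree depth) and labels each subdivision vertex $x$ by $M(\mathrm{dist}_{ET}(x))(V_{\argmax x})$; even there, the proof that this labeling sends simplices to simplices is a genuine contradiction argument that uses the specific choice of $k$, which is a strong hint that the compatibility you defer is not routine bookkeeping.

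There is also a concrete error in your homotopy verification: the collar does \emph{not} map into $|T|$. The part of the collar interpolating between the central copy of the lift of $f(\sigma)$ and the corresponding central copies sitting in the faces of $\partial\sigma$ must be mapped onto simplices of the lift itself (for instance onto a non-tree edge $\{V_0,V_1\}$), so the claim that $|p|$ collapses the whole collar to the basepoint is false, and the asserted homotopy does not come for free. One instead has to argue as the paper does: compose $|pf'|$ with a linear homotopy of each top-dimensional $|\sigma|$ onto its central simplex, and then prove that these per-simplex homotopies agree on shared faces --- in the paper this compatibility is itself a nontrivial step, handled by an explicit symmetry (``mirror map'') argument, and your sketch does not address it.
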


Thus if $\Sigma$ is a~sphere and $f$ corresponds to a~homotopy generator,
$f'$ is the corresponding homotopy generator represented as a~simplicial map between simplicial complexes.
We remark that the algorithm we describe works even if $d$ is a~part of the input, but the time complexity would be exponential in general,
as the number of vertices in our subdivision $\mathrm{Sd}(\Sigma)$ would grow exponentially with $d$.

The proof of~Lemma~\ref{l:lift_to_X} is given in~Section~\ref{sec:map-into-complex}.
\begin{proof}[Proof of Theorem~\ref{t:main_a}]
First assume that a~finite simplicial complex $X^{sc}$ is given together with a~loop contraction. 
Then the algorithm goes as follows.
\begin{enumerate}
\item We choose an ordering of vertices and convert $X^{sc}$ into a~simplicial set. 
Choosing a~spanning tree and contracting it to a~point creates a~$0$-reduced 
simplicial set $X$ homotopy equivalent to $X^{sc}$. 
By~Lemma~\ref{l:contr2contr}, we can convert the input data into
a~list $c_0(\alpha)$ for all generators $\alpha$ of $GX_0$ in polynomial time.
\item
We construct the simplicial set $F_d$ from Lemma~\ref{l:Fn} as simplicial set with polynomial-time effective homology. 
Hence by Lemma~\ref{l:hom_gen} we can compute the generators of $H_d(F_d)$ in time polynomial in $\size(X)$. 
Due to Lemma~\ref{l:contr_loop} and Theorem~\ref{t:eff_hur}, we can convert these homology generators to homotopy generators
$\Sigma_j^d \to F_d$ in time exponential in $P(\size(X)+\size(c_0))$ where $P$ is a~polynomial.
\item We compose the representatives of $\pi_d(F_d)$ with $\psi_d$ to obtain representatives $\Sigma_j^d\to X$ of the generators of $\pi_d(X)$,
another polynomial-time operation. This way, we compute explicit homotopy generators as maps into the simplicial set $X$.
\item We use Lemma~\ref{l:complexifying} to compute simplicial complexes $\Sigma_j^{sc}$ and maps $\Sigma_j^{sc}\to \Sigma^d$
homotopic to homeomorphisms. The compositions $\Sigma_j^{sc}\to\Sigma_j^d\to X$ still represent a~set of homotopy generators.
Finally, by Lemma~\ref{l:lift_to_X}, we can compute, for each $j$, a~subdivision
of the sphere $\Sigma_j^{sc}$ and a~simplicial map from this subdivision 
into the simplicial \emph{complex} $X^{sc}$, in time polynomial in the size of the representatives
$\Sigma_j^{sc}\to X$.
\end{enumerate}
In case when the input is a~$0$-reduced simplicial set $X$ with a~loop contraction $c_0$, only steps $2$ and $3$ are performed.
In either case, the overall exponential complexity bound comes from Berger's Effective Hurewicz inverse theorem.
\end{proof}
%
\section{Proof of Theorem \ref{t:optimality}}
\label{s:proof_2}
Similarly as in the proof of Theorem~\ref{t:main}, we prove a~slightly stronger version of Theorem~\ref{t:optimality} that
also includes finite simplicial complexes.
{\def\thetheorem{B.1}
\addtocounter{theorem}{-1} 
\begin{theorem}
\label{t:optimality_a} 
Let $d\geq 2$ be fixed. Then
\begin{enumerate}
\item 
there is an infinite family of $d$-dimensional $1$-connected finite simplicial complexes $X$ such that
for any simplicial map $\Sigma\to X$ representing a generator of $\pi_d(X)$, the triangulation $\Sigma$ 
of  $S^d$ on which $f$ is defined
has size at least $2^{\Omega(\size(X))}$.
\item 
there is an infinite family of $d$-dimensional $(d-1)$-connected and $(d-2)$-reduced simplicial sets $X$ such that
for any simplicial map $\Sigma\to X$ representing a generator of $\pi_d(X)$, the triangulation $\Sigma$ of  $S^d$ on which $f$ is defined
has size at least $2^{\Omega(\size(X))}$.
\end{enumerate}
Consequently, any algorithm for computing simplicial representatives of the generators of $\pi_d(X)$
has time complexity at least $2^{\Omega(\size(X))}$. 
\end{theorem}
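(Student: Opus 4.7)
The plan is to reduce the bound on the triangulation size to a purely combinatorial bound on the $\ell_1$-norm of the unique generator of $H_d(X)$, and then to exhibit explicit families of spaces for which this $\ell_1$-norm is exponential in $\size(X)$. Throughout, if $f\colon \Sigma \to X$ is a simplicial map where $\Sigma$ triangulates $S^d$, then the pushforward of the fundamental class is
\[
f_\ast([\Sigma]) \;=\; \sum_{\tau \in \Sigma_d^{\mathrm{nd}}} \varepsilon_\tau\, f(\tau) \;\in\; C_d(X), \qquad \varepsilon_\tau \in \{\pm 1\},
\]
so $\|f_\ast([\Sigma])\|_1 \le |\Sigma_d|$. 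When $X$ is $(d-1)$-connected and $d$-dimensional, Hurewicz and the absence of $(d+1)$-chains give $\pi_d(X) \cong H_d(X) = Z_d(X) \subset C_d(X)$, so $f_\ast([\Sigma])$ is exactly $\pm z_0$ for the essentially unique generator cycle $z_0$. Therefore $|\Sigma_d| \ge \|z_0\|_1$, and the entire task becomes arranging $\|z_0\|_1$ to be exponential in $\size(X)$.

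For case (2) I would construct, for each $n$, a $(d-2)$-reduced $d$-dimensional simplicial set $X_n$ of size $O(n)$: a basepoint $v$ and its iterated degeneracies up to dimension $d-2$; $n$ non-degenerate $(d-1)$-simplices $e_1, \ldots, e_n$ with all faces degenerate; and $n+1$ non-degenerate $d$-simplices $\sigma_0, \ldots, \sigma_n$ whose face-tuples are chosen so that, in the normalized chain complex,
\[
\partial \sigma_0 = e_1, \qquad \partial \sigma_i = 2 e_i - e_{i+1} \text{ for } 1 \le i < n, \qquad \partial \sigma_n = 2 e_n.
\]
Concretely (for $d=2$) one may take $d_0 \sigma_i = e_i,\ d_1 \sigma_i = e_{i+1},\ d_2 \sigma_i = e_i$ with suitable degeneracies at the extremes, and every simplicial identity collapses to $v = v$. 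Solving $\partial \sum k_i \sigma_i = 0$ then forces $(k_0, \ldots, k_n) \in \mathbb{Z} \cdot (-2^n, 2^{n-1}, \ldots, 2, 1)$, whose $\ell_1$-norm is $2^{n+1} - 1$; simultaneously, the attaching-map relations $e_1 = 0$, $e_{i+1} = 2 e_i$, $2 e_n = 0$ inductively kill all of $\pi_{d-1}$, so $X_n$ is $(d-1)$-connected. Combined with the first paragraph this proves case (2).

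For case (1) I would apply the complexification of Lemma~\ref{l:complexifying}, setting $X_n^{sc} := B_\ast(\mathrm{Sd}(X_n))$: it is a simplicial complex of polynomial size together with a simplicial map $\gamma\colon X_n^{sc} \to X_n$ whose realization is homotopic to a homeomorphism. Because $\gamma$ sends each non-degenerate $d$-simplex of $X_n^{sc}$ to at most one non-degenerate $d$-simplex of $X_n$, the induced chain map $\gamma_\ast$ is $\ell_1$-non-increasing; hence if $z^{sc}$ generates $H_d(X_n^{sc})$, then $\gamma_\ast(z^{sc})$ generates $H_d(X_n)$ and $\|z^{sc}\|_1 \ge \|\gamma_\ast(z^{sc})\|_1 = 2^{n+1} - 1$. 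Feeding this back into the first paragraph finishes case (1), and the algorithmic lower bound is immediate, since any algorithm must at least write down its exponentially-sized output.

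The main obstacle is the chain-level design in the second paragraph: choosing face-tuples of the $\sigma_i$ so that the boundary matrix has exactly the desired rank-one kernel of exponential $\ell_1$-norm while simultaneously killing $\pi_{d-1}$. The remaining ingredients---the $\ell_1$-bound on $f_\ast([\Sigma])$, the Hurewicz identification, and the transfer from simplicial sets to simplicial complexes via $\gamma_\ast$---are short elementary observations once that chain structure is in place.
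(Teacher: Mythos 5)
Your proposal is correct, and its core reduction is exactly the paper's: push the fundamental class of $\Sigma$ forward under Hurewicz, note that for a $d$-dimensional $(d-1)$-connected target $H_d=Z_d\cong\Z$, and bound $|\Sigma_d|$ from below by the $\ell_1$-norm of the (essentially unique) generating cycle; your simplicial-set family in case (2) is also essentially the paper's (Lemma~\ref{l:big_homology}(2)), differing only in attaching the last $d$-cell with degree $2$ instead of $1$. Where you genuinely diverge is case (1): the paper constructs the simplicial complexes directly, by gluing $k$ triangulated mapping cylinders of degree-$2$ maps $S^{d-1}\to S^{d-1}$ (M\"obius bands for $d=2$) and capping with two discs, so that any generating cycle carries a coefficient $\pm 2^k$ on a cap simplex; you instead transport the simplicial-set example through $B_*(\mathrm{Sd}(\cdot))$ of Lemma~\ref{l:complexifying}, using that the induced normalized chain map $\gamma_*$ has $\{0,1\}$ entries (hence is $\ell_1$-non-increasing) while being a homology isomorphism. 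Your route gives both cases from one construction and avoids designing explicit triangulations; the paper's gives a self-contained geometric picture and, for $d=2$, an immediate argument for simple connectivity. Two points to tighten: saying $X_n^{sc}$ has ``polynomial size'' only yields $2^{\Omega(\size(X)^{1/c})}$, so you should note that for fixed $d$ the blow-up of $\mathrm{Sd}$ and $B_*$ is by a constant factor, giving $\size(X_n^{sc})=O(\size(X_n))$ as required; and for $d=2$ the step ``inductively kill $\pi_{d-1}$'' concerns the nonabelian $\pi_1$, so it should be phrased via the one-vertex CW presentation, where the $2$-cells impose $[e_1]=1$, $[e_{i+1}]=[e_i]^2$, $[e_n]^2=1$, which indeed force triviality.
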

}
The second item immediately implies Theorem~\ref{t:optimality}. 

In the first item, we don't assume any certificate
for $1$-connectedness. However, we suspect that any algorithm that computes representatives of $\pi_d(X)$ for simplicial complexes $X$ \emph{must} 
necessarily use some explicit certificate of simple connectivity, but so far we have not been able to verify this.

\begin{lemma}
\label{l:big_homology}
Let $d\geq 2$.
\begin{enumerate}
\item 
There exists a~sequence $\{X_k\}_{k \geq 1}$ of $d$-dimensional $(d-1)$-connected simplicial complexes, such that 
$H_{d}(X_k)\simeq \Z$ 
for all $k$ and for any choice of a~cycle
$z_k\in Z_{d}(X_k)$ generating the homology group, the largest coefficient in $z_k$ grows exponentially in $\size(X_k)$.
\item 
There exists a~sequence $\{X_k\}_{k \geq 1}$ of $d$-dimensional $(d-1)$-connected and $(d-2)$-reduced simplicial sets, such that 
$H_{d}(X_k)\simeq \Z$ 
for all $k$ and for any choice of cycles
$z_k\in Z_{d}(X_k)$ generating the homology, the largest coefficient in $z_k$ grows exponentially\footnote{With a slight abuse of language,
we assume that each $X_k$ not only a~simplicial set but also its algorithmic representation with a specified size such as explained 
in Section~\ref{s:prelim}.} in $\size(X_k)$.

\end{enumerate}
\end{lemma}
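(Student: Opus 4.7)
The plan is to construct an explicit base example as a $2$-dimensional $0$-reduced simplicial set exhibiting the required exponential blow-up, extend to higher $d$ by iterated reduced simplicial suspension, and obtain the simplicial complex case by post-composing with the complexification of Lemma~\ref{l:complexifying}. For the base case of item 2 with $d=2$, I would define $X_k$ to have one vertex $v$, non-degenerate $1$-simplices $e_1,\ldots,e_k$ (all loops at $v$), and non-degenerate $2$-simplices $\tau_1,\ldots,\tau_{k+1}$ with face assignments $d_0\tau_j=d_2\tau_j=e_j$ and $d_1\tau_j=e_{j-1}$ for $2\le j\le k$, together with the corner cases $d_1\tau_1=s_0v$, $d_0\tau_{k+1}=d_2\tau_{k+1}=s_0v$, $d_1\tau_{k+1}=e_k$. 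The simplicial identities hold automatically because every iterated face operator on the $2$-simplices lands on $v$. Each $\tau_j$ for $1\le j\le k$ imposes the relation $e_j^2=e_{j-1}$ in $\pi_1$ (with the convention $e_0=1$), while $\tau_{k+1}$ imposes $e_k=1$; cascading these forces every $e_j=1$, so $X_k$ is simply connected. The normalized chain complex has $\partial\tau_j=2e_j-e_{j-1}$ and $\partial\tau_{k+1}=-e_k$; since $X_k$ is $2$-dimensional, $H_2=Z_2$, and the cycle condition is the recurrence $c_{j+1}=2c_j$, giving $H_2(X_k)=\mathbb{Z}\cdot(1,2,4,\ldots,2^k)$. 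The unique generator up to sign has maximum coefficient $2^k$, while $\size(X_k)=O(k)$.

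For $d\ge 3$ in item 2, I would iterate the reduced simplicial suspension $\Sigma$ exactly $d-2$ times. Each application sends a $j$-reduced simplicial set to a $(j+1)$-reduced one, raises connectivity by one (the spaces always lie within the stable range of Freudenthal), shifts homology up by one dimension, and at the chain level is a degree-shifting bijection that preserves coefficients of the unique generator. Hence $\Sigma^{d-2}X_k$ is $d$-dimensional, $(d-2)$-reduced, $(d-1)$-connected, with $H_d\cong\mathbb{Z}$ generated by a cycle of maximum coefficient $2^k$ and with $\size(\Sigma^{d-2}X_k)=O(k)$ for fixed $d$.

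For item 1 with $d=2$, I would apply Lemma~\ref{l:complexifying} to form $X_k^{sc}\defeq B_*(Sd(X_k))$ together with the simplicial map $\gamma\colon X_k^{sc}\to X_k$ inducing a homotopy equivalence. Because $X_k$ is $2$-dimensional with $O(k)$ non-degenerate simplices, $B_*\circ Sd$ behaves as at most a double barycentric subdivision of a sparse $2$-complex and yields $\size(X_k^{sc})=O(k)$. The homotopy equivalence guarantees that $X_k^{sc}$ is simply connected with $H_2(X_k^{sc})\cong\mathbb{Z}$, and $2$-dimensionality forces $H_2=Z_2$, so the generator $z^{sc}$ is unique up to sign. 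Since $X_k$ is also $2$-dimensional, $\gamma_*(z^{sc})=\pm(\tau_1+2\tau_2+\cdots+2^k\tau_{k+1})$ on the nose in $C_2(X_k)$ (not merely in homology), and the chain-level pushforward sums the $z^{sc}_\sigma$ over fibers of $\gamma$, yielding
\[
\sum_{\sigma\in (X_k^{sc})_2 \,:\, \gamma(\sigma)=\tau_{k+1}} z^{sc}_\sigma = \pm 2^k.
\]
The fiber over $\tau_{k+1}$ has at most $\size(X_k^{sc})=O(k)$ elements, so some $|z^{sc}_\sigma|\ge 2^k/O(k)=2^{\Omega(\size(X_k^{sc}))}$. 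For $d\ge 3$ in item 1, I would apply the ordinary simplicial suspension of simplicial complexes $d-2$ times; this preserves the unique generator and its maximum coefficient, raises connectivity by one per iteration, and blows up the size only by a constant factor.

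The main obstacle is ensuring the complexification blows up size only \emph{linearly}, since without the linear bound $\size(X_k^{sc})=O(k)$, a polynomial blow-up of degree $c>1$ would degrade the conclusion from the desired $2^{\Omega(\size(X_k^{sc}))}$ to the weaker $2^{\Omega(\size(X_k^{sc})^{1/c})}$. This linear bound relies on both the low (constant) dimension and the sparsity of $X_k$ (linearly many non-degenerate simplices) and would fail in general; for the specific $X_k$ above it reduces to a direct simplex count on $B_*(Sd(X_k))$.
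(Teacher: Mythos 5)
Your argument is sound, and your base example for item 2 with $d=2$ is essentially the paper's own construction (the paper's $X_k$ has spherical $(d-1)$-simplices $\sigma_1,\ldots,\sigma_k$ and $d$-simplices $A,C_1,\ldots,C_{k-1},B$ with $\partial A=\sigma_1$, $\partial C_i=2\sigma_i-\sigma_{i+1}$, $\partial B=\sigma_k$, which is your complex up to reindexing), but the two proofs diverge from there. For $d\geq 3$ the paper simply writes down the analogous $(d-2)$-reduced simplicial set directly in dimension $d$ and checks $(d-1)$-connectedness via Hurewicz, whereas you invoke the Kan (reduced) suspension and its standard properties (shift of normalized chains, gain of one level of reducedness and of connectivity per application); this works, but those facts should be cited rather than asserted, and the appeal to Freudenthal is unnecessary --- the connectivity of the suspension follows already from the homology suspension isomorphism together with van Kampen and Hurewicz. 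For item 1 the paper proceeds geometrically: it glues $k$ triangulated mapping cylinders of a degree-$2$ map $S^{d-1}\to S^{d-1}$ (M\"obius bands when $d=2$) and caps the two open ends with balls, so that $X_k$ is homotopy equivalent to a disc with a $d$-cell attached by a degree-$2^k$ map, and any generating cycle visibly carries coefficients $\pm 1$ and $\pm 2^k$ on the two caps; your route instead transfers the simplicial-set example through the complexification $B_*(Sd(\cdot))$ of Lemma~\ref{l:complexifying} and recovers the exponential coefficient by a chain-level fiber-counting (pigeonhole) argument, then suspends as simplicial complexes. This is a genuinely different and reusable transfer argument (exponential coefficients survive any simplicial map inducing an isomorphism on $H_d$ between $d$-dimensional objects of comparable size, since in the absence of $(d+1)$-simplices the homology isomorphism is an equality of cycles on the nose), but it hinges on exactly the point you flag: that $\size(B_*(Sd(X_k)))=O(k)$. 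This does hold here: in fixed dimension each nondegenerate simplex of $X_k$ contributes boundedly many nondegenerate simplices to $Sd(X_k)$, and each simplex of $B_*(Sd(X_k))$ is a chain in the face poset of its top element, again boundedly many per top simplex. The paper's direct construction avoids both the suspension machinery and this size analysis at the cost of exhibiting an explicit triangulation; your version needs less explicit geometry but more simplicial-set technology, and would be complete once the Kan-suspension facts and the linear size bound are spelled out.
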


\begin{proof}[Proof of Theorem 2 based on Lemma~\ref{l:big_homology}] 
Let $\{X_k\}_{k \geq 1}$ be the sequence of simplicial sets or simplicial complexes from Lemma~\ref{l:big_homology}. 
Since they are $(d-1)$-connected, by the theorem of Hurewicz, 
$\pi_{d}(X_k)\simeq H_{d}(X_k)\simeq \Z$. 
For each $k$, let $\Sigma_k$ be a~simplicial set or simplicial complex with $|\Sigma_k|=S^{d}$, and 
$f_k: \Sigma_k\to X_k$ a~simplicial map representing a~generator of $\pi_{d}(X_k)$. 
The generator of $H_d(\Sigma_d)$ contains each non-degenerate $d$-simplex with a~coefficient $\pm 1$ (this follows from the fact
that $\Sigma_k$ is a triangulation of the $d$-sphere and the $d$-homology of the $d$-sphere is generated by its fundamental class).
The Hurewicz isomorphism $\pi_{d}(X_k)\to H_{d}(X_k)$
maps such a representative to the  formal sum of simplices
$$
f_k \mapsto \sum_{\sigma\,\text{ is a } d-\text{simplex in } (\Sigma_k)} \pm f_k(\sigma) \in C_{d}(X_k) \, ,
$$
which represents a generator of $H_{d}(X_k)$.
It follows from Lemma~\ref{l:big_homology} that the number of $d$-simplices in $\Sigma_k$ grows exponentially in $\size(X_k)$. 
Moreover, the complexity of any algorithm that computes $f_k: \Sigma_k\to X_k$ is at least the size of $\Sigma_k$, which completes the proof.
\end{proof}

It remains to define the sequence from Lemma~\ref{l:big_homology}:
\begin{proof}[Proof of Lemma~\ref{l:big_homology}.] \hfill  
\begin{enumerate}
\item
We begin by constructing for every $d\geq 2$, a~sequence of $\{X_k\}_{k \geq 1}$ of $(d-1)$-connected simplicial complexes, such that 
$H_{d}(X_k)\simeq \Z$ 
for all $k$, and for any choice of a~cycle
$z_k\in Z_{d}(X_k)$ generating the homology group, the largest coefficient in $z_k$ grows exponentially in $\size(X_k)$.

We start with $d=2$.
The idea is to glue $X_k$ out of $k$ copies of a triangulated mapping cylinders of a degree $2$ map ${S}^1 \rightarrow {S}^1$, i.e. $k$ M\"obius bands, and then fill in the two open ends with one triangle each ($A$ and $B$ in Figure~\ref{fig:Moebius}). 
The case $k=1$ is shown in Figure~\ref{fig:Moebius}. For $k \geq 2$, we take $k$ copies of the triangulated M\"obius band and identify the middle circle of each one to the boundary of the next one.

\begin{figure}[t!]
\begin{center}
\includegraphics[scale=0.6]{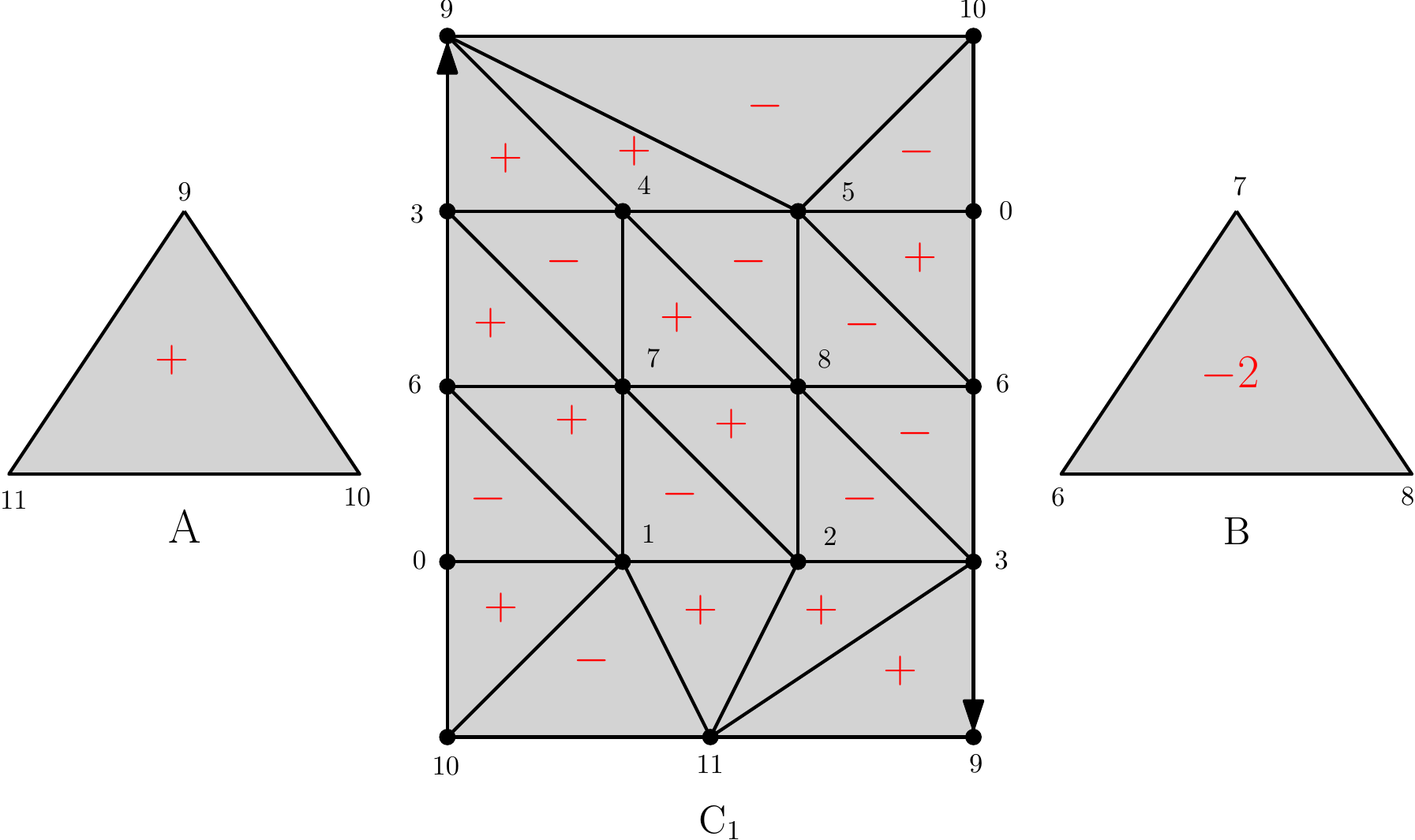}
\end{center}
\caption{The M\"obius band is the mapping cylinder of a degree $2$ map $S^1 \rightarrow S^1$. The triangulation has four layers because starting from the boundary, which is a triangle, we first need to pass to a hexagon in order to cover the middle triangle twice, obtaining the desired degree $2$ map. Connecting $k$ copies of the M\"obius band creates a mapping cylinder of a degree $2^k$ map, using only linearly (in $k$) many simplices. Gluing the full triangles $A$ and $B$ to the ends of this mapping cylinder finishes the construction of $X_k$.
The red coefficients exhibit a generator $\xi$ of $H_2(X_1) = Z_2 (X_1) \simeq \Z $ given as a~formal sum of $2$-simplices.}
\label{fig:Moebius}
\end{figure}

We observe that, up to homotopy equivalence,
$X_k$ consists of a~$2$-disc with another $2$-disc which is attached to it via the boundary map $S^1\to S^1$ of degree $2^k$.
Therefore, $X_k$ is simply connected and has $H_2(X_k) \simeq \mathbb{Z}$ and any homology generator will contain the $2$-simplex
$A$ with coefficient $\pm 1$ and $B$ with coefficient $\pm 2^k$.

Similarly for $d>2$, the simplicial complex $X_k$ is obtained by glueing $k$ copies of a triangulated mapping cylinder of a degree $2$ map 
$S^{d-1} \rightarrow S^{d-1}$, and the two open ends are filled in with two triangulated $d$-balls. 

\item
For every $k \geq 1$ we define the simplicial sets $X_k$ to have one vertex $*$, no non-degenerate simplices up to dimension $d-2$, $k$ non-degenerate $(d-1)$-simplices $\sigma_1,\ldots, \sigma_k$ that are all spherical (that is, for all $i,j$, $d_i \sigma_j=*$ is
the degeneracy of the only vertex of $X_k$), and $k+1 \,$  $d$-simplices $A,C_1,C_2,\ldots, C_{k-1},B$ such that
\begin{itemize}
\item $d_0 A=\sigma_1$, $d_j A=*$ for $j>0$,
\item $d_0 C_i=\sigma_i$, $d_1 C_i=\sigma_{i+1}$, $d_2 C_i=\sigma_i$ and $d_j C_i=*$ for $j>2$, and
\item $d_{0} B=\sigma_k$, $d_j B=*$ for $j>0$.
\end{itemize}
$X_k$ does not have any non-degenerate simplices of dimension larger than $d$. 
The relations of a~simplicial set are satisfied, because $d_i d_j$ is trivial in all cases. 

The boundary operator in the associated normalised chain complex $C_*(X_i)$ 
acts on basis elements as 
\begin{itemize}
\item $\partial A=\sigma_1$
\item $\partial C_i=2\sigma_i-\sigma_{i+1}$, and
\item $\partial B=\sigma_k$.
\end{itemize}
To see that $X_k$ is $(d-1)$-connected for $d>2$, it is enough to prove that $H_{d-1}(X_k)$ is trivial (by $1$-reduceness and Hurewicz theorem). This is true, because $\sigma_1$ is the boundary of $A$
and for $i>1$, $\sigma_i$ is the boundary of the chain $$2^{i-1} A - 2^{i-2} C_1 - \ldots - 2 C_{i-2}-C_{i-1}.$$
In the case $d=2$, $X_k$ is not $1$-reduced, but we can show $1$-connectedness similarly as in the proof of the first part: up to homotopy,
$X_k$ consists of two discs with boundaries together via a~map of degree $2^{k-1}$.

There are no non-degenerate $(d+1)$-simplices, so $H_{d}(X_k)\simeq Z_{d}(X_k)$ and a~simple computation shows that every cycle is a~multiple of
$$
2^{k-1} A - 2^{k-2} C_1 - 2^{k-3} C_2 -\ldots - C_{k-1} - B.
$$
The computer representation of $X_k$ has size that grows linearly with $k$, but 
the coefficients of homology generators grow exponentially with $k$, so they grow exponentially with $\size(X_k)$.
\end{enumerate}
\end{proof}

\heading{Discussion on optimality.}
\label{page:optimality}
If $d=2$ and $X$ is a~$1$-reduced simplicial set, then generators of $H_2(X)$ can be computed via the Smith normal form of the differential
$\partial_3: C_3(X)\to C_2(X)$. Using canonical bases, the matrix of $\partial_3=d_0-d_1+d_2-d_3$ satisfies that the sum of absolute values 
over each column is at most $4$. We were not able to find any infinite family of such matrices so that the smallest coefficient in any set
of homology generating cycles grows exponentially with the size of $X$ (that is, the size of the matrix). 
However, if a~set of homology-generating cycles with subexponential coefficients always exists and can be found algorithmically in polynomial time, our main algorithm given as Theorem~\ref{t:main} is optimal in this case as well. This is because the exponential complexity of the algorithm only appears in the geometric realisation of an element of $GX_1^{sph}$ with large (exponential) exponents (see ``Arrow 3'' in Section~\ref{s:berger}), and the only source of such exponents is the homology $H_1(AX)\simeq H_2(X)$. 

\section{Effective Hurewicz Inverse}
\label{s:berger}
Here we will prove Theorem~\ref{t:eff_hur} by directly describing the algorithm proposed in~\cite{Berger_thesis} and analysing its running time.
\begin{definition}
Let $G$ be a simplicial group. 
Then the Moore complex $\tilde{G}$ is a~(possibly non-abelian) 
chain complex defined by $\tilde{G}_i\defeq G_i\cap (\bigcap_{j>0} \ker{d_j})$ endowed with the differential 
$d_0: \tilde{G}_i\to\tilde{G}_{i-1}$.
\end{definition}
It can be shown that $d_0 d_0=1$ in $\tilde{G}$ and that $\mathrm{Im}(d_0)$ is a~normal subgroup of 
$\ker d_0$ so that the homology $H_*(\tilde{G})$ is well defined. 
\begin{definition}
Let $F$ be a~$0$-reduced simplicial set, $GF$ the associated simplicial group from Def.~\ref{d:G-constr}, and $\widetilde{GF}$ its Moore complex.
We define $AF$ to be the Abelianization of $GF$ and $\widetilde{AF}$ to be the Moore complex of $AF$.
The simplicial group $AF$ is also endowed with a~chain group structure via $\partial = \sum_j (-1)^j d_j$.
If $\sigma\in F_k$, we will denote by $\overline{\sigma}$ the corresponding simplex in $GF_{i-1}$, resp. $AF_{i-1}$.
\end{definition}
Note that, following Def.~\ref{d:G-constr}, the ``last'' differential $d_k \overline{\sigma}$ in $AF_k$ equals $\overline{d_k \sigma} - \overline{d_{k+1}\sigma}$.
Clearly, the Abelianization map $p: GF\to GF/[GF,GF]=AF$ takes $\widetilde{GF}$ into $\widetilde{AF}$. 

Kan showed in~\cite{Kan:Hurewicz} that for $d>1$ and a~$(d-1)$-connected simplicial set $F$, the Hurewicz isomorphism can be identified with
the map $H_{d-1}(\widetilde{GF})\to H_{d-1}(\widetilde{AF})$ induced by Abelianization, whereas these groups are naturally isomorphic 
to $\pi_d(F)$ and $H_d(F)$, respectively. 
Our strategy is to construct maps representing the isomorphisms $1,2,3$ in the commutative diagram
\begin{equation}\label{e:arrows}
\xymatrix{
\pi_{d}(F) && H_{d}(F)\ar@{.>}[ll]_{h^{-1}} \ar[d]^1\\
H_{d-1}(\widetilde{GF}) \ar[u]^{3} &&H_{d-1}(\widetilde{AF}). \ar[ll]^{2}
}
\end{equation}
Here $h$ stands for the Hurewicz isomorphism, $1$ is induced by a homotopy equivalence of chain complexes, $2$ is the inverse of $H_{d-1}(p)$ where 
$p$ is the Abelianization,
and $3$ represents an isomorphism between the $(d-1)$'th homology of $\widetilde{GF}$ (that models the loop space of $F$) and $\pi_{d}(F)$.
The algorithms representing $1,2,3$ will act on representatives, that is, $1$ and $2$ will convert cycles to cycles and $3$ will convert a cycle to a simplicial map
$\Sigma^d\to F$ where $|\Sigma^d|=S^d$. 
In what follows, we will explicitly describe the effective versions of $1,2,3$ and show that the underlying algorithms are polynomial 
for arrows 1,2 and exponential for arrow~3.

\subsection*{Arrow 1.}
Let $F$ be a~$0$-reduced simplicial set, $C_*(F)$ be the (unreduced) chain complex of $F$ and $AF_{*-1}$ the shifted chain complex of $AF$ defined by 
$(AF_{*-1})_i\defeq AX_{i-1}$. 
As a chain complex, $AF_{*-1}$ is a subcomplex of $C_*(F)$ generated by all simplices that are not in the image of the last degeneracy. 
Let $\widetilde{AF}_{*-1}$ be the Moore complex of $AF_{*-1}$.
\begin{lemma}
\label{l:red_to_Moore}
There exists a~polynomial-time strong chain deformation retraction 
$(f,g,h): C_*(F)\to \widetilde{AF}_{*-1}$.  That is, $f: C_*(F)\to \widetilde{AF}_{*-1}$, $g: \widetilde{AF}_{*-1}\to C_*(F)$ are polynomial-time chain-maps 
and $h: C_*(F)\to C_{*+1}(F)$ is a polynomial map such that $fg=\mathrm{id}$ and $gf-\mathrm{id}=h\partial + \partial h$.
\end{lemma}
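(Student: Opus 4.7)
The plan is to construct the deformation retraction in two stages, leveraging the Dold--Kan correspondence applied to the simplicial abelian group $AF$.

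First, I would establish a chain isomorphism $\phi\colon NC_*(F)\to\widetilde{AF}_{*-1}$, where $NC_*(F)$ denotes the normalized chain complex of $F$: the free abelian group on the non-degenerate simplices of $F$, with the standard simplicial boundary and degenerate faces discarded. By the Dold--Kan theorem, the Moore subcomplex $\widetilde{AF}=\bigcap_{j>0}\ker d_j$ (with differential $d_0$) is canonically chain-isomorphic to the normalized quotient $AF/D(AF)$. Under the identification $\overline{\sigma}\leftrightarrow\sigma$, the degenerate elements in $AF_n$ are exactly $\overline{s_i\sigma}$ for $0\leq i\leq n-1$ (the top case $\overline{s_n y}$ being already zero by the defining relation for $GF$), so $AF_{*-1}/D(AF)_{*-1}$ coincides degree-by-degree with $NC_*(F)$. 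The inverse $\phi^{-1}$ is the canonical projection; the map $\phi$ is computed by the standard inductive normalization, lifting $[c]$ to $c\in AF$ and subtracting appropriate $s_{j-1}d_j c$ corrections to annihilate each higher face operator without changing the class modulo $D(AF)$. In fixed dimension both maps reduce to a bounded number of face and degeneracy operations per input simplex and are polynomial-time.

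Second, I would invoke the classical strong chain deformation retraction $(f_0,g_0,h_0)\colon C_*(F)\to NC_*(F)$ from unnormalized to normalized chains, whose explicit formulas are standard and given e.g.\ in~\cite{may,goerssjardine}. The retraction $f_0$ is the canonical projection killing degenerate simplices, $g_0$ is the inclusion of non-degenerate simplices, and $h_0$ is a chain homotopy built from iterated degeneracies. In fixed dimension each is computable by polynomially many face and degeneracy operations per input simplex, hence polynomial-time.

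Setting $f\defeq \phi\circ f_0$, $g\defeq g_0\circ \phi^{-1}$ and $h\defeq h_0$ then yields the required triple: the identity $fg=\mathrm{id}$ follows from $\phi\phi^{-1}=\mathrm{id}$ and $f_0g_0=\mathrm{id}$, while $gf-\mathrm{id}=g_0f_0-\mathrm{id}=h_0\partial+\partial h_0=h\partial+\partial h$ is inherited from the second stage. All three maps are polynomial-time, being compositions of polynomial-time maps. The main technical point to check is that the inductive normalization computing $\phi$ produces only polynomially many terms per input simplex in fixed dimension---which it does, since the procedure terminates in at most $n$ rounds and each round multiplies the number of terms by a constant bound depending only on $n$; everything else is routine bookkeeping.
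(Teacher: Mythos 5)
Your overall architecture is sound and, unwound, is essentially the paper's argument: the Dold--Kan identification of $\widetilde{AF}_{*-1}$ with the normalized complex, plus a normalization retraction built from degeneracy corrections. However, there is a genuine error in your second stage. The map $g_0$ you propose --- ``the inclusion of non-degenerate simplices'' $NC_*(F)\to C_*(F)$ --- is not a chain map. The boundary of a non-degenerate simplex $\sigma$ may contain degenerate faces: if, say, $d_1\sigma=s_0v$ while $d_0\sigma,d_2\sigma$ are non-degenerate, then the normalized boundary of $[\sigma]$ is $[d_0\sigma]+[d_2\sigma]$, whereas $\partial\sigma=d_0\sigma-s_0v+d_2\sigma$ in $C_*(F)$, so $g_0\partial_N\neq\partial g_0$. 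Consequently $g_0f_0-\mathrm{id}$ is not a chain map, while $h_0\partial+\partial h_0$ always is, so no homotopy $h_0$ whatsoever can satisfy the claimed identity; the step fails, not just its justification. A correct section of the projection onto the normalized complex must carry degeneracy correction terms, e.g.\ $\overline{\sigma}\mapsto\sigma-s_{k-1}d_k\sigma$ at the first step and further corrections of the form $x\mapsto x-s_{k-p-1}d_{k-p}x$ thereafter.

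Once you supply those corrections, your two stages merge: the ``inductive normalization'' you invoke to compute $\phi$ in the first stage is exactly the missing chain-map section, and composing the elementary retractions it is built from (together with the corresponding homotopies $h_p(x)=\pm s_{k-p}x$, assembled as $h=h_0+g_1h_1f_1+(g_1g_2)h_2(f_2f_1)+\cdots$, which is finite in each fixed degree) is precisely the paper's proof. The complexity analysis you give (boundedly many face/degeneracy operations per simplex in each fixed degree) is fine and matches the paper's. So the gap is localized and repairable, but as written the key identity $gf-\mathrm{id}=h\partial+\partial h$ does not hold for your $g$.
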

\begin{proof}
First we will define a chain deformation retraction from $C_*(F)$ to $AF_{*-1}$ represented by $f_0: C_*(F)\to AF_{*-1}$, $g_0: AF_{*-1}\to C_*(F)$ and
$h_0: C_*(F)\to C_{*+1}(F)$.

The chain complex $AF_{*-1}$ consists of Abelian groups $AF_{k-1}$ freely generated by $k$-simplices 
in $F$ that are not in the image of the last degeneracy $s_{k-1}$.
On generators, we define $f_0(\sigma)=\overline{\sigma}$ whenever $\sigma$ is a $k$-simplex not in $\mathrm{Im}(s_{k-1})$ and $f_0(x)=0$ otherwise. 
Deciding whether $\sigma$ is in the image of $s_{k-1}$ amounts to deciding $\sigma=s_{k-1} d_k \sigma$ which can be done in time polynomial
in $\size(I)+\size(\sigma)$, the polynomial depending on $k$. It follows that $f_0$ is a~locally polynomial map.

The remaining maps are defined by $g_0(\overline{\sigma})\defeq \sigma-s_{k-1} d_k \sigma$ and $h_0(\sigma)\defeq (-1)^k s_k \sigma$. These maps are locally polynomial as well
and it is a~matter of straight-forward computations to check that $f_0$ and $g_0$ are chain maps, $f_0 g_0=\mathrm{id}$ and $g_0 f_0-\mathrm{id}=h_0\partial + \partial h_0$.

Further, we define another chain deformation retraction from $AF$ to $\widetilde{AF}$.
For each $p\geq 0$, let $A^p$ be a chain subcomplex of $AF$ defined by

$$(A^p)_k\defeq \{x\in AF_{k}: d_i x=0\quad \text{for} \quad i>\max\{k-p,0\} \,\}$$
that is, the kernel of the $p$ last face operators, not including $d_0$ ($d_i$ refers here to the face operators in $AF$). 
Then $A^{p+1}$ is a chain subcomplex of $A^p$ and we define the maps 
$f_{p+1}: (A^p)_k\to (A^{p+1})_k$ by $f_{p+1}(x)=x-s_{k-p-1} d_{k-p} x$ whenever $k-p>0$, and $f_{p+1}(x)=x$ otherwise; $g_{p+1}: A^{p+1}\to A^p$ will be an inclusion, 
and $h_{p+1}: (A^p)_k\to (A^p)_{k+1}$ via $h_{p+1}(x)=(-1)^{k-p} s_{k-p}x$ if $k-p>0$ and $0$ otherwise. A simple calculation shows that $f_{p+1}, g_{p+1}$ are
chain maps, $f_{p+1} g_{p+1} = \mathrm{id}$, $g_{p+1} f_{p+1}-\mathrm{id}=h_{p+1}\partial + \partial h_{p+1}$ and it is clear that $f_{p+1}, g_{p+1}, h_{p+1}$ 
are polynomial-time maps.

By definition, the Moore complex $\widetilde{AF}=\cap_{p>0} A^p$. The strong chain deformation retraction $(f,g,h)$ from $C_*(F)$ to $\tilde{AF}_{*-1}$ is then defined by
the infinite compositions
\begin{align*}
& f\defeq \ldots f_{k+1} f_k \ldots f_1 f_0 \\
& g\defeq g_0 g_1 \ldots g_k g_{k+1} \ldots
\end{align*}
and the infinite sum
$$
h=h_0 + g_1 h_1 f_1 + (g_1 g_2) h_2 (f_2 f_1) + \ldots 
$$
which are all well-defined, because when applying them to an element $x$, only finitely many of $f_j, g_j$ differ from the identity map and only finitely many $h_j$ are nonzero. 
These are the maps $f,g,h$ from the lemma and we need to show that if the degree $k$ is fixed, then we can evaluate $f,g,h$ on $C_k(F)$ resp. $\tilde{AF}_{k-1}$ in
time polynomial in the input size. However, for fixed $k$ , the definition of $f,g,h$ includes only $f_i, g_i, h_i$ for $i<k$. Then $f,g$ are composed of $k$ polynomial-time maps
and $h$ is a sum of $k$ polynomial-time maps.
\end{proof}

The polynomial-time version of arrow 1 is then induced by applying the map $f$ from Lemma~\ref{l:red_to_Moore}.

\subsection*{Arrow 2.}
\begin{lemma}[Boundary certificate]
\label{l:boundary_cer}
Let $d>1$ be fixed and let $F$ be a~$(d-1)$-connected simplicial set with polynomial-time homology.
There is an algorithm that, for $j<d-1$ and a~cycle $z\in Z_{j}(\widetilde{AF})$, computes an element $c^A(z)\in \widetilde{AF}_{j+1}$ such that
$d_0 c^{A}(z)=z$. The running time is polynomial in $\size(z)+\size(I)$.
\end{lemma}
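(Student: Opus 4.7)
The plan is to reduce the problem to a linear algebra computation in a finitely generated effective chain complex, using the polynomial-time homology of $F$. Existence of $c^A(z)$ is automatic: the strong deformation retraction of Lemma~\ref{l:red_to_Moore} yields $H_j(\widetilde{AF}) \cong H_{j+1}(C_*(F)) = H_{j+1}(F)$, which vanishes since $F$ is $(d-1)$-connected and $j+1 \leq d-1$. The task is to make this existence effective while controlling the size of the output.

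First I would transport $z$, viewed as a cycle of degree $j+1$ in the shifted Moore complex $\widetilde{AF}_{*-1}$, into $C_*(F)$ by applying the reduction map $g$ of Lemma~\ref{l:red_to_Moore}, obtaining a cycle $g(z) \in Z_{j+1}(C_*(F))$. Next I would push this cycle through the polynomial-time homology reductions $C_*(F) \lredu \tilde{C} \redu EC$, landing at a cycle $\xi := f_2(g_1(g(z))) \in Z_{j+1}(EC)$, where $f_2$ and $g_1$ are the obvious polynomial-time reduction maps. Because $EC$ is globally polynomial-time, the matrix of $\partial : EC_{j+2} \to EC_{j+1}$ with respect to the canonical bases can be computed in time polynomial in $\size(I)$, exactly as in the proof of Lemma~\ref{l:hom_gen}. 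Using the polynomial-time Smith normal form algorithm of~\cite{KannanBachem}, I would then solve the integer linear system $\partial v = \xi$ for a preimage $v \in EC_{j+2}$; its existence is guaranteed by $H_{j+1}(EC) \cong H_{j+1}(F) = 0$ and its bit-size is polynomial in the input.

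To finish, I would pull $v$ back through the reductions via the standard formula. Setting $w := f_1\bigl(g_2(v) - h_2(g_1(g(z)))\bigr)$, one checks using $f_2 g_2 = \mathrm{id}$, $g_2 f_2 - \mathrm{id} = h_2 \partial + \partial h_2$, and $f_1 g_1 = \mathrm{id}$ that $\partial w = g(z)$. Then define
\[
c^A(z) := f(w) \in \widetilde{AF}_{j+1},
\]
where $f$ is the reduction map from Lemma~\ref{l:red_to_Moore}. Since $f$ is a chain map and $fg = \mathrm{id}$, we get $d_0 c^A(z) = f(\partial w) = f(g(z)) = z$, as required, where I have used that the differential of $\widetilde{AF}$ coincides with the single face operator $d_0$.

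I do not expect a serious obstacle here: each of the reduction maps from Lemma~\ref{l:red_to_Moore} and from polynomial-time homology is polynomial-time, the Smith normal form step is polynomial-time, and there are only finitely many composed maps, so a uniform polynomial bound on the size of $c^A(z)$ in terms of $\size(z) + \size(I)$ follows by routine bookkeeping. The only mildly delicate point is to verify that the standard pull-back formula genuinely lives in the shifted Moore complex (i.e., that $f(w)$ lands in $\widetilde{AF}_{j+1}$ rather than merely in $AF_{j+1}$), which is built into Lemma~\ref{l:red_to_Moore} by construction.
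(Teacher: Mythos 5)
Your proposal is correct and follows essentially the same route as the paper: transport $z$ through the reduction data to the globally polynomial-time complex, solve the resulting Diophantine system $\partial v = \xi$ via the Smith normal form algorithm of~\cite{KannanBachem}, and pull back with the homotopy-correction term before projecting into the Moore complex via Lemma~\ref{l:red_to_Moore}. Indeed, your formula $c^A(z)=f\bigl(f_1(g_2(v)-h_2(g_1(g(z))))\bigr)$ coincides with the paper's $f'g''(t)-f'h''g'(z)$ after composing the reduction maps, so the two arguments differ only in bookkeeping.
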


\begin{proof}
First note that the $(d-1)$-connectedness of $F$ implies that $H_{j+1}(F)\simeq H_{j}(\widetilde{AF})$ are trivial for $j<d-1$, so each cycle in these dimensions is a boundary.

By assumption, $F$ has a~polynomial-time homology, which means that there exists a~globally polynomial-time chain complex $E_*F$, a locally polynomial-time 
chain complex $Y$ and polynomial-time reductions from $Y$ to $C_*(F)$ and $E_*F$
$$
E_* F \lreduP Y \reduP C_*(F).
$$
Let $(f',g',h')$ be chain homotopy equivalence of $Y$ and $\widetilde{AF}_{*-1}$ defined as the composition of $Y\redu C_*(F)$ and the chain homotopy
equivalence of $C_*(F)$ and $\widetilde{AF}_{*-1}$ described in Lemma~\ref{l:red_to_Moore}.
Further, let 
$f'', g'', h''$ be the maps defining the reduction $Y\redu E_*F$: all of these maps are polynomial-time.

Let $j<d-1$ and $z\in Z_j(\widetilde{AF})$, $z=\sum_j k_j y_j$. 
Then the element $f'' g'(z)$
is a cycle in $E_{j+1}F$ and can be computed in time polynomial in $\size(z)+\size(I)$. 
In particular, the size of $f''g'(z)$ is bounded by such polynomial. 
The number of generators of $E_{j+2}F$ and $E_{j+1}F$ is polynomial in $\size(I)$
and we can compute, in time polynomial in $\size(I)$, the boundary matrix of $\partial: E_{j+2}F\to E_{j+1}F$ with respect to the generators.

Next we want to find an element $t\in E_{j+2}F$ such that $\partial t=f''g'(z)$. 
Using generating sets for $E_{j+2}F$, $E_{j+1}F$, this reduces to
a~linear system of Diophantine equations and can be solved in time polynomial
in the size of the $\partial$-matrix and the right hand side $f''g'(z)$~\cite{KannanBachem}.

Finally, we claim that $c^A(z)\defeq f'g''(t)-f'h''g'(z)$ is the desired element mapped to $z$ by the differential in $\widetilde{AF}$.
This follows from a direct computation
\begin{align*}
\partial c^A(z)\defeq  & \partial f'g''(t) - \partial f'h''g'(z) =\\
=&f'g''(\partial t)-\partial f'h'' g'(z) =\\
=&  f'g''f''g'(z)-\partial f'h''g'(z)=\\
=& f'(h''\partial + \partial h''+\mathrm{id})g'(z)-\partial f'h''g'(z)=\\
=& f'h''g'\partial z + \partial f'h''g'(z) + f'g'(z) - \partial f'h''g'(z)=\\
=& 0 + f'g'(z) = z
\end{align*}
The computation of $t$ as well as all involved maps are polynomial-time, hence the computation of $c^A(z)$ is polynomial too. 
\end{proof}

The next lemma will be needed as an auxiliary tool later.
\begin{lemma}
\label{l:comm_decomposition}
Let $S$ be a countable set with a given encoding, $G$ be the free (non-abelian) group generated by $S$, and define 
$\size(\prod_j s_j^{k_j})\defeq \sum_j (\size(s_j)+\size(k_j))$. 
Let $G'\defeq [G,G]$ be its commutator subgroup.

Then there exists a~polynomial-time algorithm that for an element 
$ g=\prod_j s_j^{k_j} $ in $G'\subseteq G$, computes elements $a_i, b_i\in G$ such that $g=\prod_j [a_j, b_j]$.
\end{lemma}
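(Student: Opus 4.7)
The plan is to decompose $g$ into a product of commutators by sorting its defining word via repeated swaps of adjacent blocks, each swap producing one commutator. The key identity is
$$u v \;=\; v u \cdot [u^{-1}, v^{-1}], \qquad u, v \in G,$$
which follows immediately from $v u \cdot u^{-1} v^{-1} u v = u v$. Since commutators are closed under conjugation, $B^{-1}[a,b]B = [B^{-1}a B,\, B^{-1}b B]$, a commutator produced inside a larger word can always be pushed to one end without leaving the set of commutators.

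Fix an arbitrary total order on $S$ and maintain the invariant $g = W \cdot C$, where $W$ is a word in the generators (initially the input $\prod_j s_j^{k_j}$) and $C$ is an accumulated product of commutators (initially empty). While $W$ contains two adjacent blocks $u = s^{k}$, $v = t^{\ell}$ with $s$ larger than $t$ in the chosen order, write $W = A\, u v\, B$ and replace
$$W \;\mapsto\; A\, v u\, B, \qquad C \;\mapsto\; \bigl[B^{-1} u^{-1} B,\; B^{-1} v^{-1} B\bigr] \cdot C,$$
which preserves $W \cdot C = g$ by the identity above together with conjugation of commutators. After at most $\binom{n}{2}$ such swaps, where $n$ is the number of blocks in the input, $W$ is sorted by generator; collecting like powers it becomes $\prod_{s \in S} s^{e_s}$, with $e_s = \sum_{j : s_j = s} k_j$ the total exponent of $s$. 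The hypothesis $g \in G'$ says exactly that the abelianization of $g$ is trivial, so each $e_s = 0$, whence $W = 1$ and $g = C$, the desired product of commutators.

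The bubble-sort only permutes blocks, so $W$ retains size bounded by $\size(g)$ throughout, and therefore each commutator produced has its two arguments of size at most $3\,\size(g)$. With $O(n^2) = O(\size(g)^2)$ swaps, both the running time and the total output size are polynomial in $\size(g)$. The only real pitfall, and the one point to get right, is purely a matter of representation: one must never expand a power $s^k$ into $k$ separate symbols (which would make $\size(k)$ unary and blow up exponentially), but instead store each conjugated commutator symbolically as the pair of words $(B^{-1} u^{-1} B,\, B^{-1} v^{-1} B)$ in the succinct block encoding of $G$. With this discipline the algorithm runs in manifestly polynomial time.
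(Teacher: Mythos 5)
Your proof is correct and follows essentially the same bubble-sort strategy as the paper: repeatedly swap adjacent blocks via $uv=vu\,[u^{-1},v^{-1}]$, accumulate the resulting commutators at one end of the word, and use $g\in G'$ to conclude that the sorted regular part collapses to the identity. The only (harmless) difference is in the bookkeeping for moving each freshly created commutator to the accumulator: you conjugate it past the suffix $B$, producing one commutator per swap, whereas the paper commutes it past the suffix at the cost of a second, nested commutator $[[y^{-1},x^{-1}],\mathrm{Rest}^{-1}]$ per swap; both yield the same polynomial bound.
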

In other words, we can decompose commutator elements into simple commutators in polynomial-time at most.
\begin{proof}
Let us choose a~linear ordering on $S$ and let $g=\prod_j s_j^{k_j}$ be in $G'$: that is, for each $j$, the exponents $\{k_{j'}:\,s_{j'}=s_j\}$
sum up to zero. We will present a~bubble-sort type algorithm for sorting elements in $g$. 
Going from the left to right, we will always swap $s_j^{k_j}$ and $s_{j+1}^{k+1}$ whenever $s_{j+1}<s_j$. Such swap always creates a commutator, but that will immediately be moved to the initial segment of commutators. 

More precisely, assume that Init is the initial segment, $x=s_j^{k_j}$ and $y=s_{j+1}^{k_{j+1}}$ should be swapped, Rest
represent the segment behind $y$, and Commutators is a final segment of commutators.
The swapping will consists of these steps:
\begin{align*}
& \hbox{}\hskip 15pt \text{Init} \,\, x \,\, y \,\, \text{Rest} \,\, \text{Commutators} \\
& \mapsto \text{Init} \,\, y \,\, x \,\, [x^{-1}, y^{-1}] \,\,  \text{Rest} \,\,\text{Commutators} \\
& \mapsto \text{Init} \,\, y \,\, x \,\, \text{Rest} \,\, \big([x^{-1}, y^{-1}] \,\, [[y^{-1}, x^{-1}], \text{Rest}^{-1}] \,\, \text{Commutators} \big)
\end{align*}
where the parenthesis enclose a new segment of commutators. Before the parenthesis, $x$ and $y$ are swapped. 
Each such swap requires enhancing the commutator section by two new commutators of size at most $\size(g)$, hence each such swap
has complexity linear in $\size(g)$. 

Let as call everything before the commutator section a~``regular section''.
Going from left to right and performing these swaps will ensure that the largest element will be at the end of the regular section.
But no later then that, the largest element $y_{\text{largest}}$ disappears from the regular section completely, because all of its exponents add up to $0$.
Again, starting from the left and performing another round of swaps will ensure that the second-largest elements disappear from the regular section;
repeating this, all the regular section will eventually disappear which will happen in at most $\size(g)^2$ swaps in total. Each swap has complexity linear
in $\size(g)$ and the overall time complexity is not worse than cubic.
\end{proof}

\begin{lemma}
\label{l:simple_contraction}
Assume that $F$ is a parametrized simplicial set with polynomially contractible loops.
Let $k>0$, $\gamma\in GF_k$ be spherical and $\alpha\in GF_k$ is arbitrary. There is a~polynomial-time algorithm that computes $\delta\in GF_{k+1}'$ such that
$d_0 \delta=[\alpha, \gamma]$ and $d_i \delta=1$ for all $i>0$.
\end{lemma}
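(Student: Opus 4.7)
The element $[\alpha,\gamma]$ already lies in the Moore subcomplex $\tilde{GF}_k$ and is in fact a spherical cycle: for every $i$ (including $i=0$) we have $d_i[\alpha,\gamma]=[d_i\alpha,d_i\gamma]=[d_i\alpha,1]=1$ by sphericity of $\gamma$. Homologically $[[\alpha,\gamma]]\in H_k(\tilde{GF})\cong\pi_{k+1}(F)$, and writing $[\alpha,\gamma]=(\alpha\gamma\alpha^{-1})\cdot\gamma^{-1}$ as a product of two spherical cycles, one obtains $[[\alpha,\gamma]]=[\alpha\gamma\alpha^{-1}]-[\gamma]$ in the \emph{abelian} group $\pi_{k+1}(F)$ (abelian because $k\ge 1$). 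Since inner conjugation in a connected simplicial group acts trivially on $\pi_{\ge 1}$, this class vanishes. So a $\delta$ with the required properties must exist; the task is to exhibit it explicitly in polynomial time.

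My plan is to build $\delta$ in two stages. First I take the telescoping commutator
\[
\tilde\delta \;\defeq\; \prod_{j=0}^{k}\bigl[s_j\alpha,\,s_j\gamma\bigr]^{(-1)^j},
\]
which manifestly lies in $GF'_{k+1}$. Using the simplicial identities together with sphericity of $\gamma$ (so that $d_is_j\gamma=\gamma$ exactly for $i\in\{j,j+1\}$ and $=1$ otherwise), a direct computation gives $d_i[s_j\alpha,s_j\gamma]=[\alpha,\gamma]$ when $i\in\{j,j+1\}$ and $=1$ in all other cases. The consecutive pairs $j=i-1,i$ then telescope, yielding $d_0\tilde\delta=[\alpha,\gamma]$, $d_i\tilde\delta=1$ for $1\le i\le k$, but a residual topmost face $d_{k+1}\tilde\delta=[\alpha,\gamma]^{(-1)^k}$.

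Second, I construct a correction term $\xi\in GF'_{k+1}$ with $d_i\xi=1$ for $0\le i\le k$ and $d_{k+1}\xi=[\alpha,\gamma]^{-(-1)^k}$; this is the hard part. The correction cannot itself have the shape $\prod(s_j[\alpha,\gamma])^{\pm 1}$, since such telescopes always leak at an extreme face. Instead I would exploit the vanishing of $[[\alpha,\gamma]]$ in $\pi_{k+1}(F)$ directly: the explicit homotopy witnessing $[\alpha\gamma\alpha^{-1}]=[\gamma]$ is given by the conjugation expression $s_0\alpha\cdot s_0\gamma\cdot (s_0\alpha)^{-1}\cdot(s_0\gamma)^{-1}$ paired with degenerate copies $s_j\alpha,s_j\gamma$ at higher indices, whose nontrivial faces affect only the top dimension. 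After expanding these products and applying a bubble-sort rearrangement (in the spirit of Lemma~\ref{l:comm_decomposition}) to collect all factors as commutators of the form $[s_i\alpha,s_j\gamma]^{\pm 1}$, the result is an element of $GF'_{k+1}$ whose faces I then verify one by one using the simplicial identities; the verification is mechanical but must be checked in all $k+2$ positions. Setting $\delta\defeq\tilde\delta\cdot\xi$ produces the sought element.

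For the complexity analysis, observe that both $\tilde\delta$ and $\xi$ are products of $O(k)$ commutators, each built by applying $O(1)$ degeneracy operators to $\alpha$ or $\gamma$. Since $k$ is bounded by the fixed parameter $d$, the bit-size of $\delta$ is polynomial in $\size(\alpha)+\size(\gamma)+\size(I)$, and writing it down takes polynomial time. The main obstacle in turning this plan into a proof is the explicit bookkeeping needed for $\xi$: the simplicial identities $d_is_j$ branch into three cases, and keeping track of the non-commutative conjugation terms $s_{j-1}d_i\alpha$ arising from arbitrariness of $\alpha$ requires a careful choice of indices so that everything cancels in positions $0,\dots,k$ while producing the correct contribution at position $k+1$.
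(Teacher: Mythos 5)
Your first stage is sound: since each $s_j$ is a group homomorphism, your $\tilde\delta=\prod_{j=0}^{k}[s_j\alpha,s_j\gamma]^{(-1)^j}$ is just the alternating product $\prod_j (s_j[\alpha,\gamma])^{(-1)^j}$ of the degeneracies of the spherical element $[\alpha,\gamma]$, and it does satisfy $d_0\tilde\delta=[\alpha,\gamma]$ and $d_i\tilde\delta=1$ for $1\le i\le k$, with the residue $[\alpha,\gamma]^{(-1)^k}$ at the top face. But the proposal stops exactly where the lemma begins: the correction term $\xi$ is never constructed, and you say yourself that this is ``the hard part'' and ``the main obstacle''. The sketch you give for $\xi$ (expand a conjugation expression and bubble-sort it into commutators, then verify all faces) is not a construction, and it is aimed in the wrong direction: the vanishing of $[[\alpha,\gamma]]$ in $\pi_{k+1}(F)$ only guarantees \emph{existence} of $\delta$, whereas producing it algorithmically is the entire content of the lemma. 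Tellingly, your construction never invokes the hypothesis of polynomially contractible loops, which is the one piece of external input that makes an explicit contraction possible; formal manipulation of degeneracies of $\alpha$ and $\gamma$ alone always leaks at an extreme face, as you yourself observed.

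The paper's proof supplies exactly this missing ingredient. Instead of pairing $s_j\gamma$ with $s_j\alpha$, it pairs it with elements $\beta_j\in GF_{k+1}$ that form a ``discrete path'' from $\alpha$ to the identity: one sets $\beta_k:=s_0^k\,c_0(d_0^k\alpha)$ --- this is where the loop contraction $c_0$ enters, applied to the $0$-simplex $d_0^k\alpha\in GF_0$ --- and then inductively $\beta_{j-1}:=(s_jd_j\beta_j)\cdot(s_j\alpha^{-1})\cdot(s_{j-1}\alpha)$. These satisfy $d_0\beta_0=\alpha$, $d_j\beta_j=d_j\beta_{j-1}$ for $1\le j\le k$, and crucially $d_{k+1}\beta_k=s_0^k\,d_1c_0(d_0^k\alpha)=1$. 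The element $\delta:=[\beta_0,s_0\gamma]\,[\beta_1,s_1\gamma]^{-1}\cdots[\beta_k,s_k\gamma]^{\pm1}$ then telescopes in faces $0,\dots,k$ exactly as in your computation (since $d_is_j\gamma=\gamma$ for $i\in\{j,j+1\}$ and $=1$ otherwise), while the top face contributes $[d_{k+1}\beta_k,\gamma]^{\pm1}=[1,\gamma]^{\pm1}=1$, so no correction term is needed. To repair your argument you would have to rediscover this (or an equivalent) use of $c_0$; as written, the proposal has a genuine gap.
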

In other words, a~simple commutator of a spherical element with another element can always be ``contracted'' in $GF'$ in polynomial time. 
Our proof roughly follows the construction in Kan~\cite[Sec. 8]{Kan:Hurewicz}
\begin{proof}
For $x\in GF_0$, we will denote by $c_0 x$ the element of $\widetilde{GF}_1$ such that $d_0 c_0 x=x$: this can be computed in polynomial-time 
by the assumption on polynomial loop contractions. For the simplex $\alpha\in GF_k$, we define $(k+1)$-simplices $\beta_0,\ldots, \beta_k$ by
$\beta_k\defeq s_0^k c_0 d_0^k \alpha$ and inductively $\beta_{j-1}\defeq (s_j d_j \beta_j) \,\cdot \, (s_j \alpha^{-1})\,\cdot\,(s_{j-1} \alpha)$ for $j<k$.
Then the following relations hold:\footnote{Kan uses a~slightly different convention in~\cite{Kan:Hurewicz} but the
resulting properties are the same. The sequence $\beta_0,\ldots,\beta_k$ can be interpreted as a discrete path from $\alpha$ to the identity element.}
\begin{itemize}
\item $d_0 \beta_0=\alpha$.
\item $d_j \beta_j=d_j \beta_{j-1}$, $1\leq j\leq k$
\item $d_{k+1} \beta_k=1$.
\end{itemize}
The second and third equations are a matter of direct computation, while the first follows from the more general relation $d_0^{j+1} \beta_j=d_0^j \alpha$
which can be proved by induction. If $k$ is fixed, then all $\beta_0,\ldots,\beta_k$ can be computed in polynomial time.

The desired element $\delta$ is then the alternating product
$$
\delta\defeq [\beta_0, s_0 \gamma] \, [\beta_1, s_1\gamma]^{-1} \, \ldots \,[\beta_k, s_k\gamma]^{\pm 1}.
$$
\end{proof}
\begin{lemma}
\label{l:contraction}
Under the assumptions of Theorem~\ref{t:eff_hur}, there exist homomorphisms $c_j: {GF}_j\to {GF}_{j+1}$ for $0\leq j < d-1$
such that 
\begin{enumerate}
\item $d_0 c_j=\mathrm{id}$,
\item $d_i c_j=c_{j-1}d_{i-1}$, $0<i\leq j+1$, and
\item $c_{j} s_{i} =  s_{i+1} c_{j-1}$ for $0<j<d-1$ and $0\leq i< j$,
\item $d_1 c_0(x)=1$ for all $x\in GF_0$.
\end{enumerate}
If $d$ is fixed and $x\in GF_j$, $j<d-1$,
then $c_j(x)$ can be computed in polynomial time.
\end{lemma}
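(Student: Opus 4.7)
The proof is by induction on $j$. In the base case $j = 0$, we take the loop contraction $c_0\colon GF(I)_0\to GF(I)_1$ from Definition~\ref{d:poly-loop} and replace it by its unique extension to a group homomorphism that agrees with it on generators of the free group $GF(I)_0$; conditions (1) and (4) are preserved because $d_0, d_1$ are simplicial group homomorphisms on $GF$, and conditions (2), (3) are vacuous. Polynomial-time computability is preserved since the homomorphic evaluation on a word of length $n$ costs $n$ applications of the generator-level $c_0$.

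For the inductive step, assume $c_0,\ldots,c_{j-1}$ are polynomial-time group homomorphisms satisfying the stated identities. Since $GF(I)_j$ is free on the generators $\overline{\sigma}$ for $\sigma\in F(I)_{j+1}$ not in the image of $s_j$, it suffices to define $c_j$ on these. If $\sigma = s_i y$ for some $0 \le i < j$, condition (3) forces $c_j(s_i\overline{y}) := s_{i+1}c_{j-1}(\overline{y})$, and a direct verification with the simplicial identities and the inductive (1), (2) confirms those conditions on $c_j(s_i\overline{y})$. If $\sigma$ is non-degenerate, we must construct $c_j(\overline{\sigma})\in GF(I)_{j+1}$ with all $j+2$ faces prescribed: $a_0 := \overline{\sigma}$ and $a_k := c_{j-1}(d_{k-1}\overline{\sigma})$ for $1\le k\le j+1$. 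The inductive identities together with simplicial identities in $F$ yield $d_i a_k = d_{k-1} a_i$ for $i<k$, so $(a_0,\ldots,a_{j+1})$ is a valid boundary sphere; it bounds because $\pi_j(GF) = \pi_{j+1}(F) = 0$ (using $j+1\le d-1$ and $(d-1)$-connectedness of $F$).

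The explicit filler is built in two phases. Phase~1 iteratively fixes the first $j+1$ faces by the standard horn-filling formula in a simplicial group: put $\xi_0 := s_0 a_0$ and $\xi_k := \xi_{k-1}\cdot s_k\bigl((d_k\xi_{k-1})^{-1}a_k\bigr)$ for $k=1,\ldots,j$. The simplicial identities ($d_i s_k = s_{k-1}d_i$ for $i<k$ and $d_k s_k = \mathrm{id}$) give $d_i\xi_j = a_i$ for $0\le i\le j$, and the residual $\rho := (d_{j+1}\xi_j)^{-1}a_{j+1}\in GF(I)_j$ is spherical (all its own faces are $1$, by sphere compatibility). Phase~2 then produces $C\in GF(I)_{j+1}$ with $d_i C = 1$ for $i\le j$ and $d_{j+1}C = \rho$, so that $c_j(\overline{\sigma}) := \xi_j\cdot C$ realises all required faces. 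We compute $C$ by Abelianising $\rho$, applying Lemma~\ref{l:boundary_cer} in $\widetilde{AF}$ to obtain an Abelian primitive of the Abelianisation, lifting that primitive formally back to $GF(I)_{j+1}$, and killing the resulting commutator-valued error via a combination of Lemmas~\ref{l:comm_decomposition} and~\ref{l:simple_contraction}.

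The main obstacle is the non-Abelian lift in Phase~2: Lemma~\ref{l:simple_contraction} contracts $[\alpha,\gamma]$ only when $\gamma$ is spherical, so the decomposition from Lemma~\ref{l:comm_decomposition} must be organised carefully so that every factor commutator to be contracted has a spherical argument, as in Berger's original construction. Since $j$ is fixed, the number of commutators is polynomial in $\size(I)+\size(\overline{\sigma})$ (by Lemma~\ref{l:comm_decomposition}), each contraction is polynomial (by Lemma~\ref{l:simple_contraction}), and all other operations in the two phases are manifestly polynomial, so $c_j(\overline{\sigma})$ is computed in polynomial time; extending as a homomorphism preserves this bound. Conditions (1) and (2) hold by construction on non-degenerate generators, condition (3) is vacuous there, and condition (4) only concerns $c_0$, completing the induction.
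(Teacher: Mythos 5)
Your overall architecture is sound and diverges from the paper in an interesting way: where the paper decomposes each non-degenerate generator into a spherical and a conical part and contracts the two separately, you first run the standard Moore/horn-filling recursion $\xi_k := \xi_{k-1}\cdot s_k\bigl((d_k\xi_{k-1})^{-1}a_k\bigr)$ to match the prescribed faces $a_0=\overline{\sigma}$, $a_i=c_{j-1}(d_{i-1}\overline{\sigma})$, and reduce everything to one spherical residual. Phase 1, the treatment of degenerate generators, and the compatibility check $d_ia_k=d_{k-1}a_i$ are all correct (modulo checking that the value on a multiply-degenerate generator is independent of the chosen expression $s_iy$, which the paper handles by taking the maximal degeneracy index).

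However, Phase 2 as written has a genuine gap, in two respects. First, a face-index mismatch: your residual $\rho=(d_{j+1}\xi_j)^{-1}a_{j+1}$ must be realised as the \emph{last} face of an element whose faces $d_0,\ldots,d_j$ are trivial, but every tool you invoke is $d_0$-normalised: Lemma~\ref{l:boundary_cer} produces $t$ with $d_0t=z$, Lemma~\ref{l:simple_contraction} produces $\delta$ with $d_0\delta=[\alpha,\gamma]$ and higher faces trivial, and the lifting/normalisation trick lands in $\widetilde{GF}_{j+1}$, i.e.\ kernel of $d_i$ for $i>0$. You either need to re-run Phase 1 filling faces $j+1,\ldots,1$ so that the residual sits on face $0$ (which then matches the paper's Step 6 exactly), or prove ``last-face'' analogues of these lemmas; as stated the two phases do not compose. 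Second, and more seriously, ``killing the commutator-valued error via Lemmas~\ref{l:comm_decomposition} and~\ref{l:simple_contraction}'' does not go through with those two lemmas alone: Lemma~\ref{l:comm_decomposition} outputs simple commutators $[a_i,b_i]$ with \emph{arbitrary} entries, while Lemma~\ref{l:simple_contraction} requires one entry to be spherical. Bridging this is precisely the technical core of the paper's proof: each entry is split into spherical and conical parts, conical elements are contracted by the explicit formulas $s_0c_{j-1}(d_0x_0)$ together with the degenerate-element contraction, and Berger's identity \eqref{e:decom_com} rewrites $[x,y]$ as a product of commutators with spherical second arguments times $[\hat{x},\hat{y}]$. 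You acknowledge this obstacle but defer it to ``as in Berger's original construction''; since your outline contains none of the conical-contraction machinery that this requires, the hardest step of the lemma is asserted rather than proved. With the residual moved to face $0$ and the spherical/conical commutator argument spelled out, your route would essentially reconverge with the paper's Steps 5--6.
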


\begin{proof}
The homomorphism $c_0$ can be constructed directly from the assumption on polynomial contractibility of loops.
We have a~canonical basis of $GF_0$ consisting of all non-degenerate $1$-simplices of $F$. For $\sigma\in F_1$, we denote by $\overline{\sigma}$ the corresponding
generator of  $GF_0$.
The we define $c_0(\prod \overline{\sigma}_j^{k_j})$ to be $\prod b_j^{k_j}$ where $b_j$ is the element of $GF_1$ such that
$d_0 b_j=\overline{\sigma}_j$ and $d_1 b_j=1$.

In what follows, assume that $1\leq k<d-1$ and $c_i$ have been defined for all $i<k$. 
We will define $c_k$ in the following steps.

{\bf Step 1.} Contractible elements. 

Let $x\in GF_k$. We will say that $x$ is \emph{contractible} and $y\in GF_{k+1}$ is a~\emph{contraction of $x$}, if $d_0 y=x$ and $d_i y=c_{k-1} d_{i-1} x$ for all $i>0$.

The general strategy for defining $c_k$ will be to find a contraction $h$ for each basis element ($(k+1)$-simplex) $g\in GF_k$ and define $c_k(g)\defeq h$.
This will enforce properties $1$ and $2$. Moreover, in case when $g$ is degenerate, the contraction will be chosen in such a way that property $3$ holds too;
otherwise it holds vacuously. 
Property $4$ only deals with $c_0$ and is satisfied by the definition of loop contraction (a~polynomial-time $c_0$ 
is given as an input in Theorem~\ref{t:eff_hur}).

{\bf Step 2.} Contraction of degenerate elements.

Let $g=s_i y$ be a basis element of $GF_k$, $y\in GF_{k-1}$. Then $g$ can be uniquely expressed as $s_{j} z$ where $j$ is the maximal $i$ such that $g\in \mathrm{Im}(s_i)$.
We then define $c_k(g)\defeq s_{j+1} c_{k-1}(z)$. Note that
$$d_0 c_k(g)=d_0 s_{j+1} c_{k-1}(z)=s_j d_0 c_{k-1}(z)=s_j z=g,$$
so property $1$ is satisfied. To verify property $2$, first assume that $i\in \{j+1, j+2\}$. Then we have
$$
d_i c_k(g)= d_i s_{j+1} c_{k-1} (z)=c_{k-1} (z)=c_{k-1} d_{i-1} s_{j} z=c_{k-1} d_{i-1} g.
$$
This fully covers the case $k=1$, because then the only possibility is $j=0$ and $i\in \{1,2\}$.
Further, let $k>1$. If $i\leq j$, then we have
\begin{align*}
d_i c_k g=&  d_i c_k s_j z=d_i s_{j+1} c_{k-1}(z)=s_j d_i c_{k-1}(z) = s_j c_{k-2} d_{i-1} z=\\
=&c_{k-1} s_{j-1} d_{i-1} z =c_{k-1} d_{i-1} s_j z=c_{k-1} d_{i-1} g
\end{align*}
and if $i>j+2$, then the computation is completely analogous, using the relation $d_i s_{j+1}=s_{j+1} d_{i-1}$ instead.

So far, we have shown that $c_k(g)\defeq s_{j+1} c_{k-1} g$ is a contraction of $g$. It remains to show property $3$.
That is, we have to show that if $g=s_j z$ can also be expressed as $s_i y$, then $c_k(s_i y)=s_{i+1} c_{k-1} y$. 

The degenerate element $g$ has a unique expression $g=s_{i_u} \ldots s_{i_1} s_{i_0} v$ where $i_0<i_1<\ldots <i_u=j$ and
is expressible as $s_i x$ iff $i=i_j$ for some $j=0,1,\ldots, u$. Choosing such $i<j$, we can rewrite $g$ as
$g=s_j s_i v$ for some $v$ and then $g=s_i s_{j-1} v$, so that $y=s_{j-1} v$ and $z=s_i v$.
Then we again use induction to show
\begin{align*}
c_k(s_i y)=&s_{j+1} c_{k-1}(z)=s_{j+1} c_{k-1} s_i v=s_{j+1} s_{i+1} c_{k-2} v=  \\
=& s_{i+1} s_j c_{k-2} v=s_{i+1} c_{k-1} s_{j-1} v=s_{i+1} c_{k-1} y
\end{align*}
as required.

{\bf Step 3.} Decomposition into spherical and conical parts. 

We will call an element $\hat{x}\in GF_k$ to be \emph{conical}, if it is a product of elements that are either degenerate or in the image of $c_{k-1}$.
Let $x\in GF_k$ be arbitrary.
We define $x_k\defeq x$ and inductively $x_{i-1}\defeq x_i (s_{i-1} d_i x_i)^{-1}$. In this way we obtain $x_0,\ldots, x_n$
such that $x_i$ is in the kernel of $d_j$ for $j>i$ and $x=x_0 y$ where $y$ is a product of degenerate simplices. Further, let $x^s\defeq x_0 (c_{k-1} d_0 x_0)^{-1}$.
A simple computation shows that $x^s$ is \emph{spherical}, that is, $d_i x^s=1$ for all $i$. 
We obtain an equation $x=x^s \hat{x}$ where $\hat{x}=(c_{k-1} (d_0 x_0) y$; this is a decomposition of $x$ into a~spherical part $x^s$ and a~conical element $\hat{x}$.

We will define $c_k$ on non-degenerate basis elements $g=\overline{\sigma}$ by first decomposing 
$g=g^S \hat{g}$ into a~spherical and conical part, finding contractions $h_1$ of $g^S$ and $h_2$ of $\hat{g}$, and defining $c_k(g)\defeq h_1 h_2$. Then $c_k(g)$
is a contraction of $g$ and hence satisfies properties $1$ and $2$: property $3$ is vacuously true once $g$ is non-degenerate.

{\bf Step 4.} Contraction of the conical part.

Let $\hat{x}\defeq c_{k-1} (d_0 x_0)\, y$ be the conical part defined in the previous step. By construction, $x_0\in \tilde{GF}_k$ and $y$ is a product of degenerate elements
$s_{i_1} u_1 \ldots s_{i_l} u_l$.
We define the contraction of $c_{k-1} (d_0 x_0)$ to be 
$$\tilde{c}_{k}(c_{k-1}(d_0 x_0))\defeq s_0 c_{k-1}(d_0 x_0).$$ Note that this satisfies property $1$ as
$d_0 \tilde{c}_k c_{k-1} (d_0 x_0)=c_{k-1} (d_0 x_0)$. For property $2$, we first verify 
$$d_1 \tilde{c}_k c_{k-1} (d_0 x_0)=d_1 s_0 c_{k-1}(d_0 x_0)=c_{k-1}(d_0 x_0)=c_{k-1} d_0 c_{k-1} (d_0 x_0).$$
Not let $i\geq 2$. If $k=1$, then the remaining face operator is $d_2$ and we have
$$
d_2 \tilde{c}_1 c_0 (d_0 x_0)=d_2 s_0 c_0(d_0 x_0)=s_0 d_1 c_0(d_0 x_0)=1=c_0 d_1 c_0(d_0 x_0)
$$
using axiom $4$ for $c_0$. Finally, if $i\geq 2$ and $k\geq 2$, we have 
\begin{align*}
d_i \tilde{c}_k c_{k-1} (d_0 x_0)=&d_i s_0 c_{k-1}(d_0 x_0)=s_0 d_{i-1} c_{k-1} (d_0 x_0)=s_0 c_{k-1} d_{i-2} d_0 x_0 =\\
=& s_0 c_{k-1} d_0 d_{i-1} x_0=s_0 c_{k-1} d_0 1= 1 = c_{k-1} c_{k-2} d_0 d_{i-1} x_0=\\
=& c_{k-1} c_{k-2} d_{i-2} d_0  x_0=c_{k-1} d_{i-1} c_{k-1}(d_0 x_0),
\end{align*}
where we exploited the fact that $x_0\in \widetilde{GF}_k$ and hence $d_{u} x_0=1$ for $u\geq 2$.

The contraction of degenerate elements $y$ has already been defined in Step 2, so we can define a contraction of $c_{k-1}(d_0 x_0) y$ to be
$s_0 c_{k-1}(d_0 x_0)\,c_k(y)$.

{\bf Step 5.} Contraction of commutators. 

Let $g'\in GF_k'$ be an~element of the commutator subgroup. By Lemma~\ref{l:comm_decomposition}, we can algorithmically
decompose $g'$ into a product of simple commutators, so to find a contraction of $g'$, it is sufficient to 
find a~contraction of each simple commutator $[x,y]$ in this decomposition.

Let $x=x^S\,\hat{x}$ and $y=y^S\,\hat{y}$ be the decompositions into spherical and conical parts described in Step 3.
Using the notation $^b a\defeq bab^{-1}$, we can decompose $[x,y]$
as follows~\cite[p. 60]{Berger_thesis}:
\begin{equation}
\label{e:decom_com}
[x,y]=\,([x,y] [\hat{y}, x])\,([x,\hat{y}] [\hat{y}, \hat{x}]) \,[\hat{x}, \hat{y}] = 
[^{xy} x^{-1}, ^{xy}(y^{-1} \hat{y})] \,[^x\hat{y}, ^x(x^{-1}\hat{x})]\,[\hat{x},\hat{y}].
\end{equation}
Both $x^{-1}\hat{x}$ and $y^{-1}\hat{y}$ are spherical simplices and so are their conjugations. It follows that equation (\ref{e:decom_com}) can be rewritten to
$[x,y]=[\alpha_1, \gamma_1]\,[\alpha_2,\gamma_2]\,[\hat{x}, \hat{y}]$ where $\gamma_{1}$ and $\gamma_2$ are spherical. All of these decompositions
are done by elementary formulas and are polynomial-time in the size of $x$ and $y$.

By Lemma~\ref{l:simple_contraction} we can find an elements $\lambda_i\in \widetilde{GF}_{k+1}$ such that $d_0 \lambda_i=[\alpha_i,\gamma_i]$,
$i=1,2$, in polynomial time. Further, both $\tilde{x}$ and $\tilde{y}$ are conical and they are in the form $\tilde{x}=c_0 (d_0 x_0) x_{deg}$ 
where $x_0\in \widetilde{GF}_k$ and $x_{deg}$ is degenerate; similar decomposition holds for $y$. In Step 4 we showed how to compute elements $c^x$ and $c^y$
such that $c^x$, $c^y$ is a contraction of $\hat{x}$, $\hat{y}$, respectively. Then $[c^x, c^y]$ is a contraction of $[\hat{x}, \hat{y}]$ and
$ \lambda_1 \lambda_2 [c^x, c^y] $ is a contraction of $[x,y]$.

{\bf Step 6.} Contraction of spherical elements.

The last missing step is to compute a~contraction of the spherical element $g^S$ where $g^S$ is the spherical part of a~basis element $g\in GF_k$.

Let us denote by $p$ the projection $GF\stackrel{p}{\to} AF$. 
The projection $z\defeq p(g^S)$ is in the kernel of all face operators and hence a~cycle in $\widetilde{AF}_{k}$. By Lemma~\ref{l:boundary_cer}, 
we can compute $t\defeq c_k^A(z)\in \widetilde{AF}_{k+1}$ such that $d_0 t=z$,
in polynomial time.  Let $h\in {GF}_{k+1}$ be any $p$-preimage\footnote{For $t=\sum_j k_j \overline{\sigma}_j$, 
we may choose $h=\prod_j \overline{\sigma}_j^{k_j}$ (choosing any order of the simplices).} of $t$.
Let $h_k\defeq h$ and inductively define $h_{j-1}\defeq h_{j} (s_{j-1} d_j h_{j})^{-1}$ for $j<k$. Then $h_0$ is in the kernel of all faces except $d_0$, that is,
$h_0\in \widetilde{GF}_{k+1}$. It follows that $p(h_0)\in \widetilde{AF}_{k+1}$ is in the kernel of all faces except $d_0$.
We claim that $p(h_0)=t$.This can be shown as follows: assume that $p(h_j)=t$, then $p(h_{j-1})=p(h_j) + p(s_{j-1} d_j h_j^{-1})=t + s_{j-1} d_j t=t+0=t$.

We have the following commutative diagram:
\[
\xymatrix@R=1em{
{} & h_0 \ar@{|->}[r] & t\\
\widetilde{GF}_{k+1}' \ar@{^{(}->}[r] \ar[d]^{d_0}& \widetilde{GF}_{k+1} \ar@{->>}[r]^p \ar[d]^{d_0}& \widetilde{AF}_{k+1} \ar[d]^{d_0}\\
\widetilde{GF}_{k}'  \ar@{^{(}->}[r] & \widetilde{GF}_{k} \ar@{->>}[r]^p & \widetilde{AF}_{k}\\
{} & g^S  \ar@{|->}[r] & z
}
\]
Both $g^S$ and $d_0 h_0$ are mapped by $p$ to the same element $z$: it follows that $g^S (d_0 h_0)^{-1}$ is mapped by $p$ to zero and hence is 
an element of the commutator subgroup.
Let $\tilde{h}$ be the contraction of $g^S (d_0 h_0)^{-1} $, computed in Step 5, and finally let $h:=\tilde{h} h_0$.
Then $h$ is an element of $\widetilde{GF}_{k+1}$ and a direct computation shows that
$d_0 {h}=g^S$ as desired.

This completes the construction of $c_k$: for each non-degenerate basis element $g$ of $GF_k$, $c_k(g)$ is defined to be the product of the contraction of $g^S$ and 
the contraction\footnote{The connectivity assumption on $F$ was exploited in the existence of the contraction $c_j^A$ on the Abelian part.} of~$\hat{g}$.

All the subroutines described in the above steps are polynomial-time. Thus we showed that 
if there exists a~polynomial-time algorithm for $c_{k-1}$, then there also exists a~polynomial-time algorithm for $c_k$. 
The existence of a~polynomial-time $c_0$ follows from the assumption on polynomial loop contractibility and $d$ is fixed, 
thus there exists a~polynomial-time algorithm that for $x\in GF_{j}$ computes $c_j(x)$ for each $j<d-1$.
\end{proof}

\begin{lemma}[Construction of arrow 2]\label{l:arrow2}
Under the assumption of Theorem~\ref{t:eff_hur}, let $z\in Z_{d-1}(\widetilde{AF})$ be a~cycle. Then there exists a~polynomial-time algorithm
that computes a~cycle $x\in Z_{d-1}(\widetilde{GF})$ such that the Abelianization of $x$ is $z$.
\end{lemma}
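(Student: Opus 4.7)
The plan is to produce $x$ by lifting $z$ from $\widetilde{AF}$ to $GF_{d-1}$, projecting into the Moore complex $\widetilde{GF}_{d-1}$, and then correcting to a cycle via an explicit contraction of the resulting $d_0$-boundary, which will turn out to lie in the commutator subgroup. First, writing $z = \sum_j k_j\,\overline{\sigma}_j$, we take the obvious preimage $y := \prod_j \overline{\sigma}_j^{k_j} \in GF_{d-1}$, so that $p(y) = z$. Then we push $y$ into the Moore complex by setting $y_{d-1} := y$ and iteratively $y_{j-1} := y_j \cdot (s_{j-1} d_j y_j)^{-1}$ for $j = d-1, d-2, \ldots, 1$. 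A direct check using the simplicial identities shows $d_k y_0 = 1$ for every $k > 0$, so $y_0 \in \widetilde{GF}_{d-1}$. Moreover, since $z \in \widetilde{AF}_{d-1}$ satisfies $d_k z = 0$ for $k > 0$, each correction projects to $s_{j-1} d_j z = 0$ under $p$, and hence $p(y_0) = z$ is preserved at every stage.

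Next, we examine $d_0 y_0$. Using $d_j d_0 = d_0 d_{j+1}$ together with $y_0 \in \widetilde{GF}_{d-1}$, we find $d_0 y_0 \in \widetilde{GF}_{d-2}$, and since $z$ is a cycle, $p(d_0 y_0) = d_0 z = 0$. Therefore $d_0 y_0$ lies in $[GF,GF] \cap \widetilde{GF}_{d-2}$. The crux of the argument is to construct an element $w \in \widetilde{GF}_{d-1} \cap \ker p$ with $d_0 w = d_0 y_0$. To do so we decompose $d_0 y_0$ as a product of simple commutators using Lemma~\ref{l:comm_decomposition}, and for each $[a,b]$ we apply the spherical-conical identity (\ref{e:decom_com}) from Step~5 of the proof of Lemma~\ref{l:contraction}. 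The commutators $[\alpha_i,\gamma_i]$ with spherical $\gamma_i$ are contracted inside $\widetilde{GF}_{d-1}'$ by Lemma~\ref{l:simple_contraction}, while the purely conical piece $[\hat a,\hat b]$ is contracted by assembling the contractions $c^a,c^b$ built in Step~4, which are available via the maps $c_j$ for $j \leq d-3$ provided by Lemma~\ref{l:contraction}. The main obstacle is precisely the simultaneous bookkeeping required here: the contraction must stay inside the Moore subcomplex \emph{and} inside the commutator subgroup at the same time, and it is exactly the spherical-conical split that is designed to produce pieces whose contractions are certified by the already-available polynomial-time subroutines to satisfy both constraints.

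With $w$ in hand, we set $x := w^{-1} y_0$. Both factors lie in $\widetilde{GF}_{d-1}$, so $d_i x = 1$ for every $i > 0$; moreover $d_0 x = (d_0 w)^{-1} (d_0 y_0) = 1$, so $x$ is a cycle, and $p(x) = -p(w) + p(y_0) = 0 + z = z$ as required. For the complexity, the lift and the Moore reduction are manifestly polynomial in $\size(z) + \size(I)$; the commutator decomposition is cubic by Lemma~\ref{l:comm_decomposition}; each contraction of a spherical commutator is polynomial by Lemma~\ref{l:simple_contraction}; and the conical contributions invoke $c_j$ for fixed $j < d-1$, which are polynomial by Lemma~\ref{l:contraction}. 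Since $d$ is fixed, the total running time is polynomial in $\size(z) + \size(I)$.
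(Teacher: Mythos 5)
Your proof is correct and follows essentially the same route as the paper: lift $z$ to $GF_{d-1}$, normalize into the Moore complex via $y_{j-1}:=y_j(s_{j-1}d_jy_j)^{-1}$, observe that $d_0y_0$ lands in the commutator subgroup of $\widetilde{GF}_{d-2}$, and cancel it with a contraction. The only difference is cosmetic: the paper simply applies the homomorphism $c_{d-2}$ from Lemma~\ref{l:contraction} to $d_0y_0$ and sets $x:=y_0\,c_{d-2}(d_0y_0)^{-1}$ (Step~5 of that lemma already packages the commutator decomposition, the spherical--conical split, and Lemma~\ref{l:simple_contraction}), whereas you unfold that subroutine inline, which has the small benefit of making explicit why the correcting element lies in $\ker p$.
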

The assignment $z\mapsto x$ is hence an effective inverse of the isomorphism 
$$H_{d-1}(\widetilde{GF})\to H_{d-1}(\widetilde{AF})$$ on the level of representatives.
\begin{proof}
Let $c_{d-2}$ be the contraction from Lemma~\ref{l:contraction} and $z\in Z_{d-1}(\widetilde{AF})$ be a cycle. 
First choose $y\in GF_{d-1}$ such that $p(y)=z$. Creating the sequence $y_n\defeq y$, $y_{j-1}\defeq y_j s_{j-1} d_j y_j^{-1}$ for decreasing $j$,
yields an element $y_0\in \widetilde{GF}_{d-1}$ that is still
mapped to $z$ by $p$, similarly as in Step 4 of Lemma~\ref{l:contraction}. The equation $p d_0 (y_0)=d_0 p (y_0)=d_0 z=0$ shows that
$d_0 y_0$ is in the commutator subgroup $\widetilde{GF}_{d-2}'$. We define $x\defeq y_0 c_{d-2} (d_0 y_0)^{-1}$: this is already a cycle in $\widetilde{GF}_{d-1}$
and $p(x)=p(y_0)=z$.
\end{proof}
\subsection*{Arrow 3.}
The construction of map $3$ is one of the main results from \cite{Berger_paper} and involves further definitions. Here, we describe the main points of the construction only while details are given in later sections.

Given a 0-reduced simplicial set $F$, there exists a simplicial group $\oLoop F$ that is a discrete analog of a loopspace of $F$ i.e. 
$\pi_{d-1} (\oLoop F) \cong \pi_{d}(F)$. Further, there is a homomorphism of simplicial groups $t\:GF \to \oLoop F$ that induces an isomorphism on the level of homotopy groups. 
This is described in~\cite[Proposition 3.3]{Berger_paper}. 

The homomorphism $t$ is given later by formula \eqref{e:homomorphismt} and the simplicial set $\oLoop F$ is described in the next section. Here, we remark that the size of $t(g)$ is exponential in size of~$g$. 

Finally, Lemma~\ref{l:arrow3} describes an algorithm that for a~spherical element $\gamma \in \oLoop F_{d-1}$ constructs a simplicial map $\gamma_\sph\: \Sigma^{d}(\gamma) \to F$ such that $\pi_{d-1}(\oLoop F)\ni [\gamma] \simeq [\gamma_\sph] \in \pi_{d}(F)$.

The size of $\gamma_\sph$ is polynomial in $\size(\gamma)$. Hence, given a spherical $g \in \widetilde{GF}_{d-1}$, the algorithm produces $t(g)_\sph\:\Sigma^{d}(t(g)) \to F$ that is exponential with respect to $\size (g)$.
\heading{Berger's model of the loop space.}
\begin{definition}[Oriented multigraph on $X_n$]\label{def:multigraph}
Let $X$ be a 0-reduced simplicial set. We define a directed multigraph $MX_n = (V_n,E_n)$, where the set of vertices $V_n = X_n$ and the set of edges $E_n$ is given by
\[
E_n = \{[x, i]^\epsilon \mid x\in X_{n+1}, 0\leq i \leq n, \epsilon \in \{1, -1\}\}.
\]
We define maps $\source, \target\: E_n \to V_{n}$ by setting $\source [x,i] = d_{i+1} x$, $\target [x,i] = d_i x$ and $\source [x,i]^{-1} = \target [x,i]$ and $\target [x,i]^{-1} = \source [x,i]$.

An edge $[x, i]^\epsilon \in E_n$ is called \emph{compressible}, if $x = s_i x'$ for some $x' \in X_{n}$.
\end{definition}
\begin{definition}[Paths]
Let $X \in \sSet$. A sequence of edges in $MX_{n}$ 
\begin{equation}
\label{e:gamma}
\gamma = [x_1,i_1]^{\epsilon_1}[x_2,i_2]^{\epsilon_2}\cdots [x_k,i_k]^{\epsilon_k}
\end{equation}
is called an $n$-\emph{path}, if  
$\target [x_j,i_j]^{\epsilon_j} = \source [x_{j+1},i_{j+1}]^{\epsilon_{j+1}}$, $1\leq j<k$.

Moreover, for every $x\in V_n = X_n$ we define a path of length zero $1_x$ with the property $\source 1_x = x = \target 1_x$ and relations
$a 1_x=a$ whenever $\target{a}=x$ and $1_x b=b$ whenever $\source{b}=x$.

The set of paths on $MX_{n}$ is denoted by 
$IX_n$. Let $\gamma \in IX_n$ by as in (\ref{e:gamma}).
We define $\source \gamma = \source [x_1,i_1]^{\epsilon_1}$ and $\target \gamma =\target [x_k,i_k]^{\epsilon_k}$. 
The \emph{inverse} of $\gamma$, denoted $\gamma^{-1}$, is defined as
\[
\gamma^{-1} = [x_k,i_k]^{- \epsilon_k}\cdots  [x_1,i_1]^{ - \epsilon_1}.\]
if $\gamma$ = $1_x$, then $\gamma^{-1} = \gamma$. Note that each path is either equal to $1_x$ for some $x$ or can be represented
in a form such as (\ref{e:gamma}), without any units.
\end{definition}
For algorithmic purposes, we assume that a~path 
$\gamma = [x_1,i_1]^{\epsilon_1}[x_2,i_2]^{\epsilon_2}\cdots [x_k,i_k]^{\epsilon_k}$
is represented as a~list of triples $(x_j, i_j, \epsilon_j)$ and has size 
$$\size(\gamma):=\sum_j \size(x_j)+\size(i_j)+\size(\epsilon_j),$$
which is bounded by a~linear function in~$\sum_j \size({x_j})$.

Given an edge $[x,i]^\epsilon \in MX_n$, we define operators $$d_0, \ldots d_{n}\: E_n \to IX_{n-1} \,\, \text{and}\,\, s_0, \ldots, s_n\:  E_n \to IX_{n+1}$$ called \emph{face} and \emph{degeneracy} operators, respectively. These are given as follows
\[
d_j [x,i]^\epsilon = \left\{
\begin{array}{lll}
{[}d_j x, i - 1{]}^\epsilon,& & j< i;\\ 
1_{d_ i d_{i+1} x},&  & i=j;\\
{[}d_{j+1} x, i{]}^\epsilon,& & j>i.
\end{array}
\right. \qquad
s_j [x,i]^\epsilon = \left\{
\begin{array}{lll}
{[}s_j x, i + 1{]}^\epsilon,& & j< i;\\ 
{[}s_i x, i+1{]}{[}s_{i+1} x, i{]})^\epsilon,&  & i=j;\\
{[}s_{j+1} x, i{]}^\epsilon,& & j>i.
\end{array}
\right.
\]

One can now extend the definition of face and degeneracy operators to paths, i.e. we define operators $d_0, \ldots d_{n}\: IX_n \to IX_{n-1}$ and $s_0, \ldots, s_n\:  IX_n \to IX_{n+1}$ 
\[
d_j \gamma = \left\{
\begin{array}{lll}
d_j([x_1,i_1]^{\epsilon_1}) d_j([x_2,i_2]^{\epsilon_2})\cdots d_j([x_k,i_k]^{\epsilon_k})& \text{if} & \gamma =  [x_1,i_1]^{\epsilon_1}[x_2,i_2]^{\epsilon_2}\cdots [x_k,i_k]^{\epsilon_k},\\
1_{d_j x}& \text{if} &  \gamma = 1_x, x\in X_n.
\end{array}
\right.
\]
\[
s_j \gamma = \left\{
\begin{array}{lll}
s_j([x_1,i_1]^{\epsilon_1}) s_j([x_2,i_2]^{\epsilon_2})\cdots s_j([x_k,i_k]^{\epsilon_k})& \text{if} & \gamma =  [x_1,i_1]^{\epsilon_1}[x_2,i_2]^{\epsilon_2}\cdots [x_k,i_k]^{\epsilon_k}\\
1_{s_j x}& \text{if} &  \gamma = 1_x, x\in X_n.
\end{array}
\right.
\]
With the operators defined above, one can see that $IX$ is in fact a simplicial set. 

For any $\gamma, \gamma' \in IX$ such that $\target{\gamma} = \source{\gamma'}$, we define a composition $\gamma\cdot \gamma'$ in an obvious way.

If the simplicial set $X$ is $0$-reduced, we denote the unique basepoint $\pt\in X_0$. Abusing the notation, we denote the iterated degeneracy of the basepoint $\underbrace{{s_0} \cdots s_0 \pt}_{k-\text{times}}\in X_{k}$ by~$\pt$ as well. With that in mind, we define simplicial subsets $P X$, $\Omega X$ of $IX$ as follows:
\[
PX= \{\gamma \in IX \mid \target\gamma = \pt\} \quad \Omega X= \{\gamma \in IX \mid \source \gamma = \pt = \target \gamma \}.
\]
We remark that simplicial sets $PX, \Omega X$ intuitively capture the idea of pathspace and loopspace in a simplicial setting.

\begin{definition}
A path $\gamma = [x_1,i_1]^{\epsilon_1}[x_2,i_2]^{\epsilon_2}\cdots [x_k,i_k]^{\epsilon_k} \in IX$ is called \emph{reduced}, if for every $1\leq j<k$ the following condition holds:
\[
(x_j = x_{j+1}\,\& \,i_j = i_{j+1}) \Rightarrow \epsilon_j = \epsilon_{j+1}.
\]
e.g. an edge in the path $\gamma$ is never followed by its inverse. 

An edge $[x, i]^\epsilon \in E_n$ is called \emph{compressible}, if $x = s_i x'$ for some $x' \in X_{n}$. A path is \emph{compressed} if it does not contain any compressible edge. 
\end{definition}
We define relation $\sim_R$ on $IX$ (or rather on each $IX_n$) as a relation generated by
\[ [x,i]^{\epsilon} [x,i]^{-\epsilon} \sim_R 1_{\source( [x,i]^{\epsilon})},\quad  n\in \mathbb{N}_0, [x,i]^\epsilon \in E_n.\]
Similarly, we define  $\sim_C$ on $IX$ as a relation generated by
\[
 [x,i]^{\epsilon} \sim_C 1_{\source( [x,i]^{\epsilon})},\quad \text{if}\,  [x,i]^{\epsilon} \in E_n \,\text{is compressible}.
\]
We finally define $\oI X = (IX/\sim_C)/ \sim_R$. Similarly, one defines $\oP X, \oLoop X$.

For $\gamma, \gamma'\in IX_n$, we write $\gamma \sim \gamma'$ if they represent the same element in $\oI X_n$. The symbol $\overline{\gamma}$, denotes the 
(unique) compressed and reduced path such that $\gamma \sim \overline{\gamma}$. 
One can see $\oI X$ ($\oP X, \oLoop X$) as the set of reduced and compressed paths in $IX (P X, \Omega X)$.

In a natural way, we can extend the definition of face and degeneracy operators $d_i, s_i$ on sets $\oI X (\oP X$,$\oLoop X$) by setting $d_i \gamma = \overline {d_i \gamma}$ and $s_i \gamma = \overline {s_i \gamma}$. One can check that this turns $\oI X$, $ \oP X$ and $\oLoop X$ into simplicial sets.

Similarly, we define operation $\cdot \: \oLoop X_n \times \oLoop X_n \to \oLoop X_n$ by $\gamma\cdot \gamma' \mapsto \overline{\gamma\gamma'}$, i.e. we first compose the loops and then assign the appropriate compressed and reduced representative. With the operation defined as above, $\oLoop X$ is a simplicial group.
\heading{Homomorphism $t\: GX \to \oLoop X$.}
We first describe how to any given $x \in X_n$ assign a path $\gamma_x \in \oP X_n$ with the property $\source \gamma_x = x$ and $\target \gamma_x = *$:

For $x \in X_n$, $n>0$, the $0$-reducedness of $X$ gives us $d_{i_1} d_{i_2}\cdots d_{i_{n}} x = \pt$, here $i_{j} \in \{0, \ldots, j\}$, $0<j\leq n$. In particular, $d_{0} d_{1}\cdots d_{n-1} x = \pt$. Using this, we define
\[
\gamma_x = [s_{n}x,{n-1}][s_{n}s_{n-1}d_{n-1}x,{n-2}]\cdots[s_{n}s_{n-1}\cdots s_{1}d_1 d_{2}\cdots d_{n-1}x,0].
\]
Ignoring the degeneracies, one can see the sequence of edges as a path 
\[x \to d_{n-1}x \to d_{n-2}d_{n-1}x \to \cdots \to d_{0} d_{1}\cdots d_{n-1} x.\]

We define the homomorphism $t$ on the generators of $GX_n$, i.e. on the elements $\overline{x}$, where $x \in X_{n+1}$ as follows:
\begin{equation}\label{e:homomorphismt}
 t(\overline x) =  \overline{\gamma_{d_{n+1}x}^{-1}[x,n]\gamma_{d_{n}x}}.
\end{equation}
This is an element of $\oLoop X_n$.

The algorithm representing the map $t$ has \emph{exponential time complexity} due to the fact that an~element $\overline{\sigma}^{k}$
with size $\size(\sigma)+\size(k)$ is mapped to 
$$
\underbrace{\overline{\gamma_{d_{n+1}x}^{-1}[x,n]\gamma_{d_{n}x} \,\,\, \ldots \,\,\,\gamma_{d_{n+1}x}^{-1}[x,n]\gamma_{d_{n}x}}}_{k \,\,\text{times}}
$$
which in general can have size proportional to $k$. Assuming an encoding of integers such that $\size(k)\simeq \ln(k)$, this amounts
to an exponential increase.

\heading{Universal preimage of a path.}
Intuitively, one can think of the simplicial set $IX$ of paths as of a discretized version of space of continuous maps $|X|^{[0,1]}$. 
In particular,  $\gamma \in IX_{d-1}$ is a walk through a sequence of $d$-simplices in $X$ that connect $\source \gamma$ with $\target \gamma$. 
However, in the continuous case an element $\mu \in |X|^{[0,1]}$ corresponds to a continuous map $\mu \: [0,1] \to |X|$. 
We want to push the parallels further, namely, given any nontrivial\footnote{By nontrivial we mean that $\gamma\neq 1_x$ for any $x\in X_{d-1}$.}
$\gamma \in IX_{d-1}$, we aim to define a simplicial set $\Dom(\gamma)$ and a~simplicial map 
$\gamma_\map: \Dom(\gamma)\to X$ with the following properties:
\begin{enumerate}
\item $|\Dom(\gamma)| = D^{d}$.
\item $\gamma_\map$ maps $\Dom(\gamma)$ to the set of simplices contained in the path ${\gamma}$.
\end{enumerate}
We will utilize the following construction given in \cite{Berger_paper}.

\begin{definition}
Let $\gamma \in IX_{d-1}$ . We define $\Dom(\gamma)$ and $\gamma_\map$ as follows.
Suppose, that $\gamma = [y_1,i_1]^{\epsilon_1}[y_2,i_2]^{\epsilon_2}\cdots [y_k,i_k]^{\epsilon_k}$. 
For every edge $[y_j, i_j]^{\epsilon_j}$, let $\alpha_j$ be the simplicial map  $\stdsimp{d} \to y_j$ sending the nondegenerate $d$ simplex in $\stdsimp{d}$ to $y_j$.

We define $\Dom(\gamma)$ as a quotient of the disjoint union of $k$ copies of $\stdsimp{d}$:
\[
\Dom(\gamma) = \bigsqcup_{i = 1} ^{k} \stdsimp{d} /\sim
\]
where each copy of $\stdsimp{d}$ corresponds to a domain of a unique $\alpha_j$ and the relation is given by  
\[
(\alpha_j)^{-1} \target ([y_j,i_j]^{\epsilon_j}) \sim (\alpha_{j+1})^{-1}  \source ([y_{j+1},i_{j+1}]^{\epsilon_{j+1}}) . 
\]
The map $\gamma_\map$ is induced by the collection of maps $\alpha_1, \ldots, \alpha_k$:
\[\xymatrix{
{\bigsqcup_{i = 1} ^{k} \stdsimp{d}}\ar[rrd]^{\alpha_1, \ldots, \alpha_k} \ar@{>>}[d]&&\\ 
\Dom(\gamma) \ar[rr]^{\gamma_\map} && X.
}
\]
\end{definition}

We recall that simplicial set $\oI X$ was defined as the set of ``reduced and compressed'' paths in $IX$. Similarly, one introduces a reduced and compressed versions of the construction $\Dom$. As a final step we then get
\begin{lemma}[Section 2.4 in~\cite{Berger_paper}]
\label{l:arrow3}
Let $\gamma \in \oLoop X_{d-1}$ such that $d_i \gamma = 1\in \oLoop X$ for all $i$. 
Then the map $\gamma_\map\: \Dom(\gamma)\to X$ factorizes through a simplicial set model of the sphere $\Sigma^{d}(\gamma)$ as follows:
\[\xymatrix{
\Dom(\gamma) \ar[rrd]^{\gamma_\map} \ar@{>>}[d]&&\\ 
\Sigma^{d}(\gamma) \ar[rr]^{\gamma_\sph} && X.
}
\]
Further, $\pi_{d-1}(\oLoop X)\ni [\gamma] \simeq [\gamma_\sph] \in \pi_{d}(X)$.
\end{lemma}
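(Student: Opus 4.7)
My plan is to analyze $\Dom(\gamma)$ geometrically, use the hypothesis $d_i\gamma = 1$ to control its boundary, and then appeal to the classical (discrete) loop-space adjunction to identify homotopy classes. First I would examine the simplicial structure of $\Dom(\gamma)$: by construction it is obtained by gluing $k$ copies of $\Delta^d$ in sequence along a single $(d-1)$-face (matching $\target[y_j,i_j]^{\epsilon_j}$ with $\source[y_{j+1},i_{j+1}]^{\epsilon_{j+1}}$), so its geometric realization is a union of $d$-simplices glued along a tree-like pattern of $(d-1)$-faces. Thus $|\Dom(\gamma)|$ is homeomorphic to the $d$-disc $D^d$, and the boundary $\partial|\Dom(\gamma)|$ is made up of precisely those $(d-1)$-faces of the individual $\Delta^d$'s that remain unpaired.

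Next, I would trace where $\gamma_\map$ sends the boundary of $\Dom(\gamma)$. By the definition of $\alpha_j$ and of the face operators on edges, the boundary faces of the $j$th copy of $\Delta^d$ map precisely to the simplices that appear in the faces $d_i[y_j,i_j]^{\epsilon_j}$, so collecting over all $j$ the image of $\partial|\Dom(\gamma)|$ under $\gamma_\map$ is controlled by the family of $(d-2)$-paths $d_0\gamma,\ldots,d_{d-1}\gamma \in IX_{d-2}$. The hypothesis that each $d_i\gamma$ equals $1$ in $\oLoop X$ means that after performing the reduction $\sim_R$ (cancellations $[x,i]^\epsilon[x,i]^{-\epsilon}\sim 1$) and the compression $\sim_C$ (killing edges $[s_i x',i]^\epsilon$), the boundary path collapses to the degenerate basepoint. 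I would then define $\Sigma^d(\gamma)$ as the quotient of $\Dom(\gamma)$ obtained by applying on the boundary the simplicial identifications that implement $\sim_R$ and $\sim_C$ — that is, identifying boundary simplices that correspond to a compressible edge with the basepoint, and identifying pairs of boundary simplices that correspond to a cancelling $[x,i]^\epsilon[x,i]^{-\epsilon}$ pair. Since each such identification on $\Dom(\gamma)$ only alters its boundary, and since after all identifications the remaining boundary is trivialized to a single point, the quotient $|\Sigma^d(\gamma)|$ is $D^d/\partial D^d \cong S^d$. The map $\gamma_\map$ is constant on the collapsed boundary by assumption, so it factors through $\gamma_\sph\colon \Sigma^d(\gamma)\to X$.

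Finally, for the identification $[\gamma]\simeq[\gamma_\sph]$ in $\pi_d(X)$, I would invoke the fact that $\oLoop X$ is Berger's discrete model of the loop space of $X$, together with Proposition~3.3 of~\cite{Berger_paper}, which gives a natural isomorphism $\pi_{d-1}(\oLoop X)\cong \pi_d(X)$. What remains is to verify that under this isomorphism the class of a spherical $\gamma$ corresponds to the class of $\gamma_\sph$. This follows from the fact that the construction $\gamma\mapsto \gamma_\sph$ is precisely the combinatorial adjoint of the loop-space construction: reading the path $\gamma$ as a map $S^{d-1}\to \oLoop X$ (once all faces collapse) and then unfolding it via the $\Dom$ construction is, by inspection of the definitions of $\gamma_\map$ and of the face/degeneracy operators on $\oLoop X$, the simplicial realization of the adjunction unit $S^{d-1}\wedge S^1\to X$.

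The main obstacle I anticipate is the second step: verifying cleanly that the boundary identifications induced by $\sim_R$ and $\sim_C$ really give $\Sigma^d(\gamma)\cong S^d$, rather than some other quotient of $D^d$. In particular, one has to check that the reduction cancels boundary faces in pairs that are already geometrically identified on $\partial|\Dom(\gamma)|$ (so the quotient does not create singular identifications in the interior) and that the compression only removes simplices that are already degenerate in $X$, hence contribute nothing beyond the basepoint. Both require a careful inductive bookkeeping of how each $d_i\gamma$ interacts with the neighboring face gluings in $\Dom(\gamma)$, mirroring the bookkeeping carried out in Section~2.4 of~\cite{Berger_paper}.
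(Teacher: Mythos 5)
The paper does not actually prove this lemma: it explicitly defers correctness to Section~2.4 of Berger's paper and confines itself to describing the algorithm, namely the four-step factorization $\Dom(\gamma)\to\oDom(\gamma)\to\ooDom(\gamma)\to\Sigma^d(\gamma)$ (compression, then reduction/cancellation, then collapsing the two ends to the basepoint). Your construction of $\Sigma^d(\gamma)$ tracks that factorization faithfully — your $\sim_C$-identifications are the $\oDom$ step and your $\sim_R$-identifications are the $\ooDom$ step — so in spirit you are reconstructing exactly the argument the paper outsources.

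There is, however, a genuine flaw in your second step. After the reduction identifications, the boundary of $|\Dom(\gamma)|\cong D^d$ is \emph{not} ``trivialized to a single point'': the $\sim_R$-identifications glue boundary $(d-1)$-facets to each other \emph{in pairs} (the facet $\alpha_{l_i}^{-1}(y_i)$ to the facet $\alpha_{m_i}^{-1}(y_i)$), so the image of $\partial D^d$ in the quotient is a nontrivial $(d-1)$-dimensional complex, and the quotient is emphatically not $D^d/\partial D^d$. The correct statement is the higher-dimensional analogue of the classical polygon-identification fact: a $d$-ball whose boundary facets are identified according to a word that freely reduces to the identity yields $S^d$. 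Proving that is precisely the content of Berger's Section~2.4 and requires the inductive bookkeeping you defer to at the end — in particular one must check that the nested cancellation structure of each $d_i\gamma$ (the pairs $(l_i,m_i)$ chosen in the $\ooDom$ step) produces a well-defined quotient with no spurious interior identifications. You correctly flag this as the main obstacle, but as written the claimed mechanism ($D^d/\partial D^d$) is wrong, not merely incomplete. Similarly, the final identification $[\gamma]\simeq[\gamma_\sph]$ under Berger's isomorphism $\pi_{d-1}(\oLoop X)\cong\pi_d(X)$ is asserted ``by inspection'' rather than verified; this is again the substance of the cited result rather than a formality. Neither issue invalidates your overall plan, but both are the places where the actual proof lives.
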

We will not give the proof of correctness of Lemma~\ref{l:arrow3} (it can be found in \cite{Berger_paper}). 
Instead, in the next section, we only describe the algorithmic construction of $\gamma_\sph\: \Sigma^{d}(\gamma) \to X$ and give a running time estimate.

\heading{Algorithm from Lemma~\ref{l:arrow3}.}

The algorithm accepts an element $\gamma \in \oLoop X_{d-1}$ such that $d_i \gamma = 1\in \oLoop X$ for all $i$, a~spherical element. We divide the algorithm into four steps that correspond to the four step factorization in the following diagram:
\[\xymatrix{
\Dom(\gamma) \ar[rrdd]^{\gamma_\map} \ar@{>>}[d]&&\\ 
\oDom(\gamma) \ar[rrd]^{\gamma_\comp} \ar@{>>}[d]&&\\
\ooDom(\gamma) \ar[rr]_{\gamma_\compred} \ar@{>>}[d]&&X\\
\Sigma^{d}(\gamma) \ar[rru]_{\gamma_\sph} &&
}
\]

\begin{enumerate}
\item[$\Dom(\gamma)$:] We interpret $\gamma$ as an element in $IX$ and construct $\gamma_\map\: \Dom(\gamma) \to X$. This is clearly linear in the size of $\gamma$.
\item[$\oDom(\gamma)$:] The algorithm checks, whether an edge $[y,j]^\epsilon$ in $d_{i_1}d_{i_2}\ldots d_{i_\ell} \gamma$, where $0\leq {i_1} < {i_2}< \ldots < {i_\ell}<(d-\ell -2)$ is \emph{compressible}, i.e. $y = s_j d_j y$. If this is the case, add a corresponding relation on the preimages: $\alpha^{-1}(y) \sim s_j d_j \alpha^{-1}(y)$. Factoring out the relations, we get a map $\gamma_\comp\:  \oDom(\gamma)  \to X$.

Although the number of faces we have to go through is exponential in $d$, this is not a problem, since $d$ is deemed as a constant in the algorithm and so is $2^d$. Hence the number of operations is again linear in the size of $\gamma$.

\item[$\ooDom(\gamma)$:] Let $k<d$. We know that $\overline{d_k \gamma}=1_*$, so after removing all compressible elements from the path $d_k \gamma$,
it will contain a sequence of pairs ($[y_i,j_i]^{\epsilon_i}, [y_i, j_i]^{-\epsilon_i})$ such that, after removing all $[y_u, j_u]^{\pm 1}$ for all $u<v$, then
$[y_v, j_v]^{\epsilon_v}$ and $[y_v, j_v]^{-\epsilon_v}$ are next to each other.\footnote{For example, $[a,1] [b,2] [b,2]^{-1} [a,1]^{-1}$ can be split
into a sequence $([b,2], [b,2]^{-1}), ([a,1], [a,1]^{-1})$.}
Each such pair  $([y_i,j_i]^{\epsilon_i}, [y_i, j_i]^{-\epsilon_i})$ corresponds to a pair of indices $(l_i,m_i)$ corresponding to the positions of those
edges in $d_k \gamma$.
These sequences are not unique, but can be easily found in time linear in $\mathrm{length}(\gamma)$. 
Then we glue $\alpha_{l_i}^{-1}(y_i)$ with $\alpha_{m_i}^{-1}(y_i)$ for all $i$. Performing such identifications for all $k$ defines the new simplicial set $\ooDom(\gamma)$.
\item[$\Sigma^{d}(\gamma)$:] It remains to identify  $\alpha^{-1}(\source \gamma)$ and $\alpha^{-1} (\target \gamma)$ with the appropriate degeneracy 
of the (unique) basepoint. The resulting space $|\Sigma^{d}(\gamma)|$ is a~$d$-sphere.

\end{enumerate}

\section{Polynomial-Time Loop Contraction in $F_d$}
\label{a:loop_contraction}
In this section, we show that simplicial sets $F_\thedimm$, $2\leq \thedimm \leq \thedim$ constructed algorithmically in Section~\ref{s:proof_1}
have polynomial-time contractible loops, thus proving Lemma~\ref{l:contr_loop}. We first give the contraction on $F_2$  and show that the contraction $F_i, i>3$ follows from the contraction on $F_3$. The majority of the effort in this section is then concentrated on the description of the contraction $c_0$ on $F_3$.

\heading{Notation.}
We will further use the following shorthand notation: For a $0$-reduced simplicial set $X$  we will denote the iterated degeneracy $s_0 \cdots s_0 *$ of its unique basepoint $*$  by $*$ and we set $\pi_i = \pi_i(X)$. For any Eilenberg-Maclane space $K(\pi_i, i-1)$, $i\geq 2$, we denote its basepoint and its degeneracies by $0$. From the context, it will always be clear which simplicial set we refer to. 

\heading{Loop contraction on $F_2$.}
Assuming that $X$ is a $0$-reduced, $1$-connected simplicial set with a given algorithm that computes the contraction on loops $c_0\: (GX)_0 \to (GX)_1$, the contraction $c_0$ on $F_2$ is automatically defined, as $X = F_2$. 

\heading{Loop contraction on $F_i$, $i>3$.}
Suppose we have defined the contraction on the generators of $G_0(F_3)$. i.e. for any $(x,k) \in (X\times_{\tau'}K(\pi_2, 1))_1$ we have
\[
c_0(\overline{(x,k)}) = \overline{(x_1,k_1)}^{\epsilon_1} \cdots \overline{(x_\thedim,k_\thedim)}^{\epsilon_\thedim}\qquad (x_j,k_j) \in (F_3)_2, \epsilon_j \in \Z, 1\leq j\leq \thedim
\]
such that $d_0 c_0 (\overline{(x,k)}) = \overline{(x,k)}$ and $d_1 c_0 (\overline{(x,k)}) = 1$. In detail, we get the following:
\begin{align}
\overline{(x,k)} &= d_0 c_0(\overline{(x,k)}) =\overline{(d_0 x_1,d_0 k_1)}^{\epsilon_1} \cdots \overline{(d_0 x_\thedim,d_0 k_\thedim)}^{\epsilon_\thedim} \label{e:dnull}\\
1 &= d_1 c_0(\overline{(x,k)}) =\big(\overline{(d_2 x_1 ,  \tau'(x_1) d_2 k_1 )}^{-1}\cdot \overline{(d_1 x_1 ,d_1 k_1)}\big)^{\epsilon_1} \cdots \label{e:dfirst} \\  
{ } &{}\big(\overline{(d_2 x_\thedim ,  \tau'(x_\thedim) d_2 k_\thedim )}^{-1}\cdot \overline{(d_1 x_\thedim ,d_1 k_\thedim)}\big)^{\epsilon_\thedim}\nonumber
\end{align}
We now aim to give a reduction on the generators of $G_0(F_i)$, $i>3$. Simplicial set ${F_i}$ is an iterated twisted product of the form
\[
 \big((( X\times_{\tau'} K(\pi_2, 1)) \times_{\tau'} K(\pi_3, 2) )\times_{\tau'} \cdots \times_{\tau'}  K(\pi_{i-2}, i-3)\big)  \times_{\tau'} K(\pi_{i-1}, i-2)
\]
As simplicial sets $K(\pi_{i-1}, i-2)$ are $1$-reduced for $i>3$, we can identify elements of $({F_i})_1$ with vectors $(x,k,0, \ldots,0)$, where $k\in K(\pi_{2}, 1)_1, x\in X_1$. We further shorthand the series of $i-3$ zeros in the vector with $\zero$. Hence generators $G_0(F_i)$ are of the form $\overline{(x,k,\zero)}$. The $1$-reducedness also implies that $\tau' (\alpha) = 0$ whenever $\alpha \in (F_i)_2$, $i>2$.

Finally, we set 
\[
c_0(\overline{(x,k,\zero)}) = \overline{(x_1, k_1, \zero)}^{\epsilon_1} \cdots \overline{(x_\thedim, k_\thedim, \zero)}^{ \epsilon_\thedim}\qquad (x_j, k_j, \zero) \in (F_i)_2, \epsilon_j \in \Z, 1\leq j\leq \thedim
\]

The (almost) freeness of $G_0(F_i)$, the fact that $K(\pi_{i-1}, i-2)$ are $1$-reduced for $i>3$ and equations \eqref{e:dnull}, \eqref{e:dfirst} give that $d_0 c_0(\overline{(x,k,\zero)}) = \overline{(x,k,\zero)}$ and $d_1 c_0(\overline{(x,k,\zero)}) = 1$.

Before the definition of contraction on simplicial set $F_3$, we remind the basic facts involving the simplicial model of Eilenberg-MacLane spaces we are using.
\heading{Eilenberg--MacLane spaces.}
As noted in Section~\ref{s:prelim}, given a group $\pi$ and an integer $i\geq 0$ an Eilenberg--MacLane space $K(\pi, i)$ is a space satisfying
\[
\pi_j (K(\pi, i)) = 
\left\{
	\begin{array}{ll}
		\pi & \mbox{for } j = i,\\
		0 & \mbox{else}.
	\end{array}
\right.
\]
In the rest of this section, by $K(\pi, i)$ we will always mean the simplicial model which is defined in \cite[page 101]{may}
\[
K(\pi, i)_q = Z^i (\stdsimp{q}; \pi),
\] 
where $\stdsimp{q} \in \sSet$ is the standard $q$-simplex and $Z^i$ denotes the cocycles. This means that each $q$--simplex is regarded as a labeling of the $i$--dimensional faces  of $\stdsimp{q}$ by elements of $\pi$ such that they add up to $0\in \pi$ on the boundary of every $(i+1)$-simplex in $\stdsimp{q}$, hence elements of $K(\pi, q)_q$ are in bijection with elements of $\pi$. The boundary and degeneracy operators in $K(\pi, \thedimm)$ are given as follows: For any  $\s \in K(\pi, i)_q$, $d_j(\s) \in K(\pi, \thedimm)_{q-1}$ is given by a restriction of $\s \in K(\pi, i)$ to the $j$-th face of $\stdsimp{q}$.
To define the degeneracy we first introduce mapping $\eta_j \colon \{0,1, \ldots, q+1 \} \to \{0,1, \ldots, q\}$ given by
\[
\eta_j (\ell)=
\left\{
	\begin{array}{ll}
		\ell & \mbox{for } \ell \leq j, \\
		\ell-1 & \mbox{for } \ell > j.
	\end{array}
\right.
\]
Every mapping $\eta_j$ defines a map $C^*(\eta_j )\:C^*(\stdsimp{q}) \to C^*(\stdsimp{q+1})$.The degeneracy $s_j \s$ is now defined to be $C^*(\eta_j)(\sigma)$ 
(see~\cite[§~23]{may}).

It follows from our model of Eilenberg-MacLane space, that elements of $K(\pi_2, 1)_2$ can be identified with labelings of $1$-faces of a $2$-simplex by elements of $\pi_2$ that sum up to zero. 

As $\pi_2$ is an Abelian group, we use the additive notation for $\pi_2$. We identify the elements of $K(\pi_2, 1)_2$ with triples $(k_0, k_1, k_2)$, $k_i \in \pi_2$, $0\leq i\leq2$, such that $k_0 - k_1 + k_2 = 0 \in \pi_2$. 

\heading{Loop contraction on $F_3$.}
Let $X$ be a $0$-reduced, $1$-connected simplicial set with a given algorithm that computes the contraction on loops $c_0\: (GX)_0 \to (GX)_1$. 


In the rest of the section, we will assume $x \in X_1$. Then by our assumptions $c_0 \overline x = \overline{y_1}^{\epsilon_1} \cdots \overline{y_n}^{\epsilon_n}$, where $y_i \in X_2, \epsilon_i \in \Z$, $1\leq i\leq n$.
Let $k_i = \tau' (y_i)$.

We first show that in order to give a contraction on elements of the form $\overline{(x,0)}$ and $\overline{(x,k)}$, it suffices to have the contraction on elements of the form $\overline{(*,k)}$:


\heading{Contraction on element $(x,0)$.}
Let $\overline{(x,0)} \in G_0 (F_3)$. We define
\[
c_0 \overline{(x,0)} = \prod_{i = 1} ^{n}\big( c_0 \overline{(*,k_i)}^{-1} \overline{( s_1 d_2 y_i,(k_i, k_i, 0))}\cdot\overline{(y_i,0)}\big)^{\epsilon_i}.
\]

\heading{Contraction on element $(x,k)$.}
Suppose $\overline{(x,k)} \in (GF_3)_0$. The formula for the contraction is given using the formulae on contraction on $\overline{(x,0)}$ and $\overline{(*,k)}$ as follows
\[
c_0 \overline{(x,k)} = \overline{(s_0 x,(k, 0,-k))} \cdot s_0 \overline{(x, 0)}^{-1} \cdot s_0 \overline{(*, -k)} \cdot c_0 (\overline{(*,-k)})^{-1}\cdot c_0 (\overline{(x,0)}) 
\]
\heading{Contraction on element $(*,k)$.} 
We formalize the existence of the contraction as Proposition \ref{p:loop_contraction} given at the end of this section. Due to the fact that the proof is rather technical, we need to define and prove some preliminary results first:
\begin{definition}
Let $Z = \{z \in (GF_3)_1 \mid d_0 z = 1\}$ and let $W = \{d_1 z \mid z\in Z \}$ We define an equivalence relation $\sim$ on the elements of $W$ in the following way: We say that $w\sim w'$ if there exists $z \in Z$, $\alpha, \beta \in (GF_3)_1$ such that $d_1 z = w$, $\alpha z \beta \in Z$ and $d_1(\alpha z \beta) = w'$.
\end{definition}
\begin{lemma}\label{l:rules}
Let $w\in W$ such that
\begin{enumerate}
\item\label{rot1}  $w= \overline{(x,k)}^{\epsilon}\cdot \alpha $, where $\alpha \in (GF_3)_1$ Then $w = \overline{(x,k)}^{\epsilon}\cdot \alpha \sim \alpha \cdot (x,k)^{\epsilon} = w'$.
\item\label{invert}  $w =\overline{(*,k)}^{\epsilon} \cdot \alpha$, where $\alpha \in (GF_3)_0$. Then $w \sim w'= \overline{(*,-k)}^{-\epsilon} \cdot \alpha$.
\item\label{kxjoin} $w =  \overline{(*,- k)}^{-1} (x,0)\cdot \alpha$, where $\alpha \in (GF_3)_0$. Then $w \sim w' = \overline{(x,k)}\cdot \alpha$.
\item\label{kxreduce} $w = \overline{(x,0)}^{-1} \overline{(x,k)} \cdot \alpha$, where $\alpha \in (GF_3)_0$. Then $w \sim w'= \overline{(*,k)} \cdot \alpha$.
\item\label{klplus} $w = \overline{(*, -l)}^{-1} \overline{(*,k)} \cdot \alpha$, where $\alpha \in (GF_3)_0$. Then $w \sim w'= \overline{(*,k+l)}\cdot \alpha$.
\end{enumerate}
\end{lemma}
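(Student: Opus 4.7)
The plan is to reduce each of the five rules to exhibiting, for the given $w \in W$ and target $w'$, explicit elements $\alpha,\beta \in (GF_3)_1$ that witness $w \sim w'$. Since $d_0(\alpha z \beta) = d_0\alpha \cdot d_0 z \cdot d_0\beta$ and $d_1(\alpha z \beta) = d_1\alpha \cdot d_1 z \cdot d_1\beta$, it suffices, given any $z \in Z$ with $d_1 z = w$, to arrange $d_0\alpha \cdot d_0\beta = 1$ and $d_1\alpha \cdot w \cdot d_1\beta = w'$. Rule~1 is then handled by the ``conjugation by a degeneracy'' trick: taking $\alpha := s_0 \alpha_0$, $\beta := s_0 \alpha_0^{-1}$ (where $\alpha_0$ is the tail of $w$ that we want to rotate) yields $d_0 = \alpha_0 \cdot \alpha_0^{-1} = 1$ and $d_1 = \alpha_0 \cdot w \cdot \alpha_0^{-1} = \alpha_0 \cdot \overline{(x,k)}^\epsilon = w'$ by the simplicial identity $d_0 s_0 g = d_1 s_0 g = g$.

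For rules~2--5 the strategy is to take $\beta := 1$ and construct $\alpha$ as $s_0 u \cdot \overline{\eta}^{-1}$ for a suitable $2$-simplex $\eta \in (F_3)_2$, where $u$ is the first generator appearing in $w'$. If $\overline{\eta} \in (GF_3)_1$ has $d_0\overline{\eta} = u$ and $d_1\overline{\eta} = v$, then $\alpha$ has $d_0\alpha = u \cdot u^{-1} = 1$ and $d_1\alpha = u \cdot v^{-1}$, so the task is to produce $\eta$ whose ``boundary'' $u$-to-$v$ realises exactly the prefix replacement demanded by the rule. The concrete candidates are
\begin{align*}
\nu_2 &:= (s_0 s_0 *,\,(k, 0, -k)), &
\nu_3 &:= (s_0 x,\,(k, 0, -k)),\\
\nu_4 &:= (s_1 x,\,(k, k, 0)), &
\nu_5 &:= (s_0 s_0 *,\,(k+l, k, -l)),
\end{align*}
with sign-flipped variants for the $\epsilon = -1$ version of rule~2. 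Each triple is a cocycle in $K(\pi_2,1)_2$, and using the twisted face formula $d_2(y,\mu) = (d_2 y,\,\tau'(y) + d_2\mu)$ together with the $GF_3$-boundary $d_1\overline{\sigma} = (\overline{d_2\sigma})^{-1}\overline{d_1\sigma}$ a direct computation yields the expected prefixes; for example $d_0\overline{\nu_3} = \overline{(x,k)}$ and $d_1\overline{\nu_3} = \overline{(*,-k)}^{-1}\overline{(x,0)}$, which is precisely the prefix change of rule~3.

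The one non-routine input is the evaluation of $\tau'$ on the degenerate simplices appearing in the $\nu_i$: I need $\tau'(s_0 s_0 *) = 0$, $\tau'(s_1 x) = 0$, and $\tau'(s_0 x) = 0$ for every $x \in X_1$. The first two follow from the twisting axiom $\tau'(s_n b) = 1$, since $s_1 x$ and $s_0 s_0 * = s_1 s_0 *$ are last-degeneracies in $X_2$. The identity $\tau'(s_0 x) = 0$ requires the specific construction of $F_3$ from Lemma~\ref{l:Fn}: because $\tau' = \tau \circ \varphi_2$ and $\varphi_2$ is a simplicial map, $\varphi_2(s_0 x) = s_0 \varphi_2(x)$, and $\varphi_2(x) \in K(\pi_2,2)_1 = 0$ since a $1$-simplex carries no $2$-cocycles; hence $\varphi_2(s_0 x) = 0$ and $\tau'(s_0 x) = \tau(0) = 0$.

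The main obstacle is bookkeeping rather than structural: one must keep track of the twisted $d_2$, expand each face operator of $GF_3$ via Definition~\ref{d:G-constr}, and verify in the (almost free) group $(GF_3)_0$ that the resulting $d_1\alpha \cdot w$ really collapses to $w'$. Once the list $\nu_2,\nu_3,\nu_4,\nu_5$ and the vanishing of $\tau'$ on the relevant degenerate $2$-simplices are established, each of rules~2--5 becomes a one-line product computation; the entire proof is then an aggregation of five small explicit witnesses.
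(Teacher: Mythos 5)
Your proposal is correct and takes essentially the same route as the paper: the paper's proof is exactly a list of witnesses $z'=\alpha z\beta$, with rule~1 handled by conjugating $z$ with a degeneracy and rules~2--5 by left-multiplying $z$ by $s_0u\cdot\overline{\eta}^{-1}$ for precisely the $2$-simplices $\nu_2,\dots,\nu_5$ you list (the paper writes $\overline{\eta}^{\epsilon}$ in rule~2 rather than a sign-flipped variant, a cosmetic difference). Your verification that $\tau'$ vanishes on $s_0x$, $s_1x$ and $s_0s_0*$ is a detail the paper leaves implicit, so the proposal is if anything slightly more complete than the paper's own argument.
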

\begin{proof}
In all cases, we assume $z\in Z$ such that $d_1 z = w$ and we give a  formula  for $z' \in Z$ with $d_1 z' = w'$:
\begin{enumerate}
\item$z' = s_0 \overline{(x.k)}^{-\epsilon} \cdot z \cdot s_0 \overline{(x,k)}^{\epsilon}$.
\item $z' = \overline{(*,(k, 0, -k))}^{\epsilon} \cdot (s_0\overline{(*,k)})^{-\epsilon} \cdot z$.
\item $z' = (s_0\overline{(x,k)}) \cdot \overline{(s_0 x,(k, 0, -k))}^{-1} \cdot z$. 
\item $z' = (s_0\overline{(*,k)})\overline{(s_1 x,(k,k,0))}^{-1}\cdot z$.
\item $z' = \overline{(s_0(*,k + l))}\overline{(*,(k + l,k, - l))}^{-1}\cdot z$.
\end{enumerate}
\end{proof}
 
\begin{lemma}\label{l:sequence}
Let $z \in (GF_3)_1$, $z\in Z$  with
\[d_1 z = w  =  \overline{(*,-k_1)}^{-1}\cdot \overline{(x_1, 0)}^{\epsilon_1} \cdots \overline{(*,-k_n)}^{-1} \cdot\overline{(x_n, 0)}^{\epsilon_n}\]
where $\overline{x_1}^{\epsilon_1} \cdots \overline{x_n}^{\epsilon_n} = 1$ in $GX_0$, $x_i \in X$, $k_i \in \pi_2(X)$, $\epsilon_i \in \{1,-1\}$, $1\leq i\leq n$. Then $ w \sim  \overline{(\sum_{i = 1} ^{n} k_i, *)}$.
\end{lemma}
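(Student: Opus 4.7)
The plan is to prove the lemma by induction on $n$, using the rewriting rules of Lemma~\ref{l:rules} to systematically reduce the word $w$ while tracking the total sum $K := \sum_i k_i \in \pi_2(X)$ as an invariant. The base case $n = 0$ is immediate: then $w = 1$ and $K = 0$, while the simplex $(*,0) = s_0(*,0)$ is degenerate so $\overline{(*,0)} = 1$ in $(GF_3)_0$, yielding $w \sim \overline{(*,0)}$ by reflexivity of $\sim$.

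For the inductive step I would split into two subcases according to the structure of the word $\prod_i \overline{x_i}^{\epsilon_i} = 1$ in $GX_0$. If some $x_i$ is degenerate, then $\overline{(x_i,0)}$ is trivial in $(GF_3)_0$, so the block at index $i$ collapses and $\overline{(*,-k_i)}^{-1}$ becomes adjacent to $\overline{(*,-k_{i+1})}^{-1}$ (cyclically). Using rule~(1) to bring this pair to the front, rule~(2) to flip the sign of one factor, and rule~(5) to merge the two into a single $\overline{(*, k_i + k_{i+1})}$, I obtain an equivalent word of length $n-1$ whose $k$-sum is still $K$, and the induction hypothesis finishes the case.

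If every $x_i$ is non-degenerate, then since $\prod \overline{x_i}^{\epsilon_i}$ lies in the free group on non-degenerate generators of $GX_0$ and equals $1$, standard free-group reduction produces an adjacent cancelling pair $j, j+1$ (possibly after a cyclic rotation of $w$) with $x_j = x_{j+1}$ and $\epsilon_{j+1} = -\epsilon_j$. The crux of the proof is then to establish the \emph{merge identity}
\[
\overline{(*,-k_j)}^{-1}\overline{(x_j,0)}^{\epsilon_j}\overline{(*,-k_{j+1})}^{-1}\overline{(x_j,0)}^{-\epsilon_j} \sim \overline{(*, k_j + k_{j+1})}.
\]
For $\epsilon_j = 1$ this follows by applying rule~(3) to collapse the leading $\overline{(*,-k_j)}^{-1}\overline{(x_j,0)}$ into $\overline{(x_j,k_j)}$, rotating via~(1) to bring $\overline{(x_j,0)}^{-1}\overline{(x_j,k_j)}$ to the front, applying rule~(4) to turn it into $\overline{(*,k_j)}$, and finally combining the remaining two $(*,-)$-factors via rules~(2) and~(5). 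For $\epsilon_j = -1$ the argument is symmetric: one first uses rule~(2) at the front to convert $\overline{(*,-k_j)}^{-1}$ into $\overline{(*,k_j)}$, and then an analogous sequence of applications of rules~(3)--(5), together with the trivial group-level cancellation $\overline{(x_j,0)}\overline{(x_j,0)}^{-1} = 1$ (which is automatic because $\sim$ refines equality in the group $(GF_3)_0$), reduces to the same form. After the merge identity is applied, the result is an equivalent word whose new block has a trivial $x$-coordinate, so we are back in the first subcase with $n-1$ blocks and the same total $K$.

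The main technical obstacle I expect is the non-commutativity of $(GF_3)_0$ combined with the fact that rules~(2)--(5) fire only at the \emph{front} of the word; every rewriting step must therefore be bracketed by rule~(1) rotations, and in the $\epsilon_j = -1$ case of the merge identity one has to be especially careful to orient the intermediate words so that rules~(3) and~(4) become applicable. Once this bookkeeping is set up, verifying that $K$ is preserved at each step is routine, and the induction closes.
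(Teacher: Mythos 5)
Your overall strategy---iteratively cancel an adjacent free-group pair $x_j=x_{j+1}$, $\epsilon_{j+1}=-\epsilon_j$, merging the two attached coefficients into a single $\overline{(*,k_j+k_{j+1})}$, and recurse---is exactly the paper's, and your base case and the degenerate-$x_i$ case are fine. The gap is in the execution of your ``merge identity'' inside a word of length $n>2$. All of the rules of Lemma~\ref{l:rules} except rule~(1) rewrite only a \emph{prefix} of the word, and rule~(1) only permits cyclic rotation; so after you apply rule~(3) to the rotated word
$\overline{(*,-k_j)}^{-1}\,\overline{(x_j,0)}\,\overline{(*,-k_{j+1})}^{-1}\,\overline{(x_j,0)}^{-1}\cdot\delta$
(with $\delta$ the remaining $n-2$ blocks), you obtain
$\overline{(x_j,k_j)}\,\overline{(*,-k_{j+1})}^{-1}\,\overline{(x_j,0)}^{-1}\cdot\delta$,
and now the two factors $\overline{(x_j,0)}^{-1}$ and $\overline{(x_j,k_j)}$ that rule~(4) needs to see adjacently at the front are separated by $\overline{(*,-k_{j+1})}^{-1}$ in one cyclic direction and by $\delta\neq 1$ in the other. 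No sequence of rotations can make them adjacent, so rule~(4) never fires and the merge identity cannot be established in context. (It is true for the standalone four-letter word, i.e.\ when $n=2$, which is presumably why the derivation looks plausible.)

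The fix is the arrangement the paper uses: choose the cancelling pair so that its two halves sit at the two \emph{ends} of the word rather than contiguously in the middle. Concretely, relabel so the pair is blocks $1$ and $2$ and rotate block $1$ to the back, giving
$\overline{(*,-k_2)}^{-1}\,\overline{(x_2,0)}\cdot\delta\cdot\overline{(*,-k_1)}^{-1}\,\overline{(x_2,0)}^{-1}$.
Now rule~(3) turns the first two factors into $\overline{(x_2,k_2)}$, which is \emph{cyclically adjacent} to the trailing $\overline{(x_2,0)}^{-1}$; one rotation then enables rule~(4), and rules~(1), (2), (5) fold the resulting $\overline{(*,k_2)}$ together with $\overline{(*,-k_1)}^{-1}$ and then into the next block $\overline{(*,-k_3)}^{-1}$, leaving a word that is literally of the hypothesis form with fewer blocks and the same total $\sum_i k_i$. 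With that modification your induction closes; as written, the inductive step does not go through.
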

\begin{proof}
We achieve the proof using a sequence of equivalences given in Lemma~\ref{l:rules}.
Without loss of generality we can assume that  $x_1 = x_2^{-1}$ and $\epsilon_1, \epsilon_2 = 1$ (If this is not the case, we can use rule \eqref{rot1} and/or relabel the elements). Using \eqref{rot1} gives us
\begin{align*}
w  =& \overline{(*,-k_1)}^{-1} \cdot \overline{(x_2,0)}^{-1} \cdot\overline{(*, -k_2)}^{-1} \cdot \overline{(x_2,0)} \cdots \overline{(*,-k_n)}^{-1} \cdot \overline{(x_n,0)}^{\epsilon_n} \\
\sim &  \overline{(*, -k_2)}^{-1} \cdot \overline{(x_2,0)}\cdots \overline{(*,-k_n)}^{-1} \cdot \overline{(x_n,0)}^{\epsilon_n}\cdot \overline{(*,-k_1)}^{-1} \cdot \overline{(x_2,0)}^{-1}.
\end{align*}
Then successive use of \eqref{kxjoin},\eqref{rot1},\eqref{kxreduce}, \eqref{rot1} and finally \eqref{klplus} gives us
\begin{align*}
w  \sim &  \overline{(x_2,k_2)}\cdots \overline{(*,-k_n)}^{-1} \cdot \overline{(x_n,0)}^{\epsilon_n}\cdot \overline{(*,-k_1)}^{-1} \cdot \overline{(x_2,0)}^{-1}.\\
 \sim & \overline{(x_2,0)}^{-1}\cdot \overline{(x_2,k_2)}\cdots \overline{(*,-k_n)}^{-1} \cdot \overline{(x_n,0)}^{\epsilon_n}\cdot \overline{(*,-k_1)}^{-1} \\
 \sim & \overline{(*,k_2)}\cdots \overline{(*,-k_n)}^{-1} \cdot \overline{(x_n,0)}^{\epsilon_n}\cdot \overline{(*,-k_1)}^{-1}\\ 
 \sim & \overline{(*,k_1 +k_2)}\cdot \overline{(*,-k_3)}^{-1} \cdot \overline{(x_3, 0)}\cdots \overline{(*,-k_n)}^{-1} \cdot \overline{(x_n,0)}^{\epsilon_n}
\end{align*}
multiple use or rules \eqref{invert} and \eqref{rot1} and gives us
\begin{align*}
w \sim & \overline{(*, - k_1 - k_2 - k_3)}^{-1} \cdot \overline{(x_3, 0)}\cdots \overline{(*,-k_n)}^{-1} \cdot \overline{(x_n,0)}^{\epsilon_n}
\end{align*}
So far, we have produced some element $z' \in Z \subseteq (GF_3)_1$ such that $d_0 z' = 1$, 
\[d_1 z'  = \overline{(*, - k_1 - k_2 - k_3)}^{-1} \cdot \overline{(x_3, 0)}\cdots \overline{(*,-k_n)}^{-1} \cdot \overline{(x_n,0)}^{\epsilon_n}\]
and further $\overline{x_3}^{\epsilon_3} \cdots \overline{x_n}^{\epsilon_n}= 1$ in $GX_0$. 

It follows that the construction described above can be applied iteratively  until all elements $\overline{(x_i, 0)}$ are removed and we obtain $w \sim \overline{( -\sum_{i = 1} ^{n} k_i, *)}^{-1} \sim \overline{(\sum_{i = 1} ^{n} k_i, *)}$.
\end{proof}
\begin{proposition}\label{p:loop_contraction}
Let $k\in \pi_2(X)$. Then there is an algorithm that computes an element $z\in (GF_3)_1$ such that 
$d_0 z = \overline{(*,k)}$ and $d_1 z = 1$.
\end{proposition}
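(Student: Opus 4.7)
The plan is to reduce the problem, produce a non-abelian representative of $k$ in the Moore complex of $GX$, and then apply the rewriting rules of Lemma~\ref{l:rules}.

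Reduction: To find $z \in (GF_3)_1$ with $d_0 z = \overline{(*,k)}$ and $d_1 z = 1$, it suffices to find $z' \in Z$ with $d_1 z' = \overline{(*,k)}$: the element $z := s_0(\overline{(*,k)}) \cdot (z')^{-1}$ then satisfies both conditions by the simplicial identities $d_0 s_0 = d_1 s_0 = \mathrm{id}$. By Lemma~\ref{l:sequence}, it further suffices to exhibit any $z^* \in Z$ whose $d_1 z^*$ is $\sim$-equivalent to $\overline{(*,k)}$.

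Construction of $z^*$: Since $X$ is $0$-reduced, $1$-connected, and equipped with a polynomial-time loop contraction $c_0$, the hypotheses of Theorem~\ref{t:eff_hur} hold for $F := X$ and $d := 2$. Applying Lemma~\ref{l:arrow2} to a cycle $z_{ab} \in Z_1(\widetilde{AX})$ representing $k$ under the chain of isomorphisms $H_1(\widetilde{AX}) \cong H_2(X) \cong \pi_2(X)$ (readily computable via Smith normal form on the differentials of $C_*(X)$) yields a non-abelian $1$-cycle $g = \prod_i \overline{y_i}^{\eta_i} \in Z_1(\widetilde{GX})$, satisfying the free-group identities $d_0 g = 1$ and $d_1 g = 1$ in $GX_0$, with $[g] = k$ in $\pi_1(GX) \cong \pi_2(X)$. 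Lifting each generator $\overline{y_i}$ to $\overline{(y_i, 0)} \in (GF_3)_1$ defines $\bar g := \prod_i \overline{(y_i, 0)}^{\eta_i}$. The simplicial projection $(GF_3) \to GX$ is a simplicial group homomorphism under which $\bar g$ maps to $g$, so the identities $d_0 g = d_1 g = 1$ in $GX_0$ imply that $d_0 \bar g = 1$ (whence $\bar g \in Z$) and that $d_1 \bar g$ lies in the kernel of the projection to $GX_0$.

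Rewriting: The kernel of $(GF_3)_0 \to GX_0$ is generated as a normal subgroup by $\{\overline{(*, m)} : m \in \pi_2\}$, and under the equivalence $\sim$—which identifies conjugates $\overline{(x, 0)} \overline{(*, m)} \overline{(x, 0)}^{-1}$ with $\overline{(*, m)}$ via cyclic rotation \eqref{rot1}, and adds $\pi_2$-labels via~\eqref{klplus}—every kernel element reduces to a single $\overline{(*, \sigma)}$. Rewriting $d_1 \bar g$ explicitly: each factor $\overline{(d_2 y_i, \tau'(y_i))}^{-1}$ becomes, via rule~\eqref{kxjoin} applied in reverse, $\overline{(d_2 y_i, 0)}^{-1}\overline{(*, -\tau'(y_i))}$, and the resulting $\overline{(d_2 y_i, 0)}$ and $\overline{(d_1 y_i, 0)}$ factors cancel using rules~\eqref{rot1} and~\eqref{kxreduce} according to the explicit free-group cancellation sequence that witnesses $d_1 g = 1$ in $GX_0$. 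The $\pi_2$-labels accumulate via rule~\eqref{klplus} to $\sigma = -\sum_i \eta_i \tau'(y_i) = -k$, where the identity $\sum_i \eta_i \tau'(y_i) = k$ follows from $\tau' = \tau\varphi_2$ being constructed to represent the Hurewicz fundamental class. Passing to $\bar g^{-1} \in Z$ and applying rule~\eqref{invert} gives $d_1(\bar g^{-1}) \sim \overline{(*, k)}$, completing the construction.

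The main obstacle is the rewriting step: because $(GF_3)_0$ is non-abelian, the cancellations of the $\overline{(d_2 y_i, 0)}$ and $\overline{(d_1 y_i, 0)}$ factors cannot proceed by commutativity and must follow the explicit free-group reduction sequence witnessing $d_1 g = 1$. The algorithmic translation of this reduction into interleaved applications of rules~\eqref{rot1}, \eqref{kxreduce}, and~\eqref{klplus} is routine but requires careful bookkeeping, and polynomial running time depends on the output of Lemma~\ref{l:arrow2} being polynomial in the input size.
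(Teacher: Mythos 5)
Your overall route is the same as the paper's: use Lemma~\ref{l:arrow2} (for $F=X$, $d=2$) to produce a non-abelian cycle $\gamma'=\prod_i\overline{y_i}^{\eta_i}\in Z_1(\widetilde{GX})$ representing $k$, lift it to an element of $Z\subseteq (GF_3)_1$, reduce its $d_1$-image to $\overline{(*,\pm k)}$ using the rewriting rules, and finish with $z=s_0\overline{(*,k)}\cdot(z'')^{-1}$. The difference, and the source of the gap, is your choice of lift. With $\bar g=\prod_i\overline{(y_i,0)}^{\eta_i}$ one gets
$d_1\bar g=\prod_i\bigl(\overline{(d_2y_i,\tau'(y_i))}^{-1}\,\overline{(d_1y_i,0)}\bigr)^{\eta_i}$,
which is \emph{not} of the form treated by Lemma~\ref{l:sequence}, so that lemma cannot simply be cited; you must redo the reduction by hand, and the sketch you give does not go through as stated. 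After applying rule~\eqref{kxjoin} in reverse, the factors $\overline{(*,-\tau'(y_i))}$ sit \emph{between} the $X$-letters $\overline{(d_2y_i,0)}^{-1}$ and $\overline{(d_1y_j,0)}$ whose cancellation you want to mimic; the rules of Lemma~\ref{l:rules} act only on the leading letters of the word (up to cyclic rotation via~\eqref{rot1}) and provide no way to commute an $\overline{(*,m)}$ past an $\overline{(x,0)}$ in the middle of a word, so the free-group cancellation witnessing $d_1\gamma'=1$ does not lift letter-for-letter. Your supporting claim that the kernel of $(GF_3)_0\to GX_0$ is normally generated by the $\overline{(*,m)}$ is also false: $\overline{(x,m)}\,\overline{(x,0)}^{-1}$ lies in the kernel but not in that normal closure (its image in the quotient by that normal closure is nontrivial); it is only the coarser relation $\sim$, via rule~\eqref{kxreduce}, that collapses such elements, and establishing that every word of your shape collapses to a single $\overline{(*,\sigma)}$ is exactly the content that must be proved, not assumed.

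The paper resolves precisely this difficulty by a different lift: it sets
$z'=\bigl(\prod_i\overline{(s_0d_0y_i,(k_i,0,-k_i))}^{\epsilon_i}\bigr)\cdot\bigl(\prod_i\overline{(y_i,(k_i,0,-k_i))}^{\epsilon_i}\bigr)^{-1}$ with $k_i=\tau'(y_i)$. The second $K(\pi_2,1)$-coordinate $(k_i,0,-k_i)$ is chosen so that the twisting cancels in $d_1$ of the second factor, and the prefactor $\prod_i\overline{(s_0d_0y_i,(k_i,0,-k_i))}^{\epsilon_i}$ both restores $d_0z'=1$ and deposits the labels $\overline{(*,-k_i)}^{-1}$ in exactly the alternating positions $\prod_i\bigl(\overline{(*,-k_i)}^{-1}\overline{(d_0y_i,0)}\bigr)^{\epsilon_i}$ with $\prod_i\overline{d_0y_i}^{\epsilon_i}=1$ in $GX_0$ --- which is verbatim the hypothesis of Lemma~\ref{l:sequence}. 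To repair your argument you would either need to prove a variant of Lemma~\ref{l:sequence} for your word shape (essentially re-deriving the combinatorics you labelled ``routine bookkeeping''), or switch to the paper's lift, after which Lemma~\ref{l:sequence} applies directly.
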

\begin{proof}
Given an element $k\in \pi_2 \cong H_2(X)$, one can compute a cycle $\gamma \in Z_2 (X)$ such that
\[
 [\gamma]  = k \in  \pi_2(X) \cong H_2(X) \cong H_2(K(\pi_2, 2)) \cong \pi_2(K(\pi_2, 2)),
\]
were the middle isomorphism is induced by $\varphi_2$ and the other isomorphisms follow from Hurewicz theorem.

If one considers $\gamma \in \widetilde{AX}_1$ then by Lemma~\ref{l:arrow2} one can algorithmically compute a spherical element $\gamma' = \overline{y_1}^{\epsilon_1}\cdots \overline{y_n}^{\epsilon_n} \in \widetilde{GX}_1$ where $y_i \in X_2$ and $\tau' y_i = k_i \in \pi_2 (X)$, such that $d_0 \gamma' = 1 = d_1 \gamma'$ and $\sum_{i=1}^{n} \epsilon_i \cdot k_i  = k$.

We define $z' \in (GF_3)_1 $ by
\[
z' = (\prod_{i = 1} ^{n} \overline{(s_0 d_0 y_i,(k_i, 0, -k_i))}^{\epsilon_i}) \cdot (\prod_{i = 1} ^{n} \overline{(y_i,(k_i, 0, -k_i))}^{\epsilon_i})^{-1}.
\]
Observe that $d_0(z') = 1$ and 
\[d_1 z'  = \big(\overline{(*, -k_1)}^{-1} \cdot \overline{(d_0 y_1,0 )}\big)^{\epsilon_1}\cdots \big(\overline{(*, -k_n)}^{-1} \cdot \overline{(d_0 y_n,0)}\big)^{\epsilon_n}.\]
We apply Lemma~\ref{l:sequence} on $z'$ and get an element $z'' \in (GF_3)_1$ with the property $d_0 z'' = 1$ and $d_1 z'' = \overline{(*, k)}$.
We define $z = s_0 \overline{(*, k)} \cdot (z'')^{-1}$. Thus $d_0 z = \overline{(*,k)}$ and $d_1 z = 1$.
\end{proof}
\heading{Computational complexity.}
We first observe that that formulas for $c_0$ on a general element $\overline{(x,k)}$ depend polynomially on the size of $c_0 (\overline{x})$ and the size of contractions on $\overline{(*,k)}$. Hence it is enough to analyse the complexity of the algorithm described in Proposition~\ref{p:loop_contraction}:

The computation of $\gamma'$ is obtained by the polynomial-time Smith normal form algorithm presented in~\cite{KannanBachem} and the polynomial-time algorithm in Lemma~\ref{l:arrow2}. The size of $z'$ depends polynomially (in fact linearly) on size of $\gamma'$. The algorithm described in Lemma~\ref{l:sequence} runs in a linear time in the size of $z'$.

To sum up, the algorithm computes the formula for contraction on the elements of $GF_i$ in time polynomial in the input ($\size\, X + \size\, c_0 (GX)$).




%
\section{Reconstructing a Map to the Original Simplicial Complex}
\label{sec:map-into-complex}
This section contains the proof of Lemma~\ref{l:lift_to_X}.
\label{s:edgewise}
\heading{Edgewise subdivision of simplicial complexes.}
In~\cite{Edelsbrunner:1999}, the authors present, for $k\in\N$, the \emph{edgewise subdivision} $\Esd_k(\Delta^m)$
of an $m$-simplex $\Delta^m$ that generalizes the two-dimensional sketch displayed in Figure~\ref{fig:edgewise}. 
This subdivision has several nice properties: in particular, the number of simplices of $\Esd_k(\Delta^m)$ grows polynomially with $k$. 
Explicitly, the subdivision can be represented as follows.
\begin{itemize}
\item The vertices of $\Esd_k(\Delta^m)$ are labeled by coordinates $(a_0,\ldots, a_{m})$ such that $a_j\geq 0$ and $\sum_j a_j=k$.
\item Two vertices $(a_0,\ldots a_m)$ and $(b_0,\ldots, b_m)$ are \emph{adjacent}, if there is a pair $j<k$ such that $|b_j-a_j|=|b_k-a_k|=1$
and $a_i=b_i$ for $i\neq j,k$. 
\item Simplices of $\Esd_k(\Delta^m)$ are given by tuples of vertices such that each vertex of a~simplex is adjacent to each other vertex.
\end{itemize}
We define the \emph{distance} of two vertices to be the minimal number of edges between them.
\begin{figure}
\begin{center}
\includegraphics{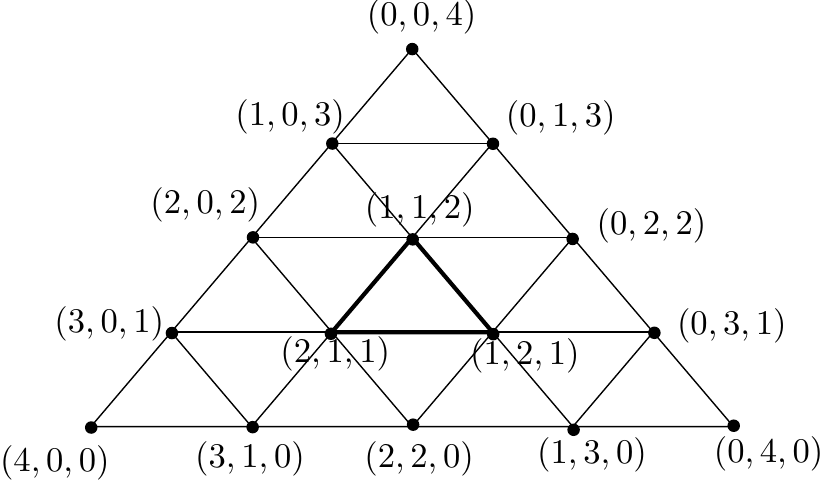}
\end{center}
\caption{Edgewise subdivision of a~$2$-simplex for $k=4$. 
In this case, there exists a copy of the $2$-simplex completely in the ``interior'', defined by vertices $(2,1,1)$, $(1,2,1)$ and
$(1,1,2)$. All other vertices are at the ``boundary'': more formally, their coordinatates contain a~zero.}
\label{fig:edgewise}
\end{figure}
An edgewise $k$-subdivision of $\Delta^m$ induces an edgewise $k$-subdivision of all faces, hence we may naturally define 
an~edgewise subdivision of any simplicial complex. 

\heading{Constructing the map $\Esd_k(\Sigma)\to X^{sc}$.}
Let $R$ be a chosen root in the tree $T$. 
We denote the tree-distance of a~vertex $W$ from $R$ by $\mathrm{dist}_T(W)$. 
Let 
$$
l:=\max\{\mathrm{dist}_T(V):\,\,V\,\text{ is a vertex of } X^{sc}\}
$$ 
be the maximal tree-distance of some vertex from $R$.
For each vertex $V$ of $X^{sc}$, there is a~unique path in the spanning tree that goes from $R$ into $V$. 
Further, we define the maps
$M(j): (X^{sc})^{(0)}\to (X^{sc})^{(0)}$ from vertices of $X^{sc}$ into vertices of $X^{sc}$ such that 
\begin{itemize}
\item $M(j)(V):=V$ if $j\geq \mathrm{dist}_T(V)$, and
\item $M(j)(V)$ is the vertex on the unique tree-path from $R$ to $V$ that has tree-distance $j$ from $R$, if $j < \mathrm{dist}_T(V)$.
\end{itemize}
If, for example, $R-U-V-W$ is a~path in the tree, then $M(0)(W)=R$, $M(1)(W)=U$ etc.
Clearly, $M(l)=M(l+1)=\ldots$ is the identity map, as $l$ equals the longest possible tree-distance of some vertex.

Assume that $d$ is the dimension of $\Sigma$ and $k:=l(d+1)+1$. 
We will define $f': \Esd_k(\Sigma)\to X^{sc}$ simplexwise. 
Let $\tau\in\Sigma$ be an~$m$-simplex and $f(\tau)=\tilde{\sigma}\in{X}$ be its 
image in the simplicial set ${X}$.  If ${\sigma}$ is the degeneracy of the base-point $*\in X$, then we define 
$f'(x):=R$ for all vertices $x$ of $\Esd_k(\tau)$: in other words, $f'$ will be constant on the subdivision of $\tau$.
Otherwise, $\tilde{\sigma}$ is not the degeneracy of a~point and has a~unique lift $\sigma\in X^{ss}$. (Recall that $X:=X^{ss}/T$.)
Let $(V_0,\ldots, V_m)$ be the vertices of $\sigma$ (order given by orientation): 
these vertices are not necessarily different, as $\sigma$ may be degenerate. 

In the algorithm, we will need to know which faces of $\sigma$ are in the tree $T$. We formalize this as follows:
let $S\subseteq 2^m$ be the family of all subsets of $\{0,1,\ldots, m\}$ such that 
\begin{itemize}
\item For each $\{i_0,\ldots, i_j\}\in S$, $\{V_{i_0}, \ldots, V_{i_j}\}$ is in the tree (that is, it is either an edge or a~single vertex),
\item Each set in $S$ is maximal wrt. inclusion.
\end{itemize}
Elements of $S$ correspond to maximal faces of $\sigma$ that are in the tree, in other words, to faces of $\tilde{\sigma}$ 
that are degeneracies of the~base-point.

\begin{definition}
\label{d:shade}
Let $\Delta^m$ be an oriented~$m$-simplex, represented as a~sequence of vertices $(e_0,\ldots, e_m)$.
For any face $s\subseteq \{e_0,\ldots, e_m\}$, we define the \emph{extended face $\mathcal{E}(s)$ in $\Esd_k(\Delta^m)$} 
to be the set of vertices $(x_0,\ldots, x_m)$ in $\Esd_k(\Delta^m)$ that have nonzero coordinates only on positions $i$ such that $e_i\in S$.
\end{definition}
The geometric meaning of this is illustrated by Figure~\ref{fig:extended}.

\begin{definition}
\label{def:ext-tree}
For $S\subseteq 2^m$, we define the \emph{extended tree} $\mathcal{E}(T)$ to be the union of the extended faces $\mathcal{E}(s)$ in $\Esd_k(\Delta^m)$
for all $s\in S$.
The edge-distance of a~vertex $x$ in $\Esd_k(\Delta^m)$ from $\mathcal{E}(T)$ will be denoted by $\mathrm{dist}_{ET}(x)$.
\end{definition}
\begin{figure}
\begin{center}
\includegraphics{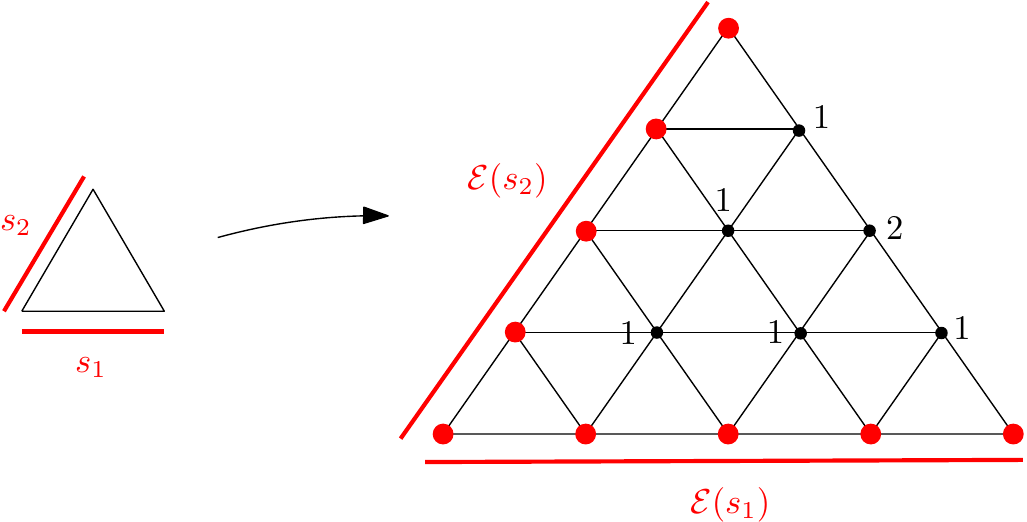}
\end{center}
\caption{Illustration of extended faces. Here $S=\{s_1,s_2\}$ corresponds to the lower- and left-face of a~$2$-simplex. The extended faces
$\mathcal{E}(s_1)$ and $\mathcal{E}(s_2)$ are sets of vertices of $\Esd_k(\Delta^2)$ that are on the lower- and left- boundary. The corresponding
extended tree $\mathcal{E}(T)$ is the union of all these vertices. The integers indicate edge-distances $\dist_{ET}$ of 
vertices in $\Esd_k(\Delta^2)$ from $\mathcal{E}(T)$.}
\label{fig:extended}
\end{figure}
In words, $\mathcal{E}(T)$ it is the union of all vertices in parts of the boundary of $\Esd_k(\Delta^m)$ that correspond 
to the faces of $\sigma$ that are in the tree, see Fig.~\ref{fig:extended}. The number $\mathrm{dist}_{ET}(x)$ 
is~the distance to $x$ from those boundary parts that correspond to faces of $\sigma$ that are in the tree.

To define a~simplicial map from $\Esd_k(\tau)$ to $X^{sc}$,
we need to label vertices of $\Esd_k(\tau)$ by vertices of $X^{sc}$ such that the induced map takes simplices in 
$\Esd_k(\tau)$ to simplices in $X^{sc}$. 
Recall that $V_0,\ldots,V_m$ are the vertices of $\sigma$. For $x=(x_0,\ldots, x_m)$, we denote by $\argmax\,x$
the smallest index of a~coordinate of $x$ among those with maximal value 
(for instance, $\argmax\,(4,2,1,4,0)=0$, as the first $4$ is on position $0$). The geometric meaning of $V_{\argmax x}$ is illustrated
by Figure~\ref{fig:argmax}.
\begin{figure}
\begin{center}
\includegraphics{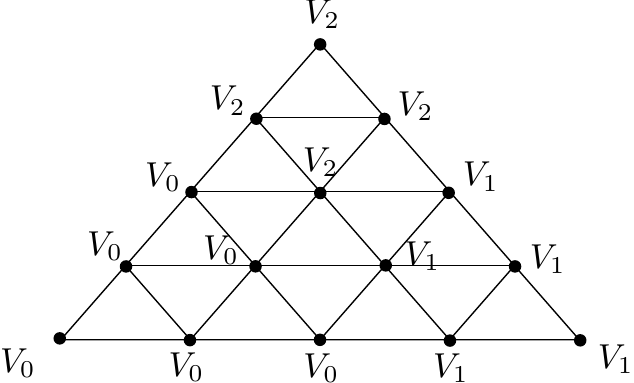}
\end{center}
\caption{Labelling vertices of $\Esd_k(\Delta^2)$ by $V_{\argmax x}$.}
\label{fig:argmax}
\end{figure}

Now we are ready to define the map $f': \Esd_k(\tau)\to X^{sc}$. It is defined on vertices $x$ with coordinates $(x_0,\ldots,x_m)$ by
\begin{equation}
\label{e:def_f'}
f'(x_0,\ldots, x_m) := M(\mathrm{dist}_{ET}(x))(V_{\argmax \,\,x}).
\end{equation}
Geometrically, most vertices $x$ will be simply mapped to $V_j$ for which the $j$'th coordinate of $x$ is dominant. 
In particular, a~unique $m$-simplex ``most in the interior of $\Esd_k(\tau)$'' with coordinates 
\begin{equation}
\label{e:inner_simplex}
\begin{pmatrix}
j+1\\
j\\
\ldots \\
j \\
\hline 
j+1 \\
\ldots \\
j+1
\end{pmatrix}^T,
\begin{pmatrix}
j\\
j+1\\
\ldots \\
j \\
\hline 
j+1 \\
\ldots \\
j+1
\end{pmatrix}^T,
\ldots,
\begin{pmatrix}
j\\
j\\
\ldots \\
j+1 \\
\hline 
j+1 \\
\ldots \\
j+1
\end{pmatrix}^T,
\begin{pmatrix}
j\\
j\\
\ldots \\
j \\
\hline 
j+2 \\
\ldots \\
j+1
\end{pmatrix}^T, \ldots,
\begin{pmatrix}
j\\
j\\
\ldots \\
j \\
\hline 
j+1 \\
\ldots \\
j+2
\end{pmatrix}^T
\end{equation}
for suitable $j$ will be labeled by $V_0,V_1,\ldots, V_m$; in other words, it will be mapped to $\sigma$.\footnote{If $\dim(\tau)=d$ 
is maximal, then $j=l$ and this most-middle simplex has particularly nice coordinates $(l+1,l,\ldots, l), \ldots, (l,\ldots, l,l+1)$.} 

However, vertices $x$ close to those boundary parts
of $\Esd_k(\tau)$ that correspond to the tree-parts of $\sigma$, will be mapped closer to the root $R$ and all the extended tree $\mathcal{E}(T)$
will be mapped to $R$. One illustration is in Figure~\ref{fig:lift_to_X}.
\begin{figure}[h!]
\begin{center}
\includegraphics[width=13 cm]{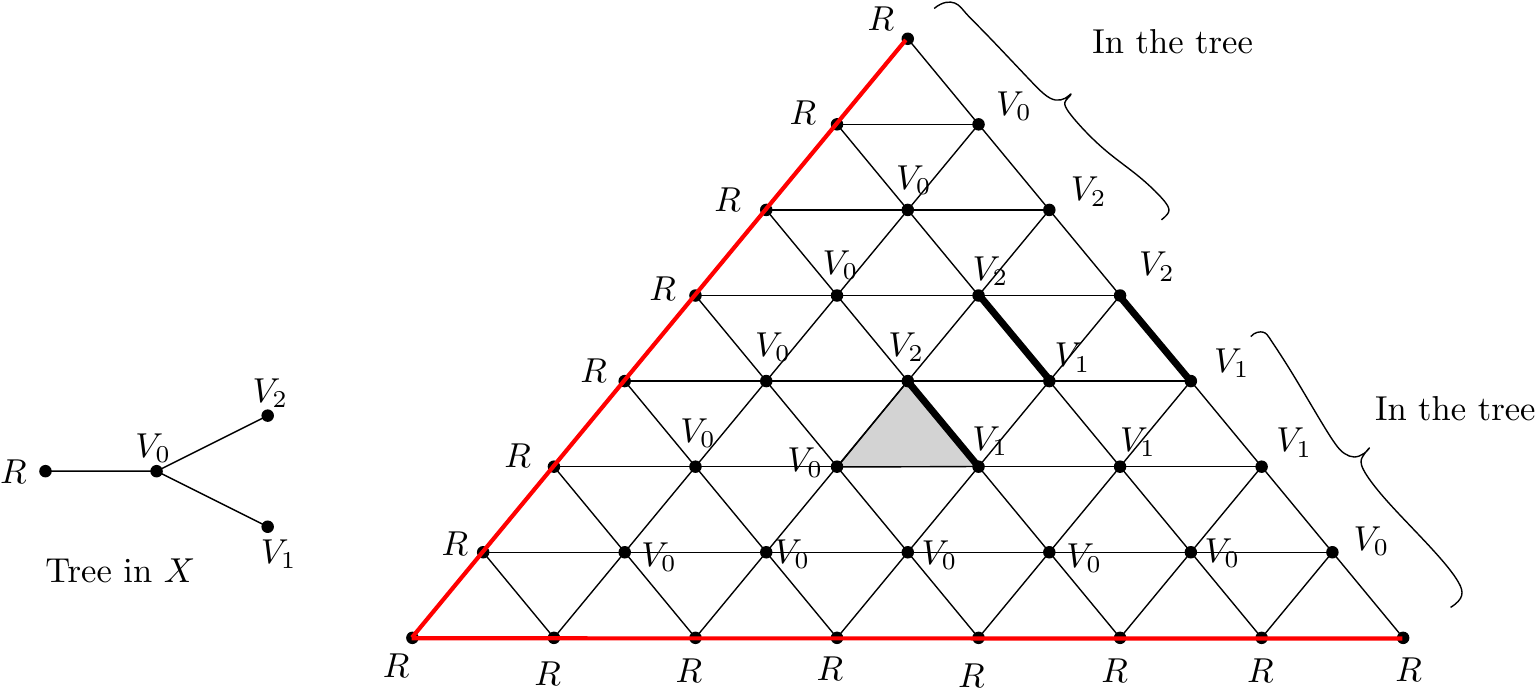}
\end{center}
\caption{Example of the labeling induced by formula (\ref{e:def_f'}). We assume that $f(\tau)=\tilde{\sigma}$ where $\sigma$ is a~simplex of $X^{sc}$
with three different vertices $V_0 V_1 V_2$. In this example, the tree connects $R-V_0-V_1$ as well as $R-V_0-V_2$ and the edge $V_1V_2$ is \emph{not}
in the tree. On the right, we give the induced labeling of vertices of $\Esd_k(\tau)$ which determines a~simplicial map to $X^{sc}$. 
The bottom and left faces
of $\sigma$ are in the tree, hence the bottom and left extended faces in $\Esd_k(\tau)$ are all mapped into $R$. The right face of $\sigma$ is
the edge $V_1 V_2$ that is not in the tree: the corresponding right extended face in $\Esd_k(\tau)$ is mapped to a~loop $R-V_0-V_1-V_2-V_0-R$, where
$V_1 V_2$ is the only part that is \emph{not} in the tree. The most interior simplex in $\Esd_k(\tau)$ is highlighted and is the only
one mapped to $\sigma$.}
\label{fig:lift_to_X}
\end{figure}

\heading{Computational complexity.} Assuming that we have a~given encoding of $\Sigma,f,{X},X^{sc}$ and a~choice of $T$ and $R$, defining a~simplicial map
$f':\Esd_k(\Sigma)\to X^{sc}$ is equivalent to labeling vertices of $\Esd_k(\Sigma)$ by vertices of $X^{sc}$. 
Clearly, the maximal tree-distance $l$ of some vertex depends only polynomially on the size of $X^{sc}$ and can be computed in polynomial time,
as well as the maps $M(0),\ldots, M(l)$. Whenever $j>l$, we can use the formula $M(j)=\mathrm{id}$.
Further, $k=l(d+1)+1$ is linear in $l$, assuming the dimension $d$ is fixed.
If $\tau\in\Sigma$ is an~$m$-simplex, then the number of vertices in $\Esd_k(\tau)$ is polynomial\footnote{Here the assumption on the fixed dimension 
$d$ is crucial.} in~$k$, and their coordinates 
can be computed in polynomial time.
Finding the lift $\sigma$ of $f(\tau)=\tilde{\sigma}$ is at most a~linear operation in $\size(X^{sc})+\size(\tilde{\sigma})$.
Converting $\sigma\in X^{ss}$ into an ordered sequence $(V_0,V_1,\ldots, V_m)$ amounts to computing its vertices
$d_0 d_1 \ldots \hat{d_i}\ldots, d_m \sigma$, where $d_i$ is omitted.
Collecting information on faces of $\sigma$ that are in the tree and the set of vertices $\mathcal{E}(T)$ is straight-forward: note that
assuming fixed dimensions, there are only constantly many faces of each simplex to be checked.
If $s=\{i_0,\ldots, i_j\}$ is a~face, then the edge-distance of a~vertex $x$ from $\mathcal{E}(s)$ equals to 
$\sum_{u} x_{i_u}$.
Applying formula (\ref{e:def_f'}) to $x$ requires to compute the edge-distance of $x$ from $\mathcal{E}(T)$:
this equals to the minimum of the edge-distances of $x$ from $\mathcal{E}(s)$ for all faces $s$ of $\sigma$ that are in the tree.
Computing $\argmax x$ is a~trivial operation. 
Finally, the number of simplices $\tau$ of $\Sigma$ is bounded by the size of $\Sigma$, so
applying (\ref{e:def_f'}) to each vertex of $\Esd_k(\Sigma)$ only requires polynomially many steps in $\size(\Sigma,f,X^{sc},T,{X})$.

\heading{Correctness.} What remains is to prove that formula (\ref{e:def_f'}) defines a~well-defined simplicial map and that
$|\Esd_k(\Sigma)|\to |X^{sc}|\to |{X}|$ is homotopic to $|\Sigma|\to |{X}|$.
\begin{lemma}
\label{l:correctness}
The above algorithm determines a~well-defined simplicial map $\Esd(\Sigma)\to X^{sc}$.
\end{lemma}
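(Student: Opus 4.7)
I would prove two claims separately: well-definedness of $f'$ on each vertex of $\Esd_k(\Sigma)$ that is shared by several simplices of $\Sigma$, and the simplicial property that $f'$ sends simplices of $\Esd_k(\Sigma)$ to faces of $X^{sc}$. For well-definedness, a vertex $x$ of $\Esd_k(\Sigma)$ lies in a unique minimal face $\tau_0$ of $\Sigma$, and for any larger $\tau\supseteq\tau_0$ containing $x$, the value of formula~(\ref{e:def_f'}) computed via $\tau$ must coincide with that computed via $\tau_0$. Since all coordinates of $x$ outside $\tau_0$ vanish, $V_{\argmax x}$ selects the same vertex of the lifted $\sigma$ from either perspective, and the distance from $x$ to any extended tree face $\mathcal{E}(s)$ of the lift equals the distance to $\mathcal{E}(s\cap\tau_0)$ inside $\Esd_k(\tau_0)$, so $\mathrm{dist}_{ET}(x)$ is also intrinsic. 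The only subtle case is when $f(\tau_0)$ is a degeneracy of the basepoint while $f(\tau)$ is not: then $\tau_0$ corresponds to a tree face of the lift of $f(\tau)$, so $\Esd_k(\tau_0)\subseteq\mathcal{E}(T)$ forces $\mathrm{dist}_{ET}(x)=0$, and hence $f'(x)=M(0)(V_{\argmax x})=R$, in agreement with the special rule.

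For the simplicial property I would fix a simplex $S=\{x_{(0)},\ldots,x_{(r)}\}$ of $\Esd_k(\Sigma)$; it lies in $\Esd_k(\tau)$ for some $\tau\in\Sigma$. If $f(\tau)$ is a degeneracy of the basepoint the image is $\{R\}$ and we are done, so assume otherwise and let $\sigma\in X^{ss}$ be the unique lift of $f(\tau)$, with vertex sequence $V_0,\ldots,V_m$. Because vertices of $S$ are pairwise adjacent in $\Esd_k$ and $\mathrm{dist}_{ET}$ is $1$-Lipschitz on the adjacency graph, the values $\mathrm{dist}_{ET}(x_{(j)})$ all lie in $\{d_0,d_0+1\}$ where $d_0:=\min_j\mathrm{dist}_{ET}(x_{(j)})$. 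In the easy case $d_0\geq l$, every $M(d)$ acts as the identity on each $V_i$, so the image lies in $\{V_0,\ldots,V_m\}$, which as a set is the vertex set of a face of $X^{sc}$ (even when $\sigma$ is degenerate in $X^{ss}$, since its non-degenerate kernel is a face of $X^{sc}$).

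The main obstacle is the case $d_0<l$. I would pick $x_{(j_0)}\in S$ realizing $d_0$ together with a tree face $s$ of $\sigma$ (a position subset of $\{0,\ldots,m\}$ whose vertex set forms a simplex of $T$) satisfying $\mathrm{dist}(x_{(j_0)},\mathcal{E}(s))=d_0$. The $1$-Lipschitz bound gives $\sum_{V_i\notin s}y_i\leq d_0+1$ for every $y\in S$, so the maximum coordinate of $y$ on positions in $s$ is at least $(k-d_0-1)/|s|$, while the maximum outside $s$ is at most $d_0+1$. Substituting $k=l(d+1)+1$, $d_0\leq l-1$, and $|s|\leq d$ yields the strict inequality $(d_0+1)(|s|+1)<k$, which forces the maximum on $s$ to strictly exceed the maximum on its complement; hence $\argmax y\in s$ for every $y\in S$. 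The boundary case $|s|=d+1$ with $m=d$ must be excluded separately: it would force the vertex set of $\sigma$ to be a simplex of $T$, and since $T$ is $1$-dimensional this vertex set has at most two elements, which in turn forces $\tilde\sigma=f(\tau)$ to be a degeneracy of the basepoint and contradicts our standing assumption.

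To finish, I use that $T$ is a tree, so the vertex set of $s$ contains at most two vertices of $X^{sc}$, and if exactly two then they are tree-adjacent. A short case analysis on the values $M(d_0)(V)$ and $M(d_0+1)(V)$ as $V$ ranges over this vertex set shows that $\{f'(x_{(j)})\}$ is always either a single vertex, two consecutive vertices on a tree path from $R$, or the two endpoints of a single tree edge; each such set is a face of $X^{sc}$. The tightness of the choice $k=l(d+1)+1$ is precisely what makes the critical inequality strict, which is why the subdivision parameter is chosen exactly this large.
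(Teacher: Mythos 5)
Your proof is correct, and its first half (compatibility of the labeling across faces, reduced to the minimal face $\tau_0$ containing a vertex, plus the observation that $M(0)\equiv R$ reconciles the basepoint-degeneracy rule with the general formula) matches the paper's argument. The second half rests on the same numerical fact as the paper's — the choice $k=l(d+1)+1$ forces $\argmax y$ into a tree face $s$ whenever $y$ is close to $\mathcal{E}(T)$, via $(d_0+1)(|s|+1)\le l(d+1)<k$, which is the paper's inequality $(j+1)(d+1)\ge k\Rightarrow j\ge l$ in contrapositive form — but you organize it differently: the paper argues by contradiction on a single pair of vertices $x,y$ with $\argmax$ at positions $0,1$ and $V_0V_1$ not a tree edge, whereas you treat the whole simplex at once, split on $d_0\ge l$ versus $d_0<l$ using the $1$-Lipschitz property of $\dist_{ET}$, localize all argmaxes into one tree face, and then explicitly enumerate the possible images $\{M(d_0)(V),M(d_0+1)(V)\}$ over the (at most two, tree-adjacent) vertices of that face. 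Your version buys a complete positive description of the image of every simplex and makes explicit two points the paper glosses over: the exclusion of the boundary case $|s|=m+1$ (which would force $f(\tau)$ to be a basepoint degeneracy), and the step from ``the simplex is not of one of the three claimed types'' to ``there exists a bad pair,'' which in the paper's contradiction setup is not entirely immediate. The paper's version is shorter but leaves the terminal identification of the image (``a simplex in either case'') implicit.
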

\begin{proof}
First we claim that formula (\ref{e:def_f'}) defines a~global labeling of vertices of $\Esd_k(\Sigma)$ by vertices of $X^{sc}$.
For this we need to check that if $\tau'$ is a~face of $\tau$, then (\ref{e:def_f'}) maps vertices of $\Esd_k(\tau')$ 
compatibly. This follows from the following facts, each of them easily checkable:
\begin{itemize}
\item If $\tau'$ is spanned by vertices of $\tau$ corresponding to $s\subseteq \{0,\ldots, m\}$, then
a vertex $x':=(x_0,\ldots,x_j)$ in $\Esd_k(\tau')$ has coordinates $x$ in $\Esd_k(\tau)$
equal to zero on positions $\{0\ldots, m\}\setminus s$ and to $x_0,\ldots, x_j$ on other positions, successively.
\item If $V_k':=V_{i_k}$ for $s=(i_0,\ldots,i_j)$ are the vertices of the corresponding face of $\sigma$, then 
$$V_{\argmax x'}'=V_{\argmax x}$$
\item The extended tree $\mathcal{E'}(T)$ in $\Esd_k(\tau')$ equals the intersection of the extended tree in $\Esd_k(\tau)$
with $\mathcal{E}(\tau')$
\item The distance $\mathrm{dist}_{ET}(x')$ in $\Esd_k(\tau')$ equals $\mathrm{dist}_{ET}(x)$ in $\Esd_k(\tau)$.
\end{itemize}

Further, we need to show that this labeling defines a~well-defined simplicial map, that is, it maps simplices to simplices.
We claim that each simplex in $\Esd_k(\tau)$ is mapped either to some subset of $\{V_0,\ldots, V_m\}$ or to
some edge in the tree $T$, or to a~single vertex.

We will show the last claim by contradiction. Assume that some simplex is \emph{not} mapped to a~subset of~$\{V_0,\ldots, V_m\}$, 
and also it is \emph{not} mapped to an~edge of the tree and \emph{not} mapped to a~single vertex.  Then 
there exist two vertices $x$ and $y$ in this simplex that are labeled by $U$ and $W$ in $X^{sc}$, such that either $U$ or $W$ is not in
$\{V_0,\ldots, V_m\}$, $UW$ is not in the tree, and $U\neq W$. 

The fact that at least one of $\{U,W\}$ does not belong to $\{V_0\ldots, V_m\}$, implies that
$\dist_{ET}(x)<l$ or $\dist_{ET}(y)<l$ (as $M(j)$ maps each $V_{\argmax x}$ to itself for $j\geq l$). 

Without loss of generality, assume that $\argmax x=0$ and $\argmax y=1$. Then the coordinates of $x$ and $y$ are either
$$
x=(j+1,j,x_3,\ldots, x_m),\quad y=(j,j+1,x_3,\ldots, x_m)
$$
such that $x_i\leq j+1$ for all $i\geq 3$, or 
$$
x=(j,j,x_3,\ldots, x_m),\quad y=(j-1,j+1,x_3,\ldots, x_m)
$$
for some $j$ such that $x_i\leq j$ for all $i\geq 3$. 

We claim that $V_0\neq V_1$ and that the edge $V_0V_1$ is~\emph{not} in the tree. 
This is because there exists a~tree-path from $R$ via $U$ to $V_0$ and
also a tree-path from $R$ via $W$ to $V_1$ (and $U\neq W$): both $V_0=V_1$ as well as a~tree-edge $V_0 V_1$ would create a~circle in the tree.
In coordinates, this means that vertices $(*,*,0,0,\ldots, 0)$ are not contained in $\mathcal{E}(T)$, apart of $(k,0,0,\ldots, 0)$ and 
$(0,k,0,\ldots, 0)$. So, any vertex in $\mathcal{E}(T)$ has a~zero on either the zeroth or the first coordinate.
This immediately implies that $\dist_{ET}(x)\geq j$ and $\dist_{ET}(y)\geq j$. Keeping in mind that coordinates of $x$ (and $y$)
has to sum up to $k=l(d+1)+1$, the smallest possible value of $j$ is $j=l$ (if $m=d$ is maximal), 
in which case $x=(l+1, l,l,\ldots ,l)$ and $y=(l,l+1,\ldots, l)$.
This choice, however, would contradict the fact that either $\dist_{ET}(x)<l$ or $\dist_{ET}(y)<l$.
Therefore we have a~strict inequality $j>l.$
Finally, we derive a~contradiction having either $\dist_{ET}(x)\geq j>l>\dist_{ET}(x)$, or a~similar inequality for $y$.

This completes the proof that each simplex is either mapped to 
a~subset of $\{V_0,\ldots, V_m\}$ or to an~edge in the tree or to a~single vertex:
the image is a~simplex in $X^{sc}$ in either case.
\end{proof}
\begin{lemma}
\label{l:homotopy_f_f'}
The geometric realisations of $pf': \Esd_k(\Sigma)\to {X}$ and $f: \Sigma\to {X}$ are homotopic.
\end{lemma}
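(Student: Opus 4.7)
The plan is to build an explicit homotopy $H \colon |\Sigma| \times I \to |{X}|$ between $|f|$ and $|p| \circ |f'|$, viewing both as continuous maps on the same underlying space $|\Sigma| = |\Esd_k(\Sigma)|$. Both maps agree trivially on the $0$-skeleton: every vertex of $\Sigma$ maps to $*$ under $|f|$ because $X$ is $0$-reduced, and every vertex of $\Esd_k(\Sigma)$ maps to some vertex of $X^{sc}$, all of which project to $*$. So $H$ can be taken as the constant homotopy on the $0$-skeleton.

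The central observation is that the construction of $f'$ in \eqref{e:def_f'} is designed so that, for each non-degenerate $m$-simplex $\tau$ of $\Sigma$ with lift $\sigma = (V_0, \ldots, V_m)$ of $\tilde\sigma = f(\tau^{ss})$, the central subsimplex $C^\tau \subseteq \Esd_k(\tau)$ (whose vertices have the coordinates in \eqref{e:inner_simplex}) is labeled by $V_0, \ldots, V_m$ in order; that is, $f'$ maps $C^\tau$ onto $\sigma$ via the identity labeling. Letting $\phi_\tau \colon |\tau| \to |C^\tau|$ denote the affine shrinkage sending the $i$-th vertex of $|\tau|$ to the $i$-th vertex of $|C^\tau|$, a direct computation in barycentric coordinates yields the identity $|p| \circ |f'| \circ \phi_\tau = |f| |_{|\tau|}$ on all of $|\tau|$. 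In the case where $\tilde\sigma$ is a degeneracy of the basepoint, the whole $\Esd_k(\tau)$ is labeled by $R$, so both maps are constant at $*$ and no work is needed.

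Since $|\tau|$ is convex and $|C^\tau| \subseteq |\tau|$, the straight-line homotopy $\phi_t^\tau := (1-t)\phi_\tau + t\cdot \mathrm{id}$ is a well-defined homotopy from $\phi_\tau$ to $\mathrm{id}_{|\tau|}$ inside $|\tau|$. Composing with $|p|\circ |f'|$ yields a per-simplex homotopy $H_t^\tau := |p|\circ |f'| \circ \phi_t^\tau$ from $|f||_{|\tau|}$ (at $t=0$) to $|p|\circ|f'||_{|\tau|}$ (at $t=1$). I would then assemble these into a global homotopy by induction on the skeleta of $\Sigma$: assuming $H$ has been constructed on $|\Sigma^{(m-1)}| \times I$, the data $H|_{\partial |\tau| \times I}$ together with $|f|$ on $|\tau| \times \{0\}$ and $|p|\circ|f'|$ on $|\tau| \times \{1\}$ defines a continuous map $\partial(|\tau| \times I) \to |{X}|$; the per-simplex homotopy $H^\tau$ provides a candidate extension over $|\tau| \times I$, and reconciling it with the inductive $H|_{\partial|\tau| \times I}$ via the homotopy extension property for the CW pair $(|\tau| \times I, \partial|\tau|\times I)$ yields the desired extension.

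The main obstacle is precisely this reconciliation step: the per-simplex homotopy $H^\tau$ restricted to $\partial|\tau|\times I$ does not coincide with the inductive $H$ on $\partial|\tau|\times I$, because the central subsimplex $C^{\tau'}$ of a face $\tau' \subset \tau$ is not the face of $C^\tau$ in $\Esd_k(\tau)$. Reconciling them requires strengthening the inductive hypothesis so that the two families of boundary-restricted homotopies are themselves homotopic rel $\partial|\tau| \times \{0,1\}$. This is a standard but somewhat delicate piece of bookkeeping, ultimately reducing to the fact that any two extensions over $|\tau| \times I$ of the same endpoint data are homotopic rel $|\tau| \times \{0,1\}$ inside $|{X}|$ because the per-simplex homotopy exists, making the HEP extension automatic.
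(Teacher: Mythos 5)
Your per-simplex construction is exactly right and matches the paper: the central simplex of $\Esd_k(\tau)$ is labeled $V_0,\dots,V_m$ in order, so $|p|\circ|f'|\circ\phi_\tau=|f|$ on $|\tau|$, and the straight-line shrinkage composed with $|p|\circ|f'|$ gives a homotopy on each simplex. The gap is in the assembly. Your final justification --- that ``any two extensions over $|\tau|\times I$ of the same endpoint data are homotopic rel $|\tau|\times\{0,1\}$ because the per-simplex homotopy exists'' --- is false in general: two homotopies between the same pair of maps on an $m$-cell differ by a class that is measured by $\pi_{m+1}$ of (a mapping space into) $X$, and $X$ here has arbitrary higher homotopy groups. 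So the strengthened inductive hypothesis you correctly identify as necessary (that the inductively built $H|_{\partial|\tau|\times I}$ is homotopic rel endpoints to $H^\tau|_{\partial|\tau|\times I}$) is precisely the hard part, and the argument you give for it does not work. Without it, the boundary data on $\partial(|\tau|\times I)\cong S^m$ need not be null-homotopic and the HEP step cannot be carried out.

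The paper avoids obstruction theory entirely. It first pads $\Sigma$ so that all maximal simplices have dimension $d$, then defines the global homotopy only through maximal simplices, $(x,t)\mapsto |pf'|\,H_\tau(x,t)$ for any maximal $\tau\ni x$, and proves \emph{pointwise} agreement of these composed homotopies on a shared face $\delta$ of two maximal simplices $\tau,\tau'$. The reason this works --- and the idea your proposal is missing --- is that for $y\in|\delta|$ the shrinkage track $H_\tau(y,t)$ keeps all barycentric coordinates at positions outside $\delta$ bounded by $l/k$, so they are never dominant; consequently the $\argmax$ labels of the simplex of $\Esd_k(\tau)$ containing $H_\tau(y,t)$ all lie in $\delta$, and together with the corresponding statement for $\dist_{ET}$ the image under $|pf'|$ depends only on the data of $\delta$. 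The ``mirror map'' $m\colon|\tau|\to|\tau'|$ then shows $|pf'|H_\tau(y,t)=|pf'|H_{\tau'}(y,t)$, so the simplexwise homotopies literally glue. The fact you worried about --- that the central simplex of $\Esd_k(\tau')$ for a proper face $\tau'\subset\tau$ is not a face of the central simplex of $\Esd_k(\tau)$ --- is true but irrelevant in this scheme, since non-maximal faces never get their own homotopy.
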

\begin{proof}
%
First we reduce the general case to the case when all maximal simplices in $\Sigma$ (wrt. inclusion) have the same dimension $d$.
If this were not the case, we could enrich any lower-dimensional maximal simplex $\tau=\{x_0,\ldots, x_j\}\in\Sigma$ by new vertices
$y_{j+1}^\tau,\ldots, y_d^\tau$ and produce a~maximal $d$-simplex 
$$\tilde{\tau}=\{x_0,\ldots, x_j, y_{j+1}^\tau,\ldots, y_d^\tau\}.$$ 
Thus we produce a~simplicial complex $\tilde{\Sigma}\supseteq\Sigma$ with the required property. 
Whenever $f(\tau)$ is mapped to $\tilde{\sigma}$ where $\sigma=(V_0,\ldots,V_j)$, we define $f(\tilde{\tau})$ to be 
$s_j^{d-j}\tilde{\sigma}$, a~degenerate simplex with lift $(V_0,\ldots,V_j,V_j,\ldots, V_j)$. The map $f': \tilde{\Sigma}\to X^{sc}$
is constructed from $f: \tilde{\Sigma}\to X$ as above and if we prove that $|f|$ is homotopic to $|pf'|$ as maps $|\tilde{\Sigma}|\to |X|$, 
it immediately follows that their restrictions are homotopic as maps $|\Sigma|\to |X|$ as well.

Further, assume that all maximal simplices have dimension $d$.
Let $\tau\in\Sigma$ be a~$d$-dimensional simplex and let $\tau^{int}$ be the simplex in $\Esd_k(\tau)$ 
spanned by the vertices $$(l+1,l,\ldots,l), \ldots, (l,\ldots,l,l+1),$$ 
that is, the simplex in the interior of $\tau$ that is mapped by $pf'$ to $\tilde{\sigma}$.
Let $H_\tau(\cdot, 1): |\tau|\to|\tau|$ be a~linear map that takes $|\tau|$ linearly to $|\tau^{int}|$ via mapping the $i$'th vertex
to $(l,\ldots, l+1,1\ldots, l)$ where the $l+1$ is on position $i$.
Further, let $H_{\tau}$ be a~linear homotopy $|\tau|\times[0,1]\to|\tau|$ between the identity $H_\tau(\cdot, 0)=\mathrm{id}$ and $H_\tau(\cdot, 1)$.
The composition $|p f'| H_{\tau}$ then gives a~homotopy $|\tau|\times[0,1]\to |X|$ between the restrictions 
$(|pf'|)|_{|\tau|}$ and $(|f|)|_{|\tau|}$.
For a~general $x\in |\Sigma|$, there exists a~maximal $d$-simplex $|\tau|$ such that $x\in |\tau|$ and we define a~homotopy
$$
(x,t)\,\mapsto |p f'| H_\tau(x,t).
$$
It remains to show that this map is independent on the choice of $\tau$. 

Let as denote the (ordered) vertices of $\tau$ by $\{v_0,v_1,\ldots,v_d\}$ and let $\delta\subseteq\tau$ be one of its faces: further,
let $w_i$ be the vertex of $\tau^{int}$ with barycentric coordinates $(l,\ldots,l,l+1,l,\ldots,l)/k$ in $|\tau|$ such that the $l+1$ is in position $i$.
The homotopy $H_{\tau}$ sends points in $|\delta|$ onto the span of points $w_i$ for which $v_i\in\delta$. 
For $y\in|\delta|$, the $j$-th barycentric coordinate of $H_{\tau}(y,t)$ is equal to $t\,(l/k)$ 
for each $j\notin \delta$.
In particular, the $j$-th coordinate of $H_{\tau}(y,t)$ is between $0$ and $l/k$ for $j\notin\delta$, and hence it is not the ``dominant'' coordinate.
It follows that each $z:=H_{\tau}(x,t)$
is contained in the interior of a~unique simplex $\Delta$ of $\Esd_k(\tau)$ such that 
$v_{\argmax x}\in\delta$ for all vertices $x$ of $\Delta$.

Let $i_0< i_1\ldots < i_k$ be the indices such that $v_{i_j}\in\delta$ and $j_1< \ldots < j_{d-k}$ be the remaining indices.
Let $\tau'=(v_0',\ldots,v_d')$ be another $d$-simplex containing $\delta$ as a~face. 
Assume, for simplicity, that the vertices of $\tau'$ are ordered so that vertices of $\delta$ have orders $i_0,\ldots,i_k$---such as 
it is in $\tau$.
Let $\sigma,\sigma'$ be the lift of $f(\tau)$, $f(\tau')$ respectively, and $V_i$, $V_i'$ the $i$-th vertex of $\sigma$, $\sigma'$ respectively.

We define a~``mirror'' map $m: |\tau|\to|\tau'|$, which to a~point with barycentric coordinates $(x_0,\ldots,x_d)$ 
with respect to $\tau$ assigns a~point in $|\tau'|$ with the same barycentric coordinates with respect to $\tau'$. 
Clearly, $H_{\tau'}(y,t)=m(H_{\tau}(y,t))$ for $y\in|\tau|$ and whenever $z$ is in the interior of a~simplex $\Delta\in\Esd_k(\tau)$, 
then $m(z)$ is in the interior of $m(\Delta)$, where vertices of $\Delta$ and $m(\Delta)$ have the same barycentric coordinates 
with respect to $\tau$ and $\tau'$, respectively. 
If, moreover, $\Delta$ is such that each of its vertices $r$ have coordinates $\leq l/k$ on positions $j_1,\ldots, j_{d-k}$, 
then $V_{\argmax r}=V_{\argmax m(r)}'$.


To summarize these properties, $H_{\tau}(y,t)$ and $H_{\tau'}(y,t)$ satisfy that\footnote{
In general, vertices of $\delta$ may have different order in $\tau$ and $\tau'$ and the assumption on compatible ordering was chosen
only to increase readability. If $i_0'< \ldots <i_{k}'$ are such that $v_{i_j'}'=v_{i_j}$ (orders of $\delta$-vertices wrt. $\tau'$) 
and $j_1'<\ldots <j_{d-k}'$ are positions of the remaining vertices in $\tau'$, then $m$ is defined so that it maps $x\in |\tau|$
with $\tau$-coordinates $(x_0,\ldots, x_d)$ into $x'\in |\tau'|$ with coordinates $x_{i_j'}'=x_{i_j}$ and $x_{j_k}'=x_{j_k}$.
}
\begin{itemize}
\item they have the same coordinates wrt. $\tau$, $\tau'$, respectively, 
\item they are in the interior of simplices $\Delta\in\Esd_k(\tau)$, $\Delta'\in\Esd_k(\tau')$ 
whose vertices have the same coordinates wrt. $\tau$, $\tau'$, respectively, 
\item the $\argmax$ labeling induces the same labeling of vertices of $\Delta$, $\Delta'$ by vertices of $\delta$, respectively.
\end{itemize}
\begin{figure}
\begin{center}
\includegraphics{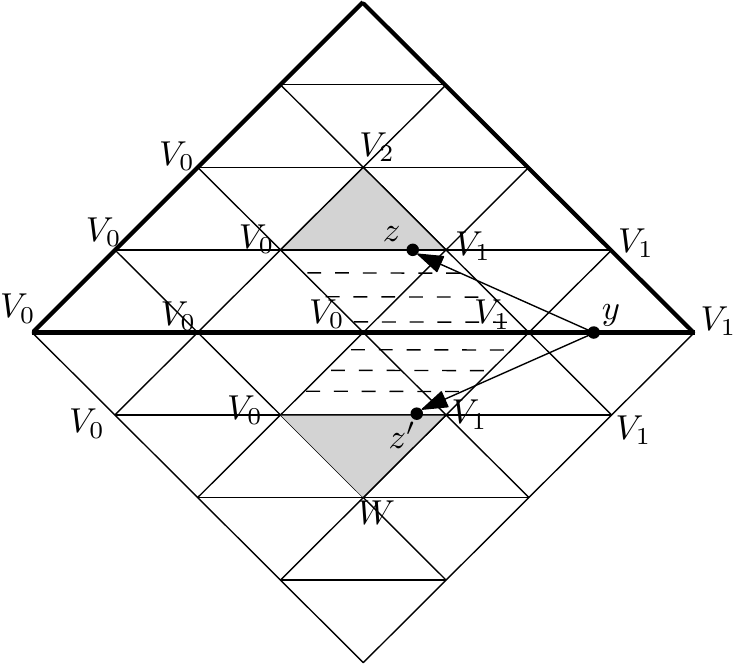}
\caption{The homotopy $H_{\tau}$ takes $y$ linearly into $z$ and $H_{\tau'}$ takes $y$ into $z'$.
Due to the symmetry represented by the horizontal line, $|pf'|$ maps $H_{\tau}(y,t)$ into the same point of $X$ as 
$|pf'| H_{\tau'}(y,t)$.}
\label{fig:homot}
\end{center}
\end{figure}
The map $pf'$ takes each $m$-simplex $\Delta$ in $\Esd_k(\tau)$ with vertices $t_u$ labeled by $V_{\argmax t_u}$
onto $p(V_{\argmax t_0},\ldots, V_{\argmax t_m})$ and it follows from the above properties that $m(\Delta)$ is mapped
to the same simplex.
We conclude that $|pf'| H_{\tau}(y,t)=|pf'|H_{\tau'}(y,t)$ for each $y\in |\delta|$ and $t\in [0,1]$.
\end{proof}

\heading{Acknowledgements.}
We would like to thank Marek Kr\v{c}\'al, Luk\'a\v{s} Vok\v{r}\'{\i}nek and Sergey Avvakumov for helpful conversations and comments.


\end{document}